\newtheorem{lemma}{Lemma}
\newtheorem{theorem}[lemma]{Theorem}
\newtheorem{theorem*}[lemma]{Theorem*}
\newtheorem{corollary}[lemma]{Corollary}
\newtheorem{observation}[lemma]{Observation}
\title{\LARGE Algorithms for Halfplane Coverage and Related Problems\thanks{A preliminary version of this paper will appear in {\em Proceedings of the 40th International Symposium on Computational Geometry (SoCG 2024)}.}}
\author{Haitao Wang\footnote{University of Utah, Salt Lake City, UT 84112, USA. \texttt{haitao.wang@utah.edu}.} \and Jie Xue\footnote{New York University Shanghai, China. \texttt{jiexue@nyu.edu}.}}
\date{}
\def\bbR{\mathbb{R}}
\def\bbS{\mathbb{S}}
\def\calU{\mathcal{U}}
\def\calD{\mathcal{D}}
\def\calK{\mathcal{K}}
\def\calP{\mathcal{P}}
\def\hatS{\widehat{S}}
\begin{document}

\maketitle

\bibliographystyle{plainurl}

%TODO mandatory: add short abstract of the document
\begin{abstract}
Given in the plane a set of points and a set of halfplanes, we consider the problem of computing a smallest subset of halfplanes whose union covers all points. In this paper, we present an $O(n^{4/3}\log^{5/3}n\log^{O(1)}\log n)$-time algorithm for the problem, where $n$ is the total number of all points and halfplanes. 
This improves the previously best algorithm of $n^{10/3}2^{O(\log^*n)}$ time by roughly a quadratic factor. For the special case where all halfplanes are lower ones, our algorithm runs in $O(n\log n)$ time, which improves the previously best algorithm of $n^{4/3}2^{O(\log^*n)}$ time and matches an $\Omega(n\log n)$ lower bound. Further, our techniques can be extended to solve a star-shaped polygon coverage problem in $O(n\log n)$ time, which
in turn leads to an $O(n\log n)$-time algorithm for computing an instance-optimal $\epsilon$-kernel of a set of $n$ points in the plane. Agarwal and Har-Peled presented an $O(nk\log n)$-time algorithm for this problem in SoCG 2023, where $k$ is the size of the $\epsilon$-kernel; they also raised an open question whether the problem can be solved in $O(n\log n)$ time. Our result thus answers the open question affirmatively. 
\end{abstract}

\section{Introduction}
\label{sec:intro}

Let $P$ be a set of points and $H$ a set of halfplanes in the plane. 
In the {\em halfplane coverage} problem, the goal is to find a smallest subset of halfplanes in $H$ whose union covers all points in $P$. 
The problem was studied in the literature before.
Har-Peled and and Lee~\cite{ref:Har-PeledWe12} first proposed an $O(n^5)$-time algorithm for the problem, where $n=|P|+|H|$. 
Later, Pedersen and Wang~\cite{ref:PedersenAl22} provided an improved algorithm of $O(n^4\log n)$ time.
Very recently, Liu and Wang~\cite{ref:LiuOn23} solved the problem in $n^{10/3}2^{O(\log^*n)}$ time. 
In this paper, we present a new algorithm for the halfplane coverage problem of $O(n^{4/3} \cdot \log^{5/3}n\log^{O(1)}\log n)$ time, which improves the best known bound~\cite{ref:LiuOn23} by roughly a quadratic factor.   

In the {\em lower-only} version of the problem where all halfplanes in $H$ are lower-open (i.e., of the form $y \leq ax+b$), our new algorithm runs in $O(n\log n)$ time.
The lower-only halfplane coverage problem has also been studied before.
Chan and Grant's techniques~\cite{ref:ChanEx14} solved the problem in $O(n^4\log n)$ time.
Pedersen and Wang~\cite{ref:PedersenAl22} derived an $O(n^2\log n)$-time algorithm.
The recent work of Liu and Wang~\cite{ref:LiuOn23} proposed an improved solution of $n^{4/3}2^{O(\log^*n)}$ time, which was the best known result prior to this paper. We also show that $\Omega(n\log n)$ is a lower bound for the problem under the algebraic decision tree model; therefore, our algorithm is optimal. 

Interestingly, our new techniques for the halfplane coverage problem can also be applied to efficiently compute \textit{instance-optimal} kernels in the plane.
For a set $Q$ of $n$ points in the plane and a parameter $\epsilon\in (0,1)$, a subset $Q'\subseteq Q$ is an $\epsilon$-kernel of $Q$ if the projection of the convex hull of $Q'$ approximates that of $Q$ within $(1-\epsilon)$ factor in every direction.
Given $Q$ and $\epsilon$, the problem is to compute a smallest $\epsilon$-kernel for $Q$, called an {\em instance-optimal} $\epsilon$-kernel.
Very recently, Agarwal and Har-Peled~\cite{ref:AgarwalCo23} gave an $O(nk\log n)$-time algorithm for the problem in SoCG 2023, where $k$ is the size of the instance-optimal $\epsilon$-kernel.
They raised as an open question whether the problem can be solved in $O(n\log n)$ time.
We provide an $O(n\log n)$-time algorithm based on our techniques for halfplane coverage, and hence answer the open question affirmatively.

To see how the problem of computing instance-optimal $\epsilon$-kernels is related to halfplane coverage, we recall the approach of Agarwal and Har-Peled~\cite{ref:AgarwalCo23}.
In this approach, the task of finding instance-optimal $\epsilon$-kernel is first reduced in $O(n\log n)$ time to the following {\em star-shaped polygon coverage} problem.
%In order to find an instance-optimal $\epsilon$-kernel, Agarwal and Har-Peled~\cite{ref:AgarwalCo23} first reduced the problem in $O(n\log n)$ time to the following {\em star-shaped polygon coverage} problem.
Given a star-shaped polygon $R$ of $n$ vertices with respect to a center point $o$ (i.e., the segment $\overline{op}\subseteq R$ for every point $p\in R$) and a set of $n$ halfplanes that do not contain $o$, the problem is to compute a smallest subset of halfplanes whose union covers the boundary of $R$ (after the problem reduction, the size of a smallest subset is exactly equal to $k$, the size of an instance-optimal $\epsilon$-kernel in the original problem).
Agarwal and Har-Peled~\cite{ref:AgarwalCo23} then solved the star-shaped polygon coverage problem in $O(nk\log n)$ time.
By extending our techniques for the halfplane coverage problem, we present an improved algorithm of $O(n\log n)$ time for the star-shaped polygon coverage problem, which in turn solves the problem of computing instance-optimal $\epsilon$-kernels in $O(n\log n)$ time. 
In addition, we prove that $\Omega(n\log n)$ is a lower bound for the star-shaped polygon coverage problem under the algebraic decision tree model, and thus our algorithm is optimal.

Besides computing kernels, our techniques may find other applications as well.
One remarkable example is the computation of Hausdorff approximation~\cite{ref:AgarwalCo23,ref:Har-PeledOn23}.

\begin{itemize}
    \item Agarwal and Har-Peled~\cite{ref:AgarwalCo23} proposed an $O(nk\log n)$ time algorithm for computing an optimal Hausdorff approximation for a set of $n$ points in the plane, where $k$ is the size of the optimal solution.
    In this algorithm, they followed the similar techniques to the above instance-optimal $\epsilon$-kernel problem.
    Specifically, they reduced the Hausdorff approximation problem to a problem of covering the boundary of a star-shaped region $Q$ of curved boundary in the plane.
    Using our new techniques, the optimal Hausdorff approximation problem can now also be solved in $O(n\log n)$ time.

    \smallskip    
    \item Har-Peled and Raichel~\cite{ref:Har-PeledOn23} considered a ``dual'' Hausdorff approximation problem for a set of $n$ points in the plane and a parameter $k \in \{1,\dots,n\}$.
    They gave a randomized algorithm whose runtime is bounded by $O(\sqrt{k}(n\log n)^{3/2}+kn\log^2 n)$ with high probability. Their algorithm utilizes the above $O(nk\log n)$-time algorithm in~\cite{ref:AgarwalCo23} for the optimal Hausdorff approximation problem as a black box.
    Using our new $O(n\log n)$-time algorithm instead and follow the same analysis as in \cite{ref:Har-PeledOn23}, the dual Hausdorff approximation problem can now be solved in $O((n\log n)^{3/2})$ time with high probability.
\end{itemize}

%\paragraph{Other related work.}
\subsection{Other related work}
The halfplane coverage problem belongs to the domain of geometric set cover, which in turn is a special case of the (general) set cover problem.
For most types of geometric objects, the geometric set cover problem is NP-hard.
One example that has received much attention in the literature is the disk coverage problem, i.e., given a set of disks and a set of points in the plane, the problem is to find a smallest subset of disks that together cover all points.
The problem is NP-hard even if all disks have the same radius~\cite{ref:FederOp88}, while many approximation algorithms have been proposed~\cite{ref:LiA15,ref:MustafaSe14,ref:MustafaPt09}.
On the other hand, the problem becomes polynomial-time solvable if all disk centers lie on the same line or if the disk centers and the points are separated by a line~\cite{ref:AmbuhlCo06,ref:CarmiCo07,ref:ClaudeAn10,ref:LiuOn23,ref:PedersenAl22}.
Another well-studied case of geometric set cover is the rectangle coverage problem where the given geometric objects are (axis-parallel) rectangles.
This problem is also NP-hard, even when the rectangles are slabs, unit squares, or with certain constraints~\cite{ref:ChanEx14,ref:FowlerOp81}.
Agarwal and Pan~\cite{ref:AgarwalNe20} proposed an $O(\log\log k)$-approximation algorithm for rectangle coverage with near-linear running time, where $k$ is the optimum.
Finally, geometric set cover with other objective functions have also been studied, e.g., minimizing the membership or the ply of the solution~\cite{ref:BandyapadhyayMi23,ref:BiedlMi20,ref:DurocherMi23,ref:MitchellMi21}. 

For the $\epsilon$-kernel problem, Agarwal, Har-Peled, and Varadarajan proved in their seminal paper \cite{ref:AgarwalAp04} that an $\epsilon$-kernel of size $O(\epsilon^{-(d-1)/2})$ exists for a set of $n$ points in the $d$-dimensional space $\bbR^d$. However, an $\epsilon$-kernel could be much smaller in practice~\cite{ref:YuPr08}. 
Therefore, it is well-motivated to study algorithms for computing instance-optimal $\epsilon$-kernels.
The problem unfortunately becomes NP-hard in $\bbR^3$~\cite{ref:AgarwalCo23}.
Agarwal and Har-Peled~\cite{ref:AgarwalCo23} thus studied the exact algorithms for the 2D case. 

\subsection{Our approach}
%\paragraph{Our approach.}
For the lower-only halfplane coverage problem, we show that the problem can be reduced to an {\em interval coverage} problem: Given a set $P'$ of points and a set $S$ of intervals on the $x$-axis, compute a smallest subset of intervals whose union covers all points. The problem can be easily solved in $O(|P'|+|S|)$ time by a greedy algorithm after sorting. However, the issue with this approach is that while $|P'|=n$, we may have $|S|=\Omega(n^2)$ after the problem reduction. More specifically, points of $P'$ are defined by points of $P$ and intervals of $S$ are defined by halfplances of $H$.
While each point of $P$ defines a single point in $P'$ (thus $|P'|=n$), each halfplane could define $\Theta(n)$ intervals of $S$ (thus $|S|=\Omega(n^2)$). As such, the total time of the algorithm could be $\Omega(n^2)$. To improve the algorithm, our crucial observation is that it suffices to use one particular interval of $S$ for each halfplane. Consequently, we only need to use a subset $\hatS\subseteq S$ of size at most $n$ such that a smallest subset of $\hatS$ for covering $P'$ is also an optimal solution for covering $P'$ with $S$. In this way, the lower-only halfplane coverage problem is solved in $O(n\log n)$ time. 

For the star-shaped polygon coverage problem, we extend the above idea. We reduce the problem to a {\em circle coverage} problem: Given a set $S$ of arcs on a circle $C$, compute a smallest subset of arcs whose union covers the entire circle $C$. The problem can be solved in $O(|S|\log |S|)$ time~\cite{ref:AgarwalCo23,ref:LeeOn84}. As above, the issue is that $|S|$ could be $\Omega(n^2)$ (more specifically, arcs of $S$ are defined by halfplanes of $H$ and each halfplane may define $\Theta(n)$ arcs). Indeed, this is the main obstacle that prevents Agarwal and Har-Peled~\cite{ref:AgarwalCo23} from obtaining an $O(n\log n)$ time algorithm. Our new and crucial observation is that, as above, it is sufficient to use only one particular arc for each halfplane. Consequently, we only need to use a subset $\hatS\subseteq S$ of size at most $n$ such that a smallest subset of arcs of $\hatS$ covering $C$ is also an optimal solution for covering $C$ with $S$. In this way, the star-shaped polygon coverage problem is solved in $O(n\log n)$ time. 

To solve the general halfplane coverage problem, we consider two cases depending on whether $H$ has three halfplanes whose union is $\bbR^2$. In the case where $H$ does not have such three halfplanes, by Helly's theorem, the common intersection of the complements of all halfplanes of $H$ is not empty; let $o$ be a point in the common intersection. Then, no halfplane of $H$ contains $o$. With the help of $o$, we can compute in $O(n\log n)$ time a smallest subset of halfplanes to cover all points, by extending the idea for solving the lower-only halfplane coverage problem. 
%using a technique similar to that for solving the star-shaped polygon coverage problem. 
In the case where $H$ has three halfplanes whose union is $\bbR^2$, the optimal solution size $\tau^*$ is at most three. If $\tau^*=3$, then it suffices to find three halfplanes from $H$ whose union is $\bbR^2$, which can be done in linear time~\cite{ref:Har-PeledWe12}. If $\tau^*=1$, then this case can be easily solved in $O(n\log n)$ time using halfplane range emptiness queries. For the remaining case $\tau^*=2$, we wish to find two halfplanes from $H$ whose union covers all points of $P$. Although this looks like a special case, it turns out this is the bottleneck case of the entire problem, which is surprising (and perhaps also interesting). Our algorithm for this case runs in $O(n^{4/3}\log^{5/3}n\log^{O(1)}\log n)$ time, which takes significantly more time than all other parts of the overall algorithm (all other parts together take $O(n\log n)$ time). Although we do not have a proof, we feel that $\Omega(n^{4/3})$ might be a lower bound for this subproblem (and thus the entire problem), at least under a somewhat restricted computational model~\cite{ref:EricksonNe96}.

\paragraph{Remark.}
After the submission of the conference version of this paper, Liu and Wang~\cite{ref:LiuOn24} gave a new implementation of their original algorithm in \cite{ref:LiuOn23} for the lower-only halfplane coverage problem; the new implementation runs in $O(n\log n)$ time. Although their algorithm also reduces the problem to an interval coverage problem, it is fundamentally different from ours. For instance, their algorithm first identifies some ``prunable'' halfplanes that are useless to the optimal solution and then proceed to reduce to the interval coverage problem using the remaining halfplanes. Our algorithm, in contrast, does not need such a pruning step. In addition, it seems difficult to extend their algorithm to the circle coverage or polygon coverage problem because it relies on the left-to-right order of all points. As such, our techniques appear to be more powerful. On the other hand, their approach can be used to solve a more general line-separable unit-disk coverage problem (the lower-only halfplane coverage is just a special case of the problem). Also, using their $O(n\log n)$ time new implementation for the lower-only halfplane coverage, their algorithm for the general halfplane coverage now runs in $O(n^3\log n)$ time. 

\paragraph{Outline.} The rest of the paper is organized as follows.
In Section~\ref{sec:lower}, we present our algorithm for the lower-only halfplane coverage problem. By extending the techniques, we solve the star-shaped polygon coverage problem in Section~\ref{sec:star}. As mentioned above, with the $O(n\log n)$-time problem reduction in~\cite{ref:AgarwalCo23}, this also solves the 2D instance-optimal $\epsilon$-kernel problem in $O(n\log n)$ time. The general halfplane coverage problem is discussed in Section~\ref{sec:general} and our approach uses an algorithm similar to the lower-only case algorithm as a subroutine. 
%Section~\ref{sec:Conclusion} concludes the paper.

\section{Lower-only halfplane coverage}
\label{sec:lower}

In this section, we present an $O(n\log n)$-time algorithm for the lower-only halfplane coverage problem. Let $P$ be a set of points and $H$ a set of lower halfplanes in $\bbR^2$. We wish to compute a smallest subset of halfplanes whose union covers $P$ (i.e., covers all points of $P$). To simplify the notation, let $n=|P|=|H|$. 

\subsection{Preliminaries}

We call a subset of $H$ a {\em feasible solution} if the halfplanes of the subset together cover $P$; an {\em optimal solution} refers to a smallest feasible solution. 
We assume that the union of halfplanes of $H$ covers $P$ since otherwise there would be no feasible solution. Indeed, this can be easily determined $O(n\log n)$ time, e.g., by first computing the upper envelope of the bounding lines of all halfplanes of $H$ and then check whether every point of $P$ is below the upper envelope.

We make a general position assumption that no two points of $P$ have the same $x$-coordinate. We also assume that no two halfplanes of $H$ have their bounding lines parallel (since otherwise the one with lower bounding line is redundant and can be removed from $H$ because it is completely contained in the other halfplane).

We sort all points of $P$ from left to right and let $p_1,p_2,\ldots,p_n$ be the sorted list. For any two indices $i,j$ with $1\leq i\leq j\leq n$, let $P[i,j]$ denote the subsequence of points $\{p_i,p_{i+1},\ldots,p_j\}$.

For any halfplane $h\in H$, let $\ell_h$ denote the bounding line of $h$. 

Consider a halfplane $h\in H$. 
%Following the definitions in~\cite{ref:PedersenAl22} (which are for disks), 
A subsequence $P[i,j]$ of $P$, with $1\leq i\leq j\leq n$, is called a {\em maximal subsequence covered by $h$} if all points of $P[i,j]$ are in $h$ but neither $p_{i-1}$ nor $p_{j+1}$ is in $h$ (to make the definition rigorous, we could add two dummy points $p_0$ and $p_{n+1}$ that are not covered by any halfplane of $H$). Let $\Gamma_h$ denote the set of all maximal subsequences of $P$ covered by $h$. It is not difficult to see that subsquences of $\Gamma_h$ are pairwise disjoint. 

\subsection{Reducing to interval coverage}
\label{sub:reduction}

We reduce the problem to an instance of the interval coverage problem on a set $P'$ of points and a set $S$ of line segments (or intervals) on the $x$-axis. For each point $p_i\in P$, let $p_i'$ be the (vertical) projection of $p_i$ onto the $x$-axis (i.e., $p_i'$ has the same $x$-coordinate as $p_i$). Define $P'=\{p_i'\ |\ 1\leq i\leq n\}$. As such, $P'$ has exactly $n$ points. For any $1\leq i\leq j\leq n$, let $P'[i,j]=\{p_i',p_{i+1}',\ldots,p_j'\}$. 

We now define the set $S$. For each halfplane $h\in H$, we define a set $S_h$ of segments as follows. 
For each subsequence $P[i,j]\in \Gamma_h$, we create a segment on the $x$-axis, denoted by $s[i,j]$, whose left and right endpoints are $p_i'$ and $p_j'$, respectively; we add $s[i,j]$ to $S_h$. As such, $P'\cap s[i,j]=P'[i,j]$. We say that $s[i,j]$ is defined by $h$. Define $S=\bigcup_{h\in H}S_h$.

Consider the following interval coverage problem on $P'$ and $S$: Find a smallest subset of segments of $S$ whose union covers $P'$. This problem can be solved by a simple greedy algorithm in $O(|P'|+|S|)$ time after sorting the left endpoints of all segments of $S$ along with all points of $P'$. 
Let $S^*\subseteq S$ be an optimal solution to the above interval coverage problem. Based on $S^*$, we create a subset $H^*\subseteq H$ as follows. For each segment $s\in S^*$, if $s$ is defined by a halfplane $h\in H$, then we add $h$ to $H^*$ (it is possible that the same segment $s$ is defined by multiple halfplanes, in which case we add an arbitrary such halfplane to $H^*$). In what follows, we show that $H^*$ is an optimal solution to our original halfplane coverage problem for $P$ and $H$. To this end, we first prove the following lemma. 

\begin{lemma}\label{lem:10}
\begin{enumerate}
    \item The union of all halfplanes of $H^*$ covers $P$.   
    \item $P'$ can be covered by $k$ segments of $S$ if and only if $P$ can be covered by $k$ halfplanes of $H$. 
\end{enumerate}
\end{lemma}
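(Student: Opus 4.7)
The plan is to treat the two parts separately, handling Part 1 and the forward direction of Part 2 by the same elementary bookkeeping, and reserving the real work for the converse direction of Part 2. For Part 1, I would simply trace through the construction of $H^*$: for any $p_i \in P$, the projection $p_i'$ is covered by some $s[a,b] \in S^*$, and by definition $s[a,b]$ was created from a halfplane $h$ with $P[a,b] \in \Gamma_h$, so $P[a,b] \subseteq h$ and in particular $p_i \in h$. The halfplane $h$ was added to $H^*$ by construction, hence $H^*$ covers $P$. The same observation handles the forward direction of Part 2: any $k$ segments covering $P'$ come from at most $k$ distinct halfplanes of $H$, which together cover $P$.

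The converse direction of Part 2 is where the real content lies. The apparent difficulty is that a single halfplane $h$ may have $|\Gamma_h| > 1$, so one might fear that a cover of $P$ by $k$ halfplanes of $H$ could force the use of strictly more than $k$ segments from $S$. The key structural fact I would exploit is that the bounding objects are lines, and any two distinct lines meet in at most one point. Given a cover $h_1,\dots,h_k$ of $P$, I would form the upper envelope $U$ of $\ell_{h_1},\dots,\ell_{h_k}$. Each line then contributes at most one contiguous arc to $U$, so $U$ partitions the $x$-axis into at most $k$ consecutive intervals $I_1,\dots,I_k$, each labeled by a single halfplane $h_j$. Any $p \in P$ is below at least one $\ell_{h_i}$ by feasibility of the cover, hence below $U$, hence below the unique line realizing $U$ above $p_x$; so if the $x$-coordinate of $p$ lies in $I_j$, then $p \in h_j$. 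Thus $P \cap I_j$ is a contiguous block $P[\alpha_j, \beta_j]$ entirely contained in a unique maximal subsequence $P[a_j,b_j] \in \Gamma_{h_j}$, and the $k$ segments $s[a_1,b_1],\dots,s[a_k,b_k] \in S$ jointly cover $P'$.

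I expect the main obstacle to be recognizing and cleanly arguing this upper-envelope partition structure; once it is in place, both directions of Part 2 become transparent, and Part 1 is merely the same bookkeeping used in the forward direction. A minor care point is what to do when some $I_j$ contains no point of $P$, i.e., the corresponding halfplane is redundant in the cover; I would simply drop such halfplanes, reading ``$P$ can be covered by $k$ halfplanes'' as ``by at most $k$ halfplanes,'' which is the natural convention when comparing sizes of minimum covers.
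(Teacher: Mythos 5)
Your proposal is correct and follows essentially the same route as the paper. Part 1 and the forward direction of Part 2 are handled identically (trace the defining halfplanes). For the converse of Part 2, the paper also forms the upper envelope $\calU$ of the bounding lines of the $k$ covering halfplanes, uses that each bounding line contributes at most one edge to $\calU$, and picks for each envelope edge $e$ (with bounding halfplane $h$) the unique segment of $S_h$ whose $P'$-points include those vertically below $e$. Your phrasing in terms of the $x$-axis partition $I_1,\dots,I_k$ and the blocks $P[\alpha_j,\beta_j]\subseteq P[a_j,b_j]$ is the same envelope argument, arguably stated a bit more explicitly than the paper's ``$e'$ is contained in at most one segment $s[i,j]$ of $S_h$.''
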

\begin{proof}
For any point $p_i\in P$, since $p_i'$ must be covered by a segment $s[i,j]$ of $S^*$, by definition, $p_i$ must be covered by a halfplane $h$ that defines $s[i,j]$ and $h$ is in $H^*$. Therefore, the first lemma statement holds.  

We now prove the second lemma statement. 
Suppose $S$ has a subset $S'$ of $k$ segments whose union covers $P'$. Let $H'$ be the subset of defining halfplanes of all segments of $S'$. By an argument similar to the first lemma statement, the halfplanes of $H'$ together cover $P$. Since $|H'|\leq |S'|=k$, we obtain that $P$ can be covered by $k$ halfplanes of $H$. 

On the other hand, suppose $H$ has a subset $H'$ of $k$ segments whose union covers $P$. We create a subset $S'\subseteq S$ of at most $k$ halfplanes that together cover $P'$ as follows. Let $\calU$ be the upper envelope of the bounding lines of all halfplanes of $S'$. 
%or equivalently, $\calU$ is the boundary of the common intersection of all halfplances of $H'$. 
Since the union of the halfplanes of $H'$ covers $P$, every point of $P$ is below $\calU$. Note that the bounding line of each halfplane of $H'$ contains at most one edge of $\calU$. For each edge $e$ of $\calU$, let $h$ be the halfplane whose bounding line contains $e$. As $e$ is an edge of $\calU$, no point of $P$ is vertically above $e$. Therefore, the vertical projection $e'$ of $e$ onto the $x$-axis, which is a segment, must be contained in at most one segment $s[i,j]$ of $S_h$ and we add $s[i,j]$ to $S'$ (note that such a segment $s[i,j]\in S_h$ must exist if $P$ has a point vertically below $e$). Since the bounding line of each halfplane contains at most one edge of $\calU$, the size of $S'$ is at most $k$. We claim that the union of all segments of $S'$ must cover $P'$. Indeed, consider any point $p_i'\in P'$. $\calU$ must has an edge $e$ that is vertically above $p_i$. Therefore, the halfplane $h$ must define a segment $s[i,j]$ in $S_h$ that contains $e'$, where $h$ is the halfplane whose bounding line contains $e$. By definition, $S'$ must contain $s[i,j]$. This proves that the union of all segments of $S'$ must cover $P'$.

The second lemma statement thus follows. 
\end{proof}

The above lemma immediately leads to the following corollary. 

\begin{corollary}\label{coro:lower}
\begin{enumerate}
    \item The size of a smallest subset of $H$ for covering $P$ is equal to the size of a smallest subset of $S$ for covering $P'$.     
    \item No two segments of $S^*$ are defined by the same halfplane.
    \item $H^*$ is a smallest subset of $H$ for covering $P$. 
\end{enumerate}
\end{corollary}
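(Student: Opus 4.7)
The plan is to derive the three statements of the corollary as essentially immediate consequences of Lemma~\ref{lem:10}, in the order listed. For the first statement, I would instantiate the equivalence in Lemma~\ref{lem:10}(2) at $k$ equal to the minimum cover size on either side: since ``$P'$ can be covered by $k$ segments of $S$'' holds iff ``$P$ can be covered by $k$ halfplanes of $H$,'' the two minima must coincide.

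For the second statement, I would argue by contradiction. Suppose two distinct segments $s_1, s_2 \in S^*$ were both defined by a common halfplane $h$. The construction of $H^*$ is free to pick any defining halfplane for each segment, so in particular I can choose $h$ for both $s_1$ and $s_2$; the resulting set $H^*$ then has size at most $|S^*|-1$. By Lemma~\ref{lem:10}(1), $H^*$ still covers $P$, contradicting the equality established in the first statement, which identifies $|S^*|$ with the size of a smallest subset of $H$ covering $P$. Hence no two segments of $S^*$ can share a defining halfplane.

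For the third statement, I would combine the previous two. By statement 2, distinct segments of $S^*$ must yield distinct halfplanes in the construction of $H^*$, so $|H^*| = |S^*|$. By statement 1, $|S^*|$ equals the size of a smallest subset of $H$ covering $P$, and Lemma~\ref{lem:10}(1) already guarantees that $H^*$ does cover $P$. Thus $H^*$ is optimal.

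There is no substantial obstacle here, since Lemma~\ref{lem:10} has done all of the real work. The one subtlety worth emphasizing is in statement 2: the argument leverages the freedom to pick an arbitrary defining halfplane for each segment, and it is precisely this freedom that lets a duplicated defining halfplane collapse $H^*$ into a strictly smaller feasible subset and thereby produce the contradiction.
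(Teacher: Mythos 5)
Your proof is correct and follows essentially the same route as the paper: statement 1 from Lemma~\ref{lem:10}(2), statement 2 by contradiction using feasibility of $H^*$ and statement 1, and statement 3 by combining the bound $|H^*| \le |S^*|$ with statement 1. If anything, you are slightly more careful than the paper on statement 2 in noting that one must (and may) choose the shared halfplane $h$ for both segments when building $H^*$, since a segment can be defined by several halfplanes; the paper glosses over this by asserting $|H^*| < |S^*|$ directly "by the definition of $H^*$."
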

\begin{proof}
The first statement immediately follows from Lemma~\ref{lem:10}(2). 

For the second statement, assume to the contrary that $S^*$ has two segments $s$ and $s'$ that are defined by the same halfplace $h\in H$. Then, by the definition of $H^*$, $|H^*|<|S^*|$ holds. As $S^*$ is a smallest subset of $S$ for covering $P'$, and $H^*$ is a feasible solution for covering $P$ by Lemma~\ref{lem:10}(1), we obtain that the size of a smallest subset of $H$ for covering $P$ is smaller than the size of a smallest subset of $S$ for covering $P'$. But this contradicts the first corollary statement. 

For the third statement, since $|H^*|\leq |S^*|$, $S^*$ is a smallest subset of $S$ for covering $P'$, and $H^*$ is a feasible solution for covering $P$, by the first corollary statement, $|H^*|=|S^*|$ and $H^*$ must be a smallest subset of $H$ for covering $P$. 
\end{proof}

In light of Corollary~\ref{coro:lower}, the above gives an algorithm that computes an optimal solution to the halfplane coverage problem for $P$ and $H$. However, the algorithm is not efficient because the size of $S$ could be $\Omega(n^2)$. Indeed, it is not difficult to see that $|\Gamma_h|$ (and thus $|S_h|$) could be $\Theta(n)$ for each halfplane $h\in H$. Hence, $|S|$ could be $\Omega(n^2)$ in the worst case. In the following, we reduce the time to $O(n\log n)$ by showing that a smallest subset of $S$ for covering $P'$ can be computed in $O(n\log n)$ time by using only a small subset of $S$. 

\subsection{Improvement}

The main idea is to show that it suffices to use at most one segment from $S_h$ for each halfplane $h\in H$. More specifically, we show that for each halfplane $h\in H$, it is sufficient to define a segment, denoted by $s(h)$, for at most one subsequence of $\Gamma_h$, such that $\widehat{S}=\{s(h)\ |\ h\in H\}$, which is of size at most $n$ and is a subset of $S$, must contain a smallest subset of $S$ for covering $P'$. This implies that to compute a smallest subset of $S$ for covering $P'$, it suffices to compute a smallest subset of $\hatS$ to cover $P'$. The latter problem can be solved faster as $|\hatS|\leq n$. 

In the following, we first define the segment $s(h)$ for each halfplane $h\in H$. Then, we prove that $\hatS$ contains a smallest subset of $S$ for covering $P'$. Finally, we discuss how to compute $\hatS$ efficiently. 

\paragraph{Defining $\boldsymbol{s(h)}$ and $\boldsymbol{\hatS}$.}
For each halfplane $h\in H$, we define a segment $s(h)$ as follows. Let $\calU$ denote the upper envelope of the bounding lines of all halfplanes of $H$. Depending on whether the bounding line $\ell_h$ of $h$ contains an edge on $\calU$, there are two cases. 
\begin{enumerate}
    \item If $\ell_h$ contains an edge $e$ on $\calU$, then let $e'$ be the vertical projection of $e$ on the $x$-axis (see Fig.~\ref{fig:lower10}). Since $e$ is on $\calU$, no point of $P$ is vertically above $e$. Hence, $e'$ is contained in at most one segment $s[i,j]$ of $S_h$ (note that such a segment may not exist, e.g., if $e'$ is between $p_i'$ and $p_{i+1}'$ and neither point is in $h$). If such a segment $s[i,j]$ exists, then we define $s(h)=s[i,j]$; otherwise, $s(h)$ is not defined. 

    \item If $\ell_h$ does not contain any edge on $\calU$, then let $v_h$ be the unique vertex of $\calU$ that has a tangent line parallel to $\ell_h$ (see Fig.~\ref{fig:lower20}). Let $v_h'$ be the vertical projection of $v_h$ onto the $x$-axis. Clearly, $S_h$ has at most one segment $s[i,j]$ containing $v_h'$. If such a segment $s[i,j]$ exists, then we define $s(h)=s[i,j]$; otherwise, $s(h)$ is not defined. 
\end{enumerate}

\begin{figure}[t]
\begin{minipage}[t]{\textwidth}
\begin{center}
\includegraphics[height=1.4in]{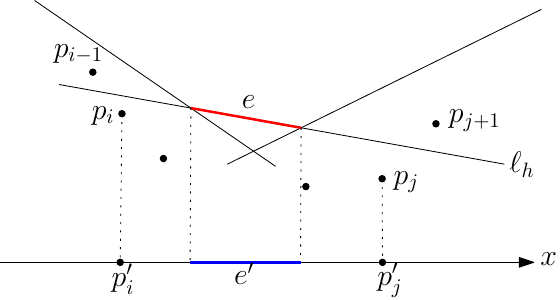}
\caption{\footnotesize Illustrating the definition of $s(h)=s[i,j]$ for the case where $\ell_h$ contains an edge $e$ of $\calU$.}
\label{fig:lower10}
\end{center}
\end{minipage}
\vspace{-0.1in}
\end{figure}

\begin{figure}[t]
\begin{minipage}[t]{\textwidth}
\begin{center}
\includegraphics[height=1.4in]{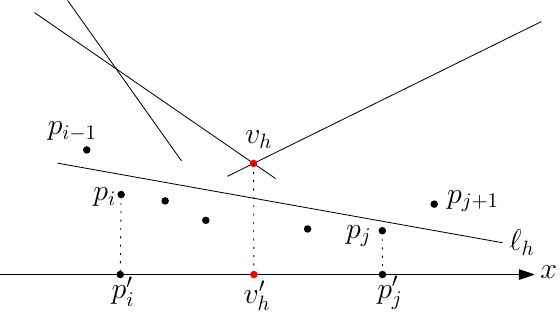}
\caption{\footnotesize Illustrating the definition of $s(h)=s[i,j]$ for the case where $\ell_h$ does not contain any edge of $\calU$.}
\label{fig:lower20}
\end{center}
\end{minipage}
\vspace{-0.1in}
\end{figure}

Define $\hatS=\{s(h)\ |\ h\in H\}$. 

The following implies that a smallest subset of segments of $\hatS$ whose union covers $P'$ is an optimal solution to the interval coverage problem for $P'$ and $S$. 

\begin{lemma}\label{lem:30}
For any segment $s\in S\setminus \hatS$, $\hatS$ has a segment $s'$ containing $s$.
\end{lemma}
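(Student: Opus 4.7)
The plan is to show that, given $s = s[i,j] \in S_h \setminus \hatS$, we can find some halfplane $h' \in H$ whose $s(h')$ is defined and contains $s$. I first observe that every element of $\hatS$ has the form $s(h') = s[i', j']$ where $[i', j']$ is a maximal subsequence of some $h' \in H$, so $s \subseteq s(h')$ is equivalent to $i' \leq i$ and $j' \geq j$, which in particular forces $h'$ to cover $p_i, \ldots, p_j$. The task therefore reduces to constructing $h' \in H$ satisfying both (a) $\ell_{h'}$ is above all of $p_i, \ldots, p_j$, so that the maximal subsequence $[i', j']$ of $h'$ containing $[i,j]$ yields $s[i', j'] \supseteq s$; and (b) the touch of $\ell_{h'}$ with $\calU$ (either $e_{h'}'$ in Case~1 or $v_{h'}'$ in Case~2 of the definition of $s(h')$) lies inside the segment $s[i', j']$, so that $s(h') = s[i', j']$ is well-defined.

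My first step is to extract structural consequences of the hypothesis $s \notin \hatS$. Since $s \neq s(h)$, the touch of $\ell_h$ on $\calU$ cannot lie strictly inside $[p_i', p_j']$: if it did, then by the definition of $s(h)$ the segment $s[i,j] \in S_h$ would be the unique one containing it, forcing $s(h) = s$. Combined with $p_{i-1}, p_{j+1} \notin h$ while $p_{i-1}, p_{j+1} \leq \calU$ (since $H$ covers $P$), we obtain $\calU(x_{i-1}) > \ell_h(x_{i-1})$ and $\calU(x_{j+1}) > \ell_h(x_{j+1})$; together with the convexity of $\calU$, this leaves room for another line of $H$ to support $\calU$ over part of $[p_i', p_j']$.

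I then plan to choose $h'$ as the halfplane whose bounding line either contains an edge of $\calU$ directly above a suitable point of $[p_i', p_j']$, or whose slope lies in the tangent-slope range at a vertex of $\calU$ within $[p_i', p_j']$. In either situation, $\ell_{h'}$ coincides with $\calU$ over the relevant portion, so it lies above any $p_k$ with $x_k$ in this portion, which after extending to the maximal subsequence gives condition (a). To establish condition (b), I would argue that the touch projection of $\ell_{h'}$ cannot extend past $p_{i'-1}$ or $p_{j'+1}$, since those points lie strictly below $\calU$, so the touch must fit inside $[p_{i'}', p_{j'}']$. The main obstacle I anticipate is a careful case analysis: Case~1 versus Case~2 of the definition of $s(h)$; whether the touch of $\ell_h$ lies to the left or right of $[p_i',p_j']$; and whether $\calU$ above $[p_i', p_j']$ consists of a single edge (in which case the edge case applies to $h'$) or of several edges (in which case I must invoke the vertex case at an interior vertex of $\calU$).
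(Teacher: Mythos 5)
There is a real gap in the plan, and it is located exactly where the main content of the lemma lives. Your condition (a) asks that $\ell_{h'}$ lies above \emph{all} of $p_i,\ldots,p_j$. But your argument for (a) only establishes that $\ell_{h'}$ lies above those $p_k$ whose $x$-coordinates sit under the ``touch'' of $\ell_{h'}$ on $\calU$ (the edge $e_{h'}$ or the vertex $v_{h'}$). That is only a sub-range of $[i,j]$, possibly a single index. ``Extending to the maximal subsequence of $h'$'' then gives you the maximal run of $\Gamma_{h'}$ through \emph{those} points, but nothing forces that run to reach all the way back to $p_i$ (or forward to $p_j$): a priori, $h'$ could fail to cover $p_i$ even though it covers $p_j$, and then $s(h')$ would not contain $s$. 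The step ``extending to the maximal subsequence gives condition (a)'' is asserting precisely the conclusion you need, not deriving it.

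What closes this gap in the paper's proof is a specific choice of $h'$ together with a slope/convexity argument. The touch of $\ell_h$ on $\calU$ lies, say, to the right of $s$ (after a WLOG). The paper then takes $q$ to be the point of $\calU$ vertically above the \emph{extreme} endpoint $p_j$ of $s$, and sets $h'$ to be the halfplane whose bounding line contains the edge of $\calU$ through $q$. Because $q$ is left of the touch of $h$, the slope of $\ell_{h'}$ is strictly smaller than the slope of $\ell_h$; hence $\ell_h$ and $\ell_{h'}$ cross to the right of $q$, and everything left of $q$ that lies below $\ell_h$ also lies below $\ell_{h'}$. Since all of $P[i,j]$ is left of (or at) $q$ and below $\ell_h$, this shows $\ell_{h'}$ is above all of $P[i,j]$, which is your (a); and because $q$ projects to $p_j'$, the edge $e_{h'}$'s projection lands inside the maximal segment of $S_{h'}$ containing $p_j$, which gives your (b). Your proposal never commits to this choice of touch point (``a suitable point of $[p_i',p_j']$'' is left open), and your case split by ``one edge vs.\ several edges over $[p_i',p_j']$'' is not the relevant dichotomy --- the paper never needs the vertex case for $h'$; it always uses the edge of $\calU$ through $q$. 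Without pinning the touch to the extreme endpoint of $s$ and invoking the slope comparison, the argument does not go through.
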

\begin{proof}
Consider a segment $s[i,j]\in S\setminus \hatS$. Let $h$ be the halfplane that defines $s[i,j]$. Depending on whether the bounding line $\ell_h$ contains an edge of $\calU$, there are two cases. 

\paragraph{$\boldsymbol{\ell_h}$ contains an edge of $\boldsymbol{\calU}$.}
Suppose $\ell_h$ contains an edge $e$ of $\calU$. Then, let $e'$ be the vertical projection of $e$ onto the $x$-axis. Since $e$ is an edge of $\calU$, $e'$ must be contained in at most one segment of $S_h$. Hence, either $e'$ is disjoint from $s[i,j]$ or $e'\subseteq s[i,j]$. In the latter case, by definition, we have $s(h)=s[i,j]$ and thus $s[i,j]\in \hatS$, but this is not possible since $s[i,j]\not\in \hatS$. As such, $s[i,j]$ must be disjoint from $e'$. Without loss of generality, we assume that $s[i,j]$ is left of $e'$ (see Fig.~\ref{fig:lowerproof10}). 

By definition, all points of $P[i,j]$ are in $h$ and thus below $\ell_h$. Consider the point $p_j$, which is the rightmost point of $P[i,j]$. Let $q$ be the intersection of $\calU$ with the vertical line through $p_j$ and let $h'$ be the halfplane whose bounding line contains $q$ (see Fig.~\ref{fig:lowerproof10}). Clearly, $q$ is to the left of $e$. Therefore, the slope of $\ell_{h'}$ must be smaller than that of $\ell_{h}$. We claim that $s[i,j]\subseteq s(h')$ must hold, which will prove the lemma since $s(h')$ is in $\hatS$.
To this end, suppose $s(h')$ is defined by a subsequence $P[i',j']$ of $P$. Then, it suffices to show that $P[i,j]\subseteq P[i',j']$. Indeed, by the definition of $q$, since both $\ell_{h'}$ and $\ell_{h}$ contain edges of $\calU$ and the slope of $\ell_{h'}$ is smaller than that of $\ell_{h}$, the intersection $\ell_{h'}\cap \ell_{h}$ must be to the right of $q$. Therefore, for any point $p$ to the left of $q$ and below $\ell_{h}$, $p$ must be below $\ell_{h'}$ as well. Since all points of $P[i,j]$, which are to the left of $q$, are in $h$ and thus below $\ell_h$, they must be below $\ell_{h'}$ and thus are in $h'$. In addition, since $q\in \ell_{h'}$ is on $\calU$ and $q$ is the intersection of $\calU$ with the vertical line through $p_j$, by definition, we have $p_j\in P[i',j']$. This implies that $P[i',j']$ is the subsequence of $\Gamma_{h'}$ that contains $p_j$. Since all points of $P[i,j]$ are in $h'$, it must hold that $P[i,j]\subseteq P[i',j']$. This proves that $s[i,j]\subseteq s(h')$, which also proves the lemma.   

\begin{figure}[t]
\begin{minipage}[t]{\textwidth}
\begin{center}
\includegraphics[height=1.6in]{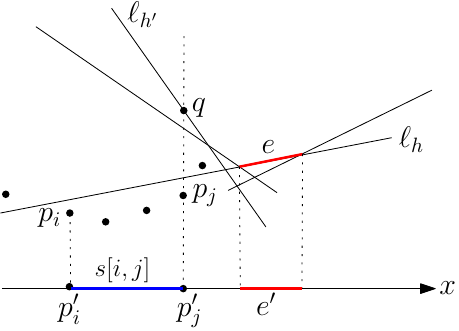}
\caption{\footnotesize Illustrating the proof of Lemma~\ref{lem:30} when $\ell_h$ contains an edge of $\calU$.}
\label{fig:lowerproof10}
\end{center}
\end{minipage}
\vspace{-0.1in}
\end{figure}

%Indeed, by definition, $s(h')$ is the segment defined by a subsequence of $\Gamma_{h'}$ and containing $q'$, where $q'$ is the vertical projection of $q$ on the $x$-axis. Note that $q'$ is $p_j'$. Let $\ell(h)[i,j]$ (resp., $\ell(h')[i,j]$) be the portion of $\ell(h)$ (resp., $\ell(h')$) between the $x$-coordinates of $p_i$ and $p_j$. Notice that $q$ is the leftmost point of $\ell(h')$. Since $q$ is a point on $\calU$ left of $e$, it is not difficult to see that $\ell(h')$ is vertically above $\ell(h)$. Since all points of $P[i,j]$ are vertically below $\ell(h)[i,j]$, we obtain that all points of $P[i,j]$ are vertically below $\ell(h')[i,j]$. By the definition of $s(h')$, $s(h')$ must cover all points of $P'[i,j]$ and therefore $s[i,j]\subseteq s(h')$. 

\paragraph{$\boldsymbol{\ell_h}$ does not contain any edge of $\boldsymbol{\calU}$.}
Suppose $\ell_h$ does not contain any edge of $\calU$. Let $v_h$ be the unique vertex of $\calU$ that has a tangent line parallel to $\ell_h$ (see Fig.~\ref{fig:lowerproof20}). Let $v_h'$ be the vertical projection of $v_h$ on the $x$-axis. Note that $v_h'$ cannot be in $s[i,j]$ since otherwise $s[i,j]$ would be $s(h)$ and thus in $\hatS$, incurring contradiction. Without loss of generality, we assume that $s[i,j]$ is left of $p'$. 

\begin{figure}[t]
\begin{minipage}[t]{\textwidth}
\begin{center}
\includegraphics[height=1.6in]{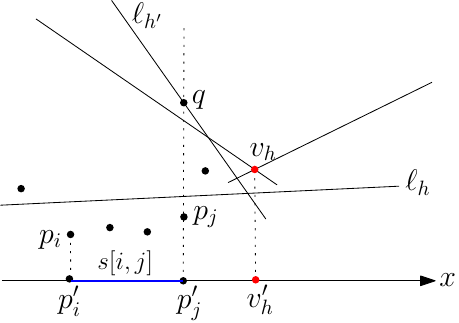}
\caption{\footnotesize Illustrating the proof of Lemma~\ref{lem:30} when $\ell_h$ does not contain any edge of $\calU$.}
\label{fig:lowerproof20}
\end{center}
\end{minipage}
%\vspace{-0.1in}
\end{figure}

Define $q$ and $h'$ in the same way as above (see Fig.~\ref{fig:lowerproof20}). We claim that $s[i,j]\subseteq s(h')$. The argument follows exactly the same way as above by using the property that the slope of $\ell_h'$ is smaller than that of $\ell_h$.
This proves the lemma as $s(h')$ is in $\hatS$.   
\end{proof}

With Lemma~\ref{lem:30}, the lower-only halfplane coverage problem on $P$ and $H$ can be solved as follows. (1) Compute $\hatS$. (2) Compute a smallest subset $\hatS^*$ of segments of $\hatS$ whose union covers $P'$. (3) Using $\hatS^*$, obtain an optimal solution for $P$ and $H$ (in the same way as described in Section~\ref{sub:reduction}). Since $|\hatS|\leq n$ and $|P'|=n$, the second and third steps can be implemented in $O(n\log n)$ time. The following lemma shows that the first step can be done in $O(n\log n)$ time too, by making use of ray-shooting queries in simple polygons~\cite{ref:ChazelleRa94,ref:ChazelleVi89,ref:HershbergerA95}. 

\begin{lemma}
Computing all segments of $\hatS$ can be done in $O(n\log n)$ time. 
\end{lemma}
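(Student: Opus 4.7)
The plan is to build $\calU$ to identify the reference point of each $h\in H$, and then to locate the maximal run defining $s(h)$, if any, via two ray-shooting queries per halfplane in a simple polygon built around the $x$-monotone polyline through $P$.

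First, compute the upper envelope $\calU$ of the bounding lines of $H$ in $O(n\log n)$ time by any standard algorithm. Since the edges of $\calU$ are sorted by slope and each is contributed by a distinct halfplane, we can directly identify, for every $h$, whether $\ell_h$ contributes an edge $e$ to $\calU$; if so, we read off the $x$-projection $e'=[x_1,x_2]$ in $O(1)$ time. Otherwise, a binary search on the slopes of the edges of $\calU$ locates in $O(\log n)$ time the unique vertex $v_h$ of $\calU$ whose adjacent edges have slopes bracketing the slope of $\ell_h$, giving the reference point $v_h'$. Summed over all $h$, this step takes $O(n\log n)$ time.

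Next, construct a simple polygon $\Pi$ whose boundary is the $x$-monotone polyline $\gamma=p_1p_2\cdots p_n$ closed by three axis-parallel segments: up from $p_n$ to height $M$, left along $y=M$ to the vertical line through $p_1$, and down to $p_1$, where $M$ exceeds the maximum $y$-coordinate among the points of $P$. The polygon $\Pi$ has $n+O(1)$ vertices and is simple because $\gamma$ is $x$-monotone and the cap lies above $\gamma$. Preprocess $\Pi$ for ray shooting in $O(n)$ time using the algorithm of Hershberger and Suri~\cite{ref:HershbergerA95}, so that each query takes $O(\log n)$ time. For every halfplane $h$, we recover $s(h)=s[i,j]$ (or conclude that $s(h)$ does not exist) as follows. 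In Case~2, let $k$ be the index with $p_k'\leq v_h'\leq p_{k+1}'$, found by binary search. If either $p_k\notin h$ or $p_{k+1}\notin h$ then $s(h)$ does not exist; otherwise the point $q$ on $\ell_h$ directly above $v_h'$ lies strictly inside $\Pi$ (since the chain height at $v_h'$ interpolates two values strictly below $\ell_h$), and a ray from $q$ along $\ell_h$ in the direction of decreasing $x$ returns either an edge $p_{i-1}p_i$ of $\gamma$ identifying $i$, or a side of the cap, in which case $i=1$; a symmetric rightward ray identifies $j$. Case~1 is analogous, using the two points on the edge $e$ vertically above $x_1$ and $x_2$ as the left and right ray origins; these lie strictly above $\gamma$ because $e$ lies on $\calU$ and $P$ lies below $\calU$, and the returned indices $(i,j)$ yield a valid $s(h)$ exactly when $s[i,j]\supseteq e'$.

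Each halfplane thus requires $O(\log n)$ time for its binary searches and two ray-shooting queries, for a total of $O(n\log n)$ time. The main subtlety is verifying that the first crossing encountered by the leftward (resp.\ rightward) ray is precisely the edge bounding the maximal run containing the ray origin. This holds because an edge $p_{\alpha-1}p_\alpha$ of $\gamma$ is crossed by $\ell_h$ if and only if $p_{\alpha-1}$ and $p_\alpha$ lie on opposite sides of $\ell_h$; hence the first such crossing going outward from $q$ is exactly $p_{i-1}p_i$ (or $p_jp_{j+1}$), marking where a point of $P$ above $\ell_h$ first appears and thus where the maximal run in $\Gamma_h$ ends.
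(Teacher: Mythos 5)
Your construction is a genuinely different route to the same end. The paper builds a rectangle sitting on the $x$-axis with a vertical slit $\overline{p_ip_i'}$ hanging down from each $p_i$ (a weakly simple polygon), whereas you use the region between the $x$-monotone chain $\gamma=p_1\cdots p_n$ and a horizontal cap. In the paper's polygon a ray along $\ell_h$ registers the boundary of the maximal run by hitting the first slit whose point lies above $\ell_h$; in yours, by crossing the first chain edge $p_{\alpha-1}p_\alpha$ that $\ell_h$ straddles. Both are sound, and your polygon has the modest advantages of being genuinely simple and of coming with an explicit validity test $s[i,j]\supseteq e'$ for Case~1, which is actually a bit more careful than the paper's check that $i'\le j'$.

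There is, however, a concrete gap in the choice of $M$. You take $M$ to exceed only $\max_i p_i.y$, and then assert that the ray origins lie strictly inside $\Pi$. But the argument you give (``the chain height at $v_h'$ interpolates two values strictly below $\ell_h$'') only shows the origin is \emph{above} $\gamma$; it says nothing about being \emph{below} the cap. In Case~1 the origins are vertices of $\calU$, and in Case~2 the origin is the point of $\ell_h$ at $x=v_h.x$, just below the tangent to $\calU$ at $v_h$. Since $\calU$ can lie arbitrarily far above the points of $P$, these origins can have $y$-coordinate well above $\max_i p_i.y$, i.e., outside $\Pi$, which invalidates the Hershberger--Suri query. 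Likewise, $v_h'$ (and the endpoints of $e'$) may have $x$-coordinates outside $[p_1.x,p_n.x]$, which again puts the origin outside $\Pi$; in those situations $s(h)$ is undefined, but the binary search and ray shooting as you describe them are not well posed, so this case needs to be detected and excluded up front. Both issues are easy to repair — after computing $\calU$ (which you already do first), set $M$ above the largest $y$-coordinate attained by $\calU$ over $[p_1.x,p_n.x]$, and immediately declare $s(h)$ undefined whenever the reference $x$-coordinate falls outside $[p_1.x,p_n.x]$ — but as written the proof that the origins lie inside $\Pi$ is incomplete.
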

\begin{proof}
To compute $\hatS$, we manage to reduce the problem to ray-shooting queries in simple polygons~\cite{ref:ChazelleRa94,ref:ChazelleVi89,ref:HershbergerA95}.  To this end, we first construct a simple polygon $R$ of $O(n)$ vertices as follows. 

Let $p_0'$ be a point to the left of $p_1'$ and $p_{n+1}'$ a point to the right of $p_n'$ on the $x$-axis. Let $R'$ be an axis-parallel rectangle with $\overline{p_0'p_{n+1}'}$ as a bottom edge and containing all points of $P$. We add the segments $\overline{p_ip_i'}$, for all $1\leq i\leq n$, to $R'$ to obtain a (weakly) simple polygon $R$ (see Fig.~\ref{fig:polygonlower}). Clearly, $R$ has $O(n)$ vertices. Note that each edge $e$ of $R$ belongs to one of five cases: (1) $e$ is the top edge of $R'$; (2) $e$ is the left edge of $R'$; (3) $e$ is the right edge of $R'$; (4) $e$ is a segment $\overline{p_ip_i'}$ for some $1\leq i\leq n$; (5) $e$ is $\overline{p_{i}'p_{i+1}'}$ for some $0\leq i\leq n$. 

\begin{figure}[t]
\begin{minipage}[t]{\textwidth}
\begin{center}
\includegraphics[height=0.8in]{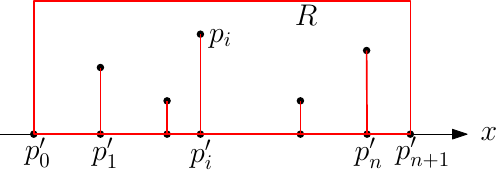}
\caption{\footnotesize Illustrating the (weakly) simple polygon $R$, which is formed by all red segments.}
\label{fig:polygonlower}
\end{center}
\end{minipage}
\vspace{-0.1in}
\end{figure}

We build a ray-shooting data structure for $R$ in $O(n)$ time so that given a ray originating from a point inside $R$, the first edge of $R$ hit by the ray can be computed in $O(\log n)$ time~\cite{ref:ChazelleRa94,ref:ChazelleVi89,ref:HershbergerA95}. With the help of the ray-shooting data structure, we next compute $\hatS$ as follows.

We start with computing the upper envelope $\calU$ of the bounding lines of all halfplanes of $H$, which takes $O(n\log n)$ time. 
For each edge $e$ of $\calU$, suppose $e$ is on the bounding line of a halfplane $h$. To compute $s(h)$, suppose $s(h)$ is defined by a subsequence $P[i',j']$ of $\Gamma_h$. We can compute the index $i'$ as follows. We shoot a ray originating from any interior point of $e$ toward the left endpoint of $e$. By a ray-shooting query using the ray-shooting data structure, we can find the first edge $e'$ of $R$ hit by the ray in $O(\log n)$ time. As discussed above, $R$ has five types of edges. If $e'$ is the top or left edge of $R'$, then $i'=1$. Notice that $e'$ cannot be the right edge of $R'$. If $e'$ is a segment $\overline{p_ip_i'}$, for some $1\leq i\leq n$, then $i'=i+1$. If $e'$ is $\overline{p_i'p_{i+1}'}$, for some $1\leq i\leq n$, then $i'=i+1$. In this way, the index $i'$ can be determined in $O(\log n)$ time. The index $j'$ can be computed analogously (by shooting a rightward ray from an interior point of $e$). Note that if it turns out that $i'>j'$, then $s(h)$ is not defined. As such, the segment $s(h)$ can be computed in $O(\log n)$ time. 

Now consider a halfplane $h$ whose bounding line $\ell_h$ does not contain any edge of $\calU$. We find the vertex $v_h$ of $\calU$ as defined before, which can be done in $O(\log n)$ time by binary search on $\calU$. Let $v_h''$ be the intersection of $\ell_h$ and the vertical line through $v_h$.
To compute $s(h)$, suppose $s(h)$ is defined by a subsequence $P[i',j']$ of $\Gamma_h$. In the same way as above, we can compute the two indices $i'$ and $j'$ in $O(\log n)$ time by shooting two rays from $v_h''$ along $\ell_h$ with opposite directions. 

In summary, the segment $s(h)$ for each halfplane $h\in H$ can be computed in $O(\log n)$ time. Therefore, computing the set $\hatS$ can be done in $O(n\log n)$ time in total. 
\end{proof}

We thus obtain the following result. 

\begin{theorem}
Given in the plane a set of points and a set of lower halfplanes, one can compute a smallest subset of halfplanes whose union covers all points in $O(n\log n)$ time, where $n$ is the total number of all points and halfplanes. 
\end{theorem}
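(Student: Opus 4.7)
The plan is to follow the roadmap already set up in Sections~\ref{sub:reduction} and the subsequent improvement: reduce the halfplane coverage instance to an interval coverage instance, then show that a small $O(n)$-size subfamily of intervals already contains an optimum, and finally assemble the pieces into an $O(n\log n)$ algorithm.

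First, I would build the reduction of Section~\ref{sub:reduction}: project $P$ vertically to the $x$-axis to get $P'$, and for each halfplane $h \in H$ generate the family $S_h$ of intervals $s[i,j]$ corresponding to the maximal subsequences of $P$ covered by $h$. By Lemma~\ref{lem:10} and Corollary~\ref{coro:lower}, an optimal cover of $P'$ by intervals of $S = \bigcup_h S_h$ can be converted (no two intervals of the optimum come from the same halfplane) into an optimal cover of $P$ by halfplanes of $H$. This step is correct but insufficient on its own because $|S|$ can be $\Theta(n^2)$, so solving the interval instance directly is too slow.

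Second, I would restrict $S$ to the much smaller family $\hatS = \{s(h) : h \in H\}$ defined via the upper envelope $\calU$ of the bounding lines $\ell_h$. Specifically, if $\ell_h$ contributes an edge $e$ to $\calU$, take $s(h)$ to be the unique interval of $S_h$ whose projection contains $e$'s vertical projection (if one exists); if $\ell_h$ does not appear on $\calU$, use the interval of $S_h$ whose projection contains the projection of the unique vertex $v_h$ of $\calU$ with tangent parallel to $\ell_h$. The key structural claim — exactly Lemma~\ref{lem:30} — is that any $s \in S \setminus \hatS$ is contained in some $s' \in \hatS$; this is the main obstacle, and I would prove it by the case analysis already in the excerpt, using the fact that if $s[i,j]$ lies, say, to the left of the critical projection, then the halfplane $h'$ whose bounding line contains the edge of $\calU$ directly above $p_j$ has smaller slope than $\ell_h$, which forces $P[i,j] \subseteq P[i',j']$ for the subsequence defining $s(h')$. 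Once Lemma~\ref{lem:30} is established, a greedy optimum on $\hatS$ is automatically a greedy optimum on $S$, hence (by Corollary~\ref{coro:lower}) yields an optimal halfplane cover.

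Third, I would execute the algorithm within the stated time bound. Compute $\calU$ in $O(n\log n)$. To compute each $s(h)$ in $O(\log n)$ time, build a simple polygon $R$ by attaching the vertical stems $\overline{p_i p_i'}$ to an enclosing rectangle, and preprocess $R$ for ray shooting using~\cite{ref:ChazelleRa94,ref:ChazelleVi89,ref:HershbergerA95}; then the endpoints of $s(h)$ are recovered by two ray-shooting queries from either an interior point of the corresponding edge of $\calU$ or from the point $v_h'' = \ell_h \cap (\text{vertical line through }v_h)$, reading the indices $i'$ and $j'$ off the first polygon edge hit. This yields $\hatS$ in $O(n\log n)$ time. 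Finally, solve interval coverage on $(P', \hatS)$ by the standard sort-and-greedy sweep in $O(n\log n)$ and lift the chosen intervals to the corresponding halfplanes. Summing the costs gives the claimed $O(n\log n)$ bound, and correctness follows from Lemma~\ref{lem:30} combined with Corollary~\ref{coro:lower}.
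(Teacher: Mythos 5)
Your proposal is correct and follows essentially the same approach as the paper: the reduction to interval coverage, the restriction to the $O(n)$-size family $\hatS$ defined via the upper envelope $\calU$ and Lemma~\ref{lem:30}, and the $O(n\log n)$ computation of $\hatS$ by ray shooting in the augmented polygon $R$ all mirror the paper's own argument.
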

% \begin{proof}
% It remains to prove the lower bound. 
% We use a reduction from the element distinctness problem. Let $A=\{a_1,a_2,\ldots,a_n\}$ be a set of $n$ numbers. The element distinctness problem is to determine whether $A$ has two numbers that are equal. The problem has an $\Omega(n\log n)$ time lower bound under the algebraic decision tree model~\cite{ref:Ben-OrLo83}. We next reduce it to our lower halfplane coverage problem. 

% For each number $a_i\in A$, we create a point $p_i$ with coordinate $(a_i,a_i^2)$, which is the point of the graph $y=x^2$ whose $x$-coordinate is equal to $a_i$. Let $P$ be the set of all points $p_i$ thus created. For each $a_i\in A$, we create a lower halfplane $h_i$ whose bounding line is tangent to the graph $y=x^2$ at $p_i$. Let $H$ be the set of all lower halfplanes $h_i$ thus created. Notice that $h_i$ covers a point $p_j\in P$ if and only if $a_j=a_i$. As such, it is not difficult to see that the size of a smallest subset of $H$ covering $P$ is equal to $n$ if and only if $A$ does not have two numbers that are equal. This completes the reduction (which takes $O(n)$ time). 
% \end{proof}

\subsection{Lower bound}
The following proves an $\Omega(n\log n)$ time lower bound even for some special cases. 
\begin{theorem}\label{theo:lb-lowerhalfplane}
It requires $\Omega(n\log n)$ time to solve the lower halfplane coverage problem under the algebraic decision tree model, even if all points are given sorted by $x$-coordinates or if the bounding lines of all halfplanes are given sorted by their slopes. 
\end{theorem}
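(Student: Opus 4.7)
The plan is to derive the lower bound by reducing from a decision problem we call \emph{near-permutation detection} (NPD): given $n$ reals $a_1,\ldots,a_n$, decide whether there exists a permutation $\sigma$ of $\{1,\ldots,n\}$ with $|a_j-\sigma(j)|\le 1/4$ for every $j$. The YES set of NPD is $\bigcup_{\sigma\in S_n}\prod_{j=1}^{n}[\sigma(j)-1/4,\sigma(j)+1/4]$, a union of $n!$ axis-parallel cubes in $\bbR^n$; because $1/4<1/2$, cubes corresponding to different permutations are separated by a gap of at least $1/2$ in any coordinate where the permutations disagree. Thus the YES set has exactly $n!$ connected components, and Ben--Or's theorem yields an ADT lower bound of $\Omega(\log n! - n) = \Omega(n\log n)$ for NPD.

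The geometric primitive driving the reduction is the parabola $y=x^2$: for any real $a$, define $\ell_a^{+}: y = 2ax - a^2 + 1/16$, a line parallel to the tangent at $(a,a^2)$, shifted upward by $1/16$. A parabola point $(t,t^2)$ lies below $\ell_a^{+}$ iff $(t-a)^2 \le 1/16$, i.e., $|t-a|\le 1/4$. Hence the lower halfplane bounded by $\ell_a^{+}$ encodes ``$1/4$-closeness'' of $a$ to the first coordinate of a parabola point.

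For case (a), place the $n$ points $p_i = (i,i^2)$ for $i=1,\ldots,n$ (already sorted by $x$-coordinate) and, for each input $a_j$, create the lower halfplane $h_j$ bounded by $\ell_{a_j}^{+}$ (the halfplanes appear in the input order, hence unsorted). Each $h_j$ covers $p_i$ iff $|i-a_j|\le 1/4$, so each $h_j$ covers at most one point. After an $O(n)$-time preprocessing that rejects inputs with any $a_j$ farther than $1/4$ from $\{1,\ldots,n\}$, the constructed coverage instance is feasible iff the $a_j$'s form a near-permutation, in which case the optimum cover has size exactly $n$. Thus any algorithm that returns the optimum (or reports infeasibility) also decides NPD, forcing $\Omega(n\log n)$ runtime. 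For case (b), we swap the roles: place halfplanes $H_k$ bounded by $\ell_k^{+}$ for $k=1,\ldots,n$, whose slopes $2,4,\ldots,2n$ are naturally sorted, together with the (unsorted) points $p_j=(a_j,a_j^2)$. After the same $O(n)$ filter, every $p_j$ is covered by the unique $H_k$ where $k$ is the integer closest to $a_j$, so coverage is always feasible and the minimum cover size equals the number of distinct integers in $\{1,\ldots,n\}$ ``hit'' by the $a_j$'s, which is $n$ iff the instance is a near-permutation.

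The main obstacle is ensuring that Ben--Or's theorem cleanly applies: the YES set must split into many geometrically disconnected pieces, and the reduction must preserve this disconnectedness. Using the shrunken radius $1/4$ (rather than $1/2$) provides the required positive gap between the cubes for different permutations, so the component count is exactly $n!$ and the bound $\Omega(\log n! - n) = \Omega(n\log n)$ is obtained in both settings, matching the upper bound of Theorem~\ref{theo:lb-lowerhalfplane}'s preceding algorithmic result.
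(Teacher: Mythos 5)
Your proof is correct but takes a genuinely different route from the paper's. The paper reduces from two classical problems already known to require $\Omega(n\log n)$ time in the algebraic decision tree model — \emph{set equality} ($A=B$?) for the point-sorted case and \emph{set inclusion} ($B\subseteq A$?) for the slope-sorted case — placing the numbers of $A$ as points $(a_i,a_i^2)$ on the parabola and the numbers of $B$ as halfplanes tangent to the parabola at $(b_j,b_j^2)$, so that $h_j$ covers $p_i$ iff $a_i=b_j$. You instead introduce a bespoke ``near-permutation detection'' problem and prove its hardness directly via Ben-Or's component-counting theorem (the $n!$ well-separated cubes), using a thickened tangent $y=2ax-a^2+1/16$ so that the halfplane covers $(t,t^2)$ iff $|t-a|\le 1/4$. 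Both routes yield the same bound; the paper's is shorter because it inherits known lower bounds, while yours is more self-contained and makes the robustness of the encoding explicit through the $1/4$ slack. One blemish worth flagging: the $O(n)$ ``filter'' that rejects $a_j$'s far from every integer in $\{1,\ldots,n\}$ is \emph{not} an $O(n)$ step under the algebraic decision tree model (floor/round is unavailable, and deciding which of $n$ intervals contains an unsorted $a_j$ costs $\Theta(\log n)$ per element), which would spoil the reduction. Fortunately the filter is also unnecessary: in case~(a), feasibility of the coverage instance is by itself equivalent to the near-permutation condition (each $h_j$ covers at most one $p_i$ and each $p_i$ can be covered only by $a_j$'s within $1/4$, so full coverage forces a bijection), and in case~(b) ``feasible and minimum cover size equals $n$'' is equivalent to it. So any coverage algorithm that reports infeasibility or the optimum already decides the base problem with no preprocessing; you should simply drop the filter from the write-up.
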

\begin{proof}
Let $P$ be a set of $n$ points and $H$ a set of $n$ lower halfplanes. 
If all points of $P$ are given sorted by their $x$-coordinates, we refer to it as the {\em point-sorted case}. If halfplanes of $H$ are given sorted by the slopes of their bounding lines, we refer to it as the {\em slope-sorted case}. 
We show below that both cases have an $\Omega(n\log n)$ lower bound. 

\paragraph{The point-sorted case.}
We begin with the point-sorted case. We use a reduction from the set equality problem. 

Let $A=\{a_1,a_2,\ldots,a_n\}$ and $B=\{b_1,b_2,\ldots,b_n\}$ be two sets of $n$ numbers each, such that the numbers of $A$ are distinct and sorted. The {\em set equality} problem is to determine whether $A=B$. The problem requires $\Omega(n\log n)$ time to solve under the algebraic decision tree model~\cite{ref:Ben-OrLo83}. Next, we reduce the problem to the point-sorted case. 
We create an instance of the point-sorted case in $O(n)$ time as follows. 

For each number $a_i\in A$, we create a point $p_i$ with coordinate $(a_i,a_i^2)$, which is the point of the graph $y=x^2$ whose $x$-coordinate is equal to $a_i$. Let $P$ be the set of all points $p_i$ thus created. Since numbers of $A$ are already sorted, points of $P$ are also sorted by their $x$-coordinates. 
For each number $b_j\in B$, we create a lower halfplane $h_j$ whose bounding line is tangent to the graph $y=x^2$ at $(b_j,b_j^2)$. Let $H$ be the set of all lower halfplanes $h_j$ thus created. 
Notice that $h_j$ covers a point $p_i\in P$ if and only if $a_i=b_j$. Since numbers of $A$ are distinct, the size of a smallest subset of $H$ covering $P$ is equal to $n$ if and only if $A=B$. This completes the reduction (which takes $O(n)$ time).

\paragraph{The slope-sorted case.}
For the slope-sorted case, we use a reduction from the set inclusion problem. 

Given $A$ and $B$ in the same way as above, the {\em set inclusion} problem is to determine whether $B\subseteq A$. The problem requires $\Omega(n\log n)$ time to solve under the algebraic decision tree model~\cite{ref:Ben-OrLo83}. To reduce the problem to the slope-sorted case, we can simply follow the above approach but switch the roles of $A$ and $B$, i.e., use $A$ to create $H$ and use $B$ to create $P$. Since $A$ is sorted, $H$ is also sorted by the slopes of the bounding lines of the halfplanes. 
Observe that the lower-halfplane coverage problem for $P$ and $H$ has a solution if and only if $B\subseteq A$. This completes the reduction (which takes $O(n)$ time).
\end{proof}

%Note that the space of the algorithm is $O(n)$.

\section{Star-shaped polygon coverage and 2D instance-optimal $\epsilon$-kernels}
\label{sec:star}
In this section, we solve the star-shaped polygon coverage problem. Let $\calP$ be a star-shaped polygon with respect to a center point $o$, i.e., the line segment $\overline{op}\subseteq \calP$ for any point $p\in \calP$. Let $n$ be the number of vertices of $\calP$. Let $H$ be a set of $n$ halfplanes that do not contain $o$. 
The goal is to compute a smallest subset of halfplanes whose union covers $\partial \calP$, the boundary of $\calP$. We present an $O(n\log n)$ time algorithm for the problem. 
%where $n$ is total sum of the number of vertices of $\calP$ and $|H|$. 
%To simplify the notation, we assume $|H|=n$ and the number of vertices of $\calP$ is also $n$. 

For any halfplane $h$, denote by $\overline{h}$ the complement halfplane of $h$. Define $\overline{H}=\{\overline{h}\ |\ h\in H\}$. 
%Let $\ell_h$ denote the boundling line of $h$. 

As in Section~\ref{sec:lower}, we assume that no two halfplanes of $H$ have their bounding lines parallel to each other. 
%We call a subset of halfplaces of $H$ a {\em feasible solution} if the union of these halfplanes covers $\partial P$. 
We also assume that $H$ has a feasible solution (i.e., $H$ has a subset of halfplanes whose union covers $\partial P$). Indeed, this can be determined in $O(n\log n)$ time as follows. We first compute the common intersection $\calU$ of the halfplanes of $\overline{H}$, which must contain $o$. $\calU$ can be computed in $O(n\log n)$ time. Notice that $H$ has a feasible solution if and only if the interior of $\calU$ does not intersect $\partial \calP$~\cite{ref:AgarwalCo23}. Since $\calP$ is star-shaped with respect to $o$, whether the interior of $\calU$ intersects $\partial \calP$ can be determined in $O(n)$ by rotating a sweeping ray round $o$. As such, in total $O(n\log n)$ time, one can determine whether $H$ has a feasible solution. 
In the following, we focus on finding an optimal solution. 

Let $C$ be a circle centered at $o$ and containing $\calP$. For any point $p$ inside $C$ with $p\neq o$, define its {\em projection point on $C$} as the intersection between $C$ and the ray from $o$ to $p$. 

\subsection{Reducing to circle coverage}

Following a similar idea to that in the previous section, we reduce the problem to a {\em circle coverage} problem on $C$: Given a set of circular arcs on $C$, find a smallest subset of arcs that together cover $C$.

Consider a halfplane $h$. Recall that $\ell_h$ denote its bounding line. Let $\Gamma_h$ denote the set of the maximal segments of $\calP\cap \ell_h$ (intuitively $\Gamma_h$ plays a similar role to the same notation in Section~\ref{sec:lower}). For each segment $s\in \Gamma_h$, it ``blocks'' the visibility of a portion of $\partial \calP$ from $o$, and it also blocks the visibility of an arc of $C$ from $o$, denoted by $\alpha_s$ (see Fig.~\ref{fig:star00}). The two endpoints of $\alpha_s$ are exactly the projections of the two endpoints of $s$ on $C$, respectively. Let $S_h$ denote the set of arcs defined by the segments of $\Gamma_h$. Define $S=\bigcup_{h\in H}S_h$. 

\begin{figure}[t]
\begin{minipage}[t]{\textwidth}
\begin{center}
\includegraphics[height=1.8in]{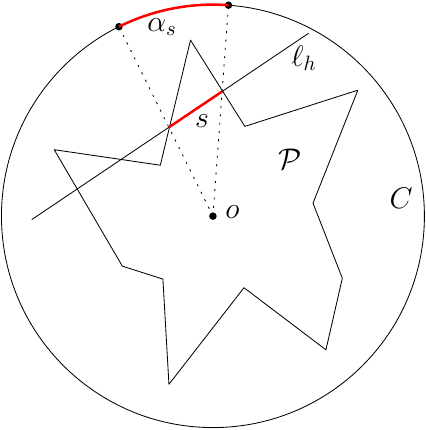}
\caption{\footnotesize Illustrating the definition of an arc $\alpha_s$.}
\label{fig:star00}
\end{center}
\end{minipage}
\vspace{-0.1in}
\end{figure}

Now consider the circle coverage problem for $S$: Compute a smallest subset of arcs of $S$ whose union covers $C$. To solve the problem, Lee and Lee~\cite{ref:LeeOn84} first gave an $O(|S|\log |S|)$ time algorithm, and Agarwal and Har-Peled~\cite{ref:AgarwalCo23} presented a much simpler solution with the same runtime. 
Suppose we have an optimal solution $S^*$. We create a subset $H^*$ of $H$ as follows. For each arc $\alpha$ of $S^*$, if $h$ is the halfplance that defines $\alpha$, then we include $h$ in $H^*$. 
The following lemma, which is analogous to Lemma~\ref{lem:10} in Section~\ref{sec:lower}, has been proved by Agarwal and Har-Peled~\cite{ref:AgarwalCo23}.

\begin{lemma}\label{lem:60}{\em (Agarwal and Har-Peled~\cite{ref:AgarwalCo23})}
\begin{enumerate}
    \item The union of all halfplanes $H^*$ covers $\partial \calP$.   
    \item $C$ can be covered by $k$ arcs of $S$ if and only if $\partial \calP$ can be covered by $k$ halfplanes of $H$. 
\end{enumerate}
\end{lemma}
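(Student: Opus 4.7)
The plan is to derive both parts of the lemma from a single geometric observation, which I would call the \emph{visibility lemma}: for any halfplane $h \in H$ and any segment $s \in \Gamma_h$, if the projection $p^*$ on $C$ of a point $p \in \partial \calP$ lies on the arc $\alpha_s$, then $p \in h$. To prove this, first observe that by the definition of $\alpha_s$, the ray from $o$ through $p^*$ crosses $s$ at some point $q \in \ell_h$. Since $s \in \Gamma_h$ is a maximal segment of $\calP \cap \ell_h$, we have $q \in \calP$. Because $\calP$ is star-shaped with respect to $o$ and $p \in \partial \calP$ is the unique exit point of the ray from $\calP$ (re-entry is ruled out by star-shapedness), the point $q$ must lie on the segment $\overline{op}$. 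Finally, since $o \notin h$, the line $\ell_h$ separates $o$ from $h$, so every point of the ray at or beyond $q$ lies in $h$; in particular, $p \in h$.

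With the visibility lemma, part (1) follows immediately: for each $p \in \partial \calP$, its projection $p^*$ is contained in some arc $\alpha_s \in S^*$, and the halfplane defining $\alpha_s$ belongs to $H^*$ and covers $p$. The same reasoning handles the ``only if'' direction of part (2): given any $k$ arcs of $S$ covering $C$, the set $H'$ of their defining halfplanes has size at most $k$ and covers $\partial \calP$.

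The substantive step is the reverse direction of part (2). Given $H' \subseteq H$ with $|H'| = k$ and $\bigcup_{h \in H'} h \supseteq \partial \calP$, I would form the convex region $\calU = \bigcap_{h \in H'} \overline{h}$, which plays the role of the upper envelope in Lemma~\ref{lem:10}. Since every $h \in H'$ avoids $o$, we have $o \in \calU$. A short convexity argument shows $\calU \subseteq \calP$: if there were $x \in \calU \setminus \calP$, the segment $\overline{ox}$ would cross $\partial \calP$ at some point $y$, and convexity of $\calU$ would force $y \in \calU$, contradicting that $H'$ covers $\partial \calP$. Thus $\calU$ is a bounded convex polygon contained in $\calP$ with at most $k$ edges, one per halfplane in $H'$. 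Each edge $e$ of $\partial \calU$ lies on $\ell_h$ for some $h \in H'$ and is contained in $\calP$, hence contained in some maximal segment $s_e \in \Gamma_h$. I would then take $S' = \{\alpha_{s_e} : e \text{ is an edge of } \partial \calU\} \subseteq S$. This set has size at most $k$, and it covers $C$: for any $p^* \in C$, the ray from $o$ to $p^*$ starts in $\calU$ and exits $\calP$ (hence $\calU$) at some edge $e$ of $\partial \calU$, crossing $s_e$ at that exit point, which places $p^* \in \alpha_{s_e}$.

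The main obstacle is the reverse direction of part (2), specifically the geometric bookkeeping that links edges of $\partial \calU$ to elements of $\Gamma_h$. Degenerate configurations (for instance, $o$ lying on some $\ell_h$, a bounding line $\ell_h$ supporting $\calU$ without contributing an edge, or the ray $\overline{op^*}$ grazing a vertex of $\calU$) do not affect the argument but are best handled by a standard general-position assumption on the input.
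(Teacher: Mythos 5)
Your proof is correct and follows the same template the paper uses for the analogous Lemmas~\ref{lem:10} and~\ref{lem:100} (the paper cites Lemma~\ref{lem:60} from Agarwal and Har-Peled without reproducing a proof). In the reverse direction of part (2) you form $\calU=\bigcap_{h\in H'}\overline{h}$ and select one arc per edge of $\partial\calU$, using that each halfplane contributes at most one edge, exactly as the paper does elsewhere; your ``visibility lemma'' makes explicit the forward-direction step the paper treats as immediate, and your observation that $\calU\subseteq\calP$ is the star-shaped analogue of the paper's ``$\calU'\cap P=\emptyset$'' in the proof of Lemma~\ref{lem:100}.
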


With the above lemma, we can obtain results analogous to Corollary~\ref{coro:lower}. 
%In particular, no two arcs of $A_1$ are defined by the same halfplane and $H_1$ is an optimal solution to $H$ and $P$. 
In particular, $H^*$ is a smallest subset of $H$ for covering $\partial \calP$.
As such, 
the above gives an algorithm for computing a smallest subset of $H$ to cover $\partial \calP$. However, the algorithm is not efficient because the size of $S$ could be $\Omega(n^2)$ (since $|\Gamma_h|$ and thus $|S_h|$ could be $\Theta(n)$ for each halfplane $h\in H$). In the following, we reduce the time to $O(n\log n)$ by showing that a smallest subset of $S$ for covering $\partial \calP$ can be computed in $O(n\log n)$ time by using only a small subset of $S$.  

\subsection{Improvement}

As for the lower-only halfplane coverage problem, we will define a subset $\hatS\subseteq S$ such that $\hatS$ contains at most one arc $\alpha(h)$ defined by each halfplane $h\in H$ (and thus $|\hatS|\leq n$) and $\hatS$ contains a smallest subset of $S$ for covering $C$. Further, we will show that $\hatS$ can be computed in $O(n\log n)$ time. Consequently, applying the circle coverage algorithm~\cite{ref:AgarwalCo23,ref:LeeOn84} can solve the problem in $O(n\log n)$ time. 

\paragraph{Defining $\boldsymbol{\alpha(h)}$ and $\boldsymbol{\hatS}$.}
For each halfplane $h\in H$, we define an arc $\alpha(h)$ on $C$ as follows. As will be seen, $\alpha(h)$ is in $S_h$ and thus is in $S$. Let $\calU$ be the common intersection of all halfplanes of $\overline{H}$. As discussed before, $\calU$ must be inside $\calP$ since $H$ has a feasible solution. 

We represent a direction in $\bbR^2$ by a unit vector. Let $\bbS$ denote the set of all unit vectors (directions) in $\bbR^2$, i.e., $\bbS=\{v: \lVert v\rVert=1\}$.

For each halfplane $h$, we define the {\em norm} of its bounding line $\ell_h$ as the direction perpendicular to $\ell_h$ and towards the interior of $h$. For each edge $e$ of $\calU$, it is contained in $\ell_h$ for some halfplane $h\in H$; the norm of $e$ refers to the norm of $\ell_h$. 

The norms of all edges of $\calU$ partition $\bbS$ into $|\calU|$ {\em basic intervals} such that the interior of each interval does not contain the norm of any edge of $\calU$, where $|\calU|$ refers to the number of edges of $\calU$. 
Note that the endpoints of every basic interval are norms of two adjacent edges of $\calU$.
%For ease of discussion, we assume that $\calU$ has at least three edges (the other case is trivial) and thus every two adjacent edges $e_1$ and $e_2$ of $\calU$ have their norms as the endpoints of a unique basic interval of $\bbS$.
%denoted by $I(e_1,e_2)$.

To define $\alpha(h)$, depending on whether the bounding line $\ell_h$ of $h$ contains an edge of $\calU$, there are two cases. 

\begin{enumerate}
\item If $\ell_h$ contains an edge $e$ of $\calU$, then since $\calU$ is inside $\calP$, $e$ must be contained in a segment $s$ of $\Gamma_h$. Let $\alpha(h)$ be the arc of $C$ defined by $s$ (see Fig.~\ref{fig:starcase10}).
\item If $\ell_h$ does not contain any edge of $\calU$, then let $I$ be the basic interval of $\bbS$ that contains the norm of $h$. Let $e_1$ and $e_2$ be the two adjacent edges of $\calU$ whose norms are endpoints of $I$. Define $v_h$ to be the vertex of $\calU$ incident to both $e_1$ and $e_2$ (see Fig.~\ref{fig:starcase20}). Let $v_h'$ be the projection of $v_h$ on $C$. Define $\alpha(h)$ to be the unique arc (if exists) of $S_h$ containing $v'_h$. 
\end{enumerate}

\begin{figure}[t]
\begin{minipage}[t]{\textwidth}
\begin{center}
\includegraphics[height=1.8in]{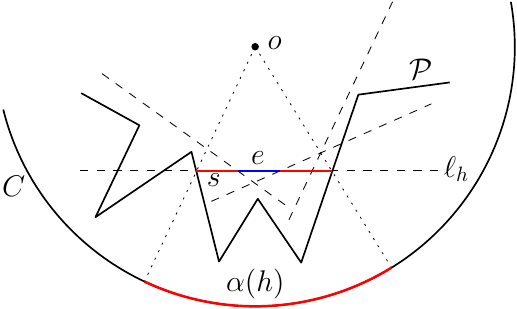}
\caption{\footnotesize Illustrating the definition of $\alpha(h)$ when $\ell_h$ contains an edge $e$ of $\calU$. The dashed lines are halfplane bounding lines. The solid polyline is part of $\partial \calP$. The red arc on the circle $C$ is $\alpha(h)$.}
\label{fig:starcase10}
\end{center}
\end{minipage}
\vspace{-0.1in}
\end{figure}

\begin{figure}[t]
\begin{minipage}[t]{\textwidth}
\begin{center}
\includegraphics[height=1.8in]{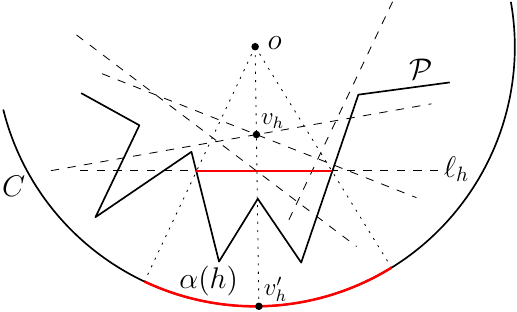}
\caption{\footnotesize Illustrating the definition of $\alpha(h)$ when $\ell_h$ does not contain any edge of $\calU$. The red arc on the circle $C$ is $\alpha(h)$.}
\label{fig:starcase20}
\end{center}
\end{minipage}
\vspace{-0.1in}
\end{figure}

Define $\hatS=\{s(h)\ |\ h\in H\}$. 

The following lemma implies that a smallest subset of arcs of $\hatS$ whose union covers $C$ is also a smallest subset of arcs of $S$ covering $C$. 

\begin{lemma}\label{lem:70}
For any arc $\alpha\in S\setminus \hatS$, $\hatS$ must have an arc $\alpha'$ such that $\alpha\subseteq \alpha'$.
\end{lemma}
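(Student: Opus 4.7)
The plan is to mirror the proof of Lemma~\ref{lem:30}, lifting the linear geometry of the lower-only case to the angular (radial) geometry around the center $o$. Given $\alpha=\alpha_s\in S\setminus\hatS$ defined by $s\in\Gamma_h$, I will construct $h'\in H$ such that $\alpha(h')\in\hatS$ contains $\alpha_s$. The construction splits along the same two cases used to define $\alpha(h)$, and throughout, the role played by the $x$-axis and vertical lines in the lower-only proof is played by rays from $o$ and their angular directions.

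In the first case where $\ell_h$ contains an edge $e$ of $\calU$, write $s^*$ for the segment of $\Gamma_h$ containing $e$, so $\alpha(h)=\alpha_{s^*}$. Since $\alpha_s\notin\hatS$ forces $s\neq s^*$, the segments $s,s^*$ are disjoint on $\ell_h$ and their arcs $\alpha_s,\alpha_{s^*}$ are disjoint on $C$. Let $b$ be the endpoint of $s$ along $\ell_h$ nearer to $s^*$, let $\theta_b$ be its direction from $o$, and let $q$ be the point where the ray from $o$ in direction $\theta_b$ exits $\calU$. Then $q$ lies on some edge $e'\neq e$ of $\calU$ because the ray meets $\ell_h$ only at $b\notin e$. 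I choose $h'\in H$ with $e'\subseteq\ell_{h'}$, so $\alpha(h')=\alpha_{s'}$ where $s'\in\Gamma_{h'}$ is the segment containing $e'$, and in particular $q\in e'\subseteq s'$. It remains to show $\alpha_s\subseteq\alpha_{s'}$.

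For every $\theta\in\alpha_s$ I must argue that the ray in direction $\theta$ crosses $\ell_{h'}$ at a point in $s'$. Writing $d_h(\theta),d_{h'}(\theta)$ for the distances from $o$ along this ray to $\ell_h,\ell_{h'}$ and $R(\theta)$ for the distance to $\partial\calP$, we know $d_h(\theta)\leq R(\theta)$ since $\theta\in\alpha_s$; if the stronger inequality $d_{h'}(\theta)\leq d_h(\theta)$ also holds throughout $\alpha_s$, then the crossing with $\ell_{h'}$ stays inside $\calP$ for every such $\theta$, and continuity keeps it in the same connected component of $\ell_{h'}\cap\calP$ as $q$, which is $s'$; hence $\theta\in\alpha_{s'}$, as desired. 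The key geometric fact needed is that the sign of $d_h(\theta)-d_{h'}(\theta)$ on the relevant angular range depends only on which side of the direction of $w=\ell_h\cap\ell_{h'}$ the angle $\theta$ lies on; combined with $d_{h'}(\theta_b)=|oq|\leq|ob|=d_h(\theta_b)$ (which holds because $\calU\subseteq\calP$ places $q$ between $o$ and $b$ on the ray), the desired inequality extends to all of $\alpha_s$ provided the direction of $w$ from $o$ does not fall inside $\alpha_s$.

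The second case, where $\ell_h$ contains no edge of $\calU$, is treated analogously with the vertex $v_h$ of $\calU$ playing the role of $e$: since $\alpha(h)$ is the arc of the segment $s^*\in\Gamma_h$ whose angular range contains $\theta_{v_h}$, I pick the endpoint $b$ of $s$ whose direction $\theta_b$ is closer to $\theta_{v_h}$ and take $h'$ to be the halfplane realizing $\partial\calU$ at direction $\theta_b$; the same distance comparison concludes. The main obstacle in both cases is the positional claim about $w=\ell_h\cap\ell_{h'}$, which plays here the role of the slope comparison in the lower-only proof; pinning it down relies on the monotonic rotation of the outward normals along the convex boundary $\partial\calU$ to force the angular direction of $w$ onto the $\alpha_{s^*}$-side (respectively the $\theta_{v_h}$-side) of $\alpha_s$ rather than inside it. Once this is established, the sign analysis and the connectedness argument close the case mechanically.
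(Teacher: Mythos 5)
Your proposal tracks the paper's proof very closely in structure: the same two cases, the same choice of $h'$ (via the edge $e'$ of $\calU$ hit by the ray from $o$ through the endpoint of $s$ nearest to $\alpha(h)$'s segment), and the same underlying geometric picture. The reformulation in terms of radial distances $d_h(\theta)$, $d_{h'}(\theta)$, $R(\theta)$ and a continuity-plus-connectedness argument is a clean way to phrase what the paper does more directly with segments $s'=\{\overline{ow}\cap\ell_{h'}: w\in s\}$, but it is the same argument.

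However, you yourself flag what you call "the main obstacle" — the positional claim about $w=\ell_h\cap\ell_{h'}$ — and you leave it unproven, appealing only vaguely to "monotonic rotation of the outward normals." This is the crux, and the paper does prove it. In case 1 the argument is essentially this: $b$ (your near endpoint, the paper's $p$) lies outside $\calU$ while $o$ lies inside, so the segment $\overline{ob}$ crosses the supporting line $\ell_{h'}$ at $q$; therefore $b$ is on the $h'$-side of $\ell_{h'}$, while $e\subseteq\partial\calU$ lies on the $\overline{h'}$-side. Hence $q'=\ell_{h'}\cap\ell_h$ strictly separates $s$ from $e$ on $\ell_h$, which places the direction of $q'$ from $o$ outside $\alpha_s$ and on the $\alpha_{s^*}$-side, exactly as you need. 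In case 2 the paper introduces the auxiliary tangent line $\ell'$ through $v_h$ parallel to $\ell_h$ and argues, using that $o$ lies strictly inside $\calU$ and that $p'=\ell'\cap\overline{ob}$ is on the same side of $v_h$ as $s$, that $\ell_{h'}\cap\ell'$ lies to the other side of $p'$, from which $\ell_{h'}\cap\ell_h$ lies beyond $b$. Without carrying out one of these (short) arguments, your proof reduces Lemma~\ref{lem:70} to an unverified geometric claim rather than proving it, so there is a genuine gap at precisely the step that does the work.
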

\begin{proof}
Consider an arc $\alpha\in S\setminus \hatS$. Let $h$ be the halfplane that defines $\alpha$. By definition, $\alpha$ is the projection of a segment of $\Gamma_h$ on $C$; let $s$ denote the segment. 

Without loss of generality, we assume that $\ell_h$ is horizontal and $o$ is above $\ell_h$. Depending on whether $\ell_h$ contains an edge of $\calU$, there are two cases. 

\paragraph{$\boldsymbol{\ell_h}$ contains an edge of $\boldsymbol{\calU}$.}
Suppose $\ell_h$ contains an edge $e$ of $\calU$.
Then $\Gamma_h$ has a segment $s_e$ that contains $e$. Let $\alpha_e$ be the arc of $S_h$ defined by $s_e$. By definition, $\alpha(h)$ is $\alpha_e$ and thus $\alpha_e$ is in $\hatS$. Since $\alpha\not\in \hatS$, we have $\alpha\neq \alpha_e$. This implies that $s\neq s_e$. Without loss of generality, we assume that $s$ is to the left of $s_e$ and thus $s$ is to the left of $e$ (see Fig.~\ref{fig:starproof10}). 

\begin{figure}[t]
\begin{minipage}[t]{\textwidth}
\begin{center}
\includegraphics[height=1.8in]{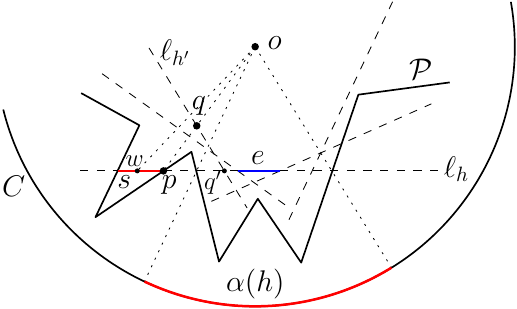}
\caption{\footnotesize Illustrating the proof of Lemma~\ref{lem:70} for the case where $\ell_h$ contains an edge $e$ of $\calU$.}
\label{fig:starproof10}
\end{center}
\end{minipage}
\vspace{-0.1in}
\end{figure}

Let $p$ be the right endpoint of $s$ (see Fig.~\ref{fig:starproof10}). 
Since $s$ is left of $e$, which is an edge of $\calU$, $s$ is completely outside $\calU$. 
Hence, $\overline{op}$ intersects $\partial \calU$ at a point, denoted by $q$. Let $e'$ be the edge of $\calU$ containing $q$. Let $h'$ be the halfplane whose bounding line $\ell_{h'}$ containing $e'$. In the following, we argue that $\alpha\subseteq\alpha(h')$ must hold, which will prove the lemma as $\alpha(h')\in \hatS$. To this end, it suffices to show that for any point $w\in s$, $\overline{ow}$ intersects $s(e')$, where $s(e')$ is the segment of $\Gamma_{h'}$ containing $e'$. Indeed, since $e$ is an edge of $\calU$, $s$ is left of $e$, and $\overline{op}$ intersects $e'$ at $q$, we obtain that $\ell_{h'}$ must intersect $\ell_h$ at a point $q'$ between $s$ and $e$ (see Fig.~\ref{fig:starproof10}). Since $p$ is the right endpoint of $s$ and $w\in s$, it follows that $\overline{ow}$ must intersect $\ell_{h'}$. Further, since $w\in \calP$, $\overline{ow}\cap \ell_{h'}$ must be in $\calP$. As $w$ is an arbitrary point of $s$, the above argument essentially shows that the portion $s'$ of $\ell_{h'}$ that consists of the intersection points $\overline{ow}\cap \ell_{h'}$ for all points $w\in s$ must be inside $\calP$. Note that $s'$ contains $q$ since when $w=p$, we have $q=\overline{ow}\cap \ell_{h'}$. As $q\in e'$, by definition, we have $s'\subseteq s(e')$. As such, we obtain that $\overline{ow}$ must intersect $s(e')$ for any point $w\in s$. This proves that $\alpha\subseteq \alpha(h')$ and thus proves the lemma. 

\paragraph{$\boldsymbol{\ell_h}$ does not contain any edge of $\boldsymbol{\calU}$.}
If $\ell_h$ does not contain any edge of $\calU$, then let $I$ be the basic interval of $\bbS$ that contains the norm of $\ell_h$. 
Recall the definitions of $v_h$ and $v'_h$. 
%Let $e_1$ and $e_2$ be the two edges of $B$ whose norms are endpoints of $I$. Let $p=e_1\cap e_2$. Let $p'$ be the projection of $p$ on $C$. 
By definition, $\alpha(h)$ is the unique arc (if exists) of $S_h$ containing $v_h'$. Since $\alpha\not\in \hatS$ and $\alpha(h)\in \hatS$, we obtain that $\alpha\neq \alpha(h)$ and thus $\alpha$ does not contain $v_h'$. This implies that $s$ does not contain $v_h''$, where $v_h''=\ell_h\cap \overline{ov_h'}$. Without loss of generality, we assume that $s$ lies to the left of $v_h''$ (see Fig.~\ref{fig:starproof20}). 

\begin{figure}[t]
\begin{minipage}[t]{\textwidth}
\begin{center}
\includegraphics[height=1.8in]{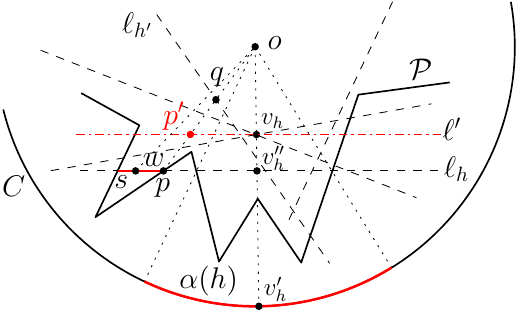}
\caption{\footnotesize Illustrating the proof of Lemma~\ref{lem:70} for the case where $\ell_h$ does not contain any edge of $\calU$.}
\label{fig:starproof20}
\end{center}
\end{minipage}
\vspace{-0.1in}
\end{figure}

Let $p$ be the right endpoint of $s$ (see Fig.~\ref{fig:starproof20}). As $\ell_h$ does not contain an edge of $\calU$, $\ell_h$ is completely outside $\calU$. Hence, $\overline{op}$ must intersect $\partial \calU$ at a point $q$. Let $e'$ be the edge of $\calU$ containing $q$ and let $h'$ be the halfplane whose bounding line $\ell_{h'}$ contains $e'$. In the following, we argue that $\alpha\subseteq\alpha(h')$ must hold, which will prove the lemma as $\alpha(h')\in \hatS$. To this end, as in the above case, it suffices to show that for any point $w\in s$, $\overline{ow}$ must intersect $s(e')$, where $s(e')$ is the segment of $\Gamma_{h'}$ containing $e'$. Indeed, by definition, the line through $v_h$ parallel to $\ell_h$ must be tangent to $\calU$ at $v_h$; let $\ell'$ denote the line. Let $p'$ be the intersection of $\ell'$ and $\overline{op}$. Hence, $q=\overline{op'}\cap e'$ and $p'$ is left of $v_h$. Since $o$ is inside $\calU$ and $\ell'$ is tangent to $\calU$ at $v_h$, $\ell_{h'}$ must intersect $\ell'$ at a point right of $p'$. This means that $\ell_{h'}$ must intersect $\ell_h$ at a point right of $p$. Consequently, for any point $w\in s$, which is left of $p$, $\overline{ow}$ must intersect $\ell_{h'}$. Further, since $w\in s\subseteq \calP$, the intersection $\overline{ow}\cap \ell_{h'}$ must be inside $\calP$.
The rest of the argument is the same as the above case. 
Specifically, as $w$ is an arbitrary point of $s$, the above argument essentially shows that the portion $s'$ of $\ell_{h'}$ that consists of the intersection points $\overline{ow}\cap \ell_{h'}$ for all points $w\in s$ must be inside $\calP$. Note that $s'$ contains $q$ since when $w=p$, we have $q=\overline{ow}\cap \ell_{h'}$. As $q\in e'$, by definition, we have $s'\subseteq s(e')$. As such, we obtain that $\overline{ow}$ must intersect $s(e')$ for any point $w\in s$. 
This proves that $\alpha\subseteq \alpha(h')$ and thus proves the lemma. 
\end{proof}

%By the above lemma, it suffices to find an optimal solution for $Q$, which can be done in $O(n\log n)$ time using the circular arc coverage algorithm~\cite{ref:AgarwalCo23,ref:LeeOn84} since $|Q|\leq n$. 
%The above assumes that $Q$ is available. Next we show that $Q$ can be computed in $O(n\log n)$ time. 

With Lemma~\ref{lem:70}, a smallest subset of $H$ for covering $\partial \calP$ can be computed as follows. (1) Compute $\hatS$. (2) Compute a smallest subset $\hatS^*$ of $\hatS$ for covering $C$. (3) Using $\hatS^*$, obtain a smallest subset of $H$ to cover $\partial \calP$. Since $|\hatS|\leq n$, the second and third steps can be done in $O(n\log n)$ time using the circle coverage algorithm in \cite{ref:AgarwalCo23,ref:LeeOn84}. The following lemma shows that the first step can be done in $O(n\log n)$ time too, using the ray-shooting queries in simple polygons~\cite{ref:ChazelleRa94,ref:ChazelleVi89,ref:HershbergerA95}. 

\begin{lemma}
Computing all arcs of $\hatS$ can be done in $O(n\log n)$ time. 
\end{lemma}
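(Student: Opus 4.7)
The plan is to mirror the ray-shooting strategy used for the lower-only case, letting the simple polygon $\calP$ itself play the role of the auxiliary polygon $R$, and letting the common intersection $\calU=\bigcap_{\overline{h}\in\overline{H}}\overline{h}$ play the role of the upper envelope. The preprocessing has two pieces: (i) compute $\calU$ in $O(n\log n)$ time by standard halfplane intersection, and store the cyclic sequence of its edges together with their outward norms as a sorted list on $\bbS$; (ii) build an $O(n)$-size ray-shooting data structure on $\calP$ with $O(\log n)$ query time in $O(n)$ preprocessing time~\cite{ref:ChazelleRa94,ref:ChazelleVi89,ref:HershbergerA95}.

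With the preprocessing done, I handle the two cases in the definition of $\alpha(h)$ separately. For every edge $e$ of $\calU$ the defining halfplane $h\in H$ is known; since $\calU\subseteq\calP$, a point $p_e$ in the relative interior of $e$ lies in $\calP$, so two ray-shoots along $\ell_h$ from $p_e$ in opposite directions return the endpoints on $\partial \calP$ of the segment $s_e\in\Gamma_h$ containing $e$. Projecting these two endpoints through $o$ onto $C$ yields $\alpha(h)$. This costs $O(\log n)$ time per edge and hence $O(n\log n)$ in total, since $\calU$ has at most $n$ edges.

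For a halfplane $h$ whose bounding line contains no edge of $\calU$, I first locate, by binary search on the precomputed sorted edge-norms of $\calU$, the basic interval of $\bbS$ that contains the norm of $\ell_h$; this pinpoints the vertex $v_h$ of $\calU$ and hence its projection $v_h'$ on $C$ in $O(\log n)$ time. I then compute $v_h''=\ell_h\cap\overline{ov_h'}$. Because $\calP$ is star-shaped with respect to $o$, deciding whether $v_h''\in\calP$ reduces to a single ray-shoot from $o$ in the direction of $v_h'$ followed by a comparison of distances from $o$. If $v_h''\notin\calP$ then $\alpha(h)$ is undefined; otherwise two further ray-shoots from $v_h''$ along $\ell_h$ recover the endpoints on $\partial \calP$ of the segment of $\Gamma_h$ containing $v_h''$, and projecting them onto $C$ gives $\alpha(h)$. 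Each halfplane is thus processed in $O(\log n)$ additional time, which combined with the preprocessing gives the claimed $O(n\log n)$ total.

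The main obstacle is the second case: one must link each such halfplane to the correct vertex of $\calU$ and decide whether the associated point $v_h''$ actually lies on a chord of $\calP$ at all. The sorted cyclic structure of $\calU$ handles the first by an $O(\log n)$ binary search, and the star-shapedness of $\calP$ with respect to $o$ handles the second by reducing containment to a single ray-shoot; no augmentation of $\calP$ analogous to the vertical-segment construction used in the lower-only case is needed.
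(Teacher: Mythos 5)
Your proposal is correct and follows essentially the same approach as the paper: compute $\calU$ in $O(n\log n)$ time, build an $O(n)$-preprocessing / $O(\log n)$-query ray-shooting structure directly on $\calP$, and then handle each halfplane in $O(\log n)$ time via two ray-shoots from a point of $e$ (when $\ell_h$ carries an edge $e$ of $\calU$) or from $v_h''$ after a binary search for $v_h$ (when it does not), projecting the resulting chord onto $C$. Your explicit check that $v_h''\in\calP$ before issuing the final two ray-shoots — done by a single ray-shoot from $o$ and a distance comparison — is a sensible refinement that the paper leaves implicit (it merely states that $\alpha(h)$ ``is not defined'' in that event), but it does not change the underlying argument.
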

\begin{proof}
First of all, we can find a circle $C$ in $O(n)$ time. 
We then compute $\calU$ in $O(n\log n)$ time. Next, we build a ray-shooting data structure for $\calP$ in $O(n)$ time so that given any ray with origin inside $\calP$, the first edge of $\calP$ hit by the ray can be computed in $O(\log n)$ time~\cite{ref:ChazelleRa94,ref:ChazelleVi89,ref:HershbergerA95}. For each halfplane $h\in H$, we compute the arc $\alpha(h)$ as follows. 

If $\ell_h$ contains an edge $e$ of $\calU$, then the maximal segment $s$ of $\ell_h\cap \calP$ containing $e$ can be computed in $O(\log n)$ time by two ray-shooting queries from any point of $e$ (following the two directions parallel to $\ell_h$). With the segment $s$, $\alpha(h)$ can be determined in additional $O(1)$ time as $\alpha(h)$ is the projection of $s$ on $C$.

If $\ell_h$ does not contain any edge of $\calU$, we first find the vertex $v_h$ by binary search on $\calU$. Let $v_h''$ be the intersection of $\ell_h$ and the ray from $o$ to $v_h$. By definition, the arc $\alpha(h)$ is the projection of $s$ on $C$, where $s$ is the maximal segment of $\ell_h\cap \calP$ containing $v_h''$. Hence, it suffices to compute $s$, which can be done in $O(\log n)$ time by two ray-shooting queries from $v_h''$ (following the two directions parallel to $\ell_h$).

As such, $\alpha(h)$ for each halfplane $h\in H$ can be computed in $O(\log n)$ time and therefore the total time for computing $\hatS$ is $O(n\log n)$.
\end{proof}

The following theorem summarizes our result. 

\begin{theorem}\label{theo:star}
The star-shaped polygon coverage problem is solvable in $O(n\log n)$ time, where $n$ is the sum of the number of vertices of the polygon and the number of halfplanes.     
\end{theorem}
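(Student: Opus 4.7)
The plan is to package the ingredients developed in this section into a single algorithm and verify the running time. First I would reduce the star-shaped polygon coverage problem to a circle coverage instance on the circle $C$ centered at $o$: for each halfplane $h\in H$, form the set $\Gamma_h$ of maximal segments of $\ell_h\cap\calP$ and project each such segment $s$ to the arc $\alpha_s$ of $C$ it occludes from $o$, collecting all these arcs into $S$. By Lemma~\ref{lem:60}, a smallest subset of $H$ covering $\partial\calP$ corresponds to a smallest subset of $S$ covering $C$, so in principle one could invoke the $O(|S|\log|S|)$ circle-coverage algorithm of Lee and Lee / Agarwal and Har-Peled. The obstruction, of course, is that $|S|$ may be $\Theta(n^2)$.

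Next I would prune $S$ down to the subset $\hatS=\{\alpha(h)\ |\ h\in H\}$ of size at most $n$, where $\alpha(h)$ is defined via the common intersection $\calU$ of the complementary halfplanes $\overline{H}$: if $\ell_h$ contains an edge of $\calU$, take $\alpha(h)$ to be the arc associated with the unique segment of $\Gamma_h$ containing that edge; otherwise take $\alpha(h)$ to be the arc of $S_h$ containing the projection on $C$ of the vertex $v_h$ of $\calU$ whose incident-edge norms bound the basic interval of $\bbS$ containing the norm of $h$. The heart of the argument is Lemma~\ref{lem:70}, which shows that every arc in $S\setminus\hatS$ is contained in some arc of $\hatS$; consequently a smallest cover of $C$ by $\hatS$ is also a smallest cover by $S$. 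This domination property is what lets us replace the quadratic-sized $S$ by the linear-sized $\hatS$ without changing the optimum.

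The remaining steps are algorithmic. Feasibility is checked in $O(n\log n)$ time by computing $\calU$ and sweeping a ray around $o$. The set $\hatS$ itself is built in $O(n\log n)$ time by first computing $\calU$, then erecting a ray-shooting structure on $\calP$ in $O(n)$ time (via~\cite{ref:ChazelleRa94,ref:ChazelleVi89,ref:HershbergerA95}), and finally, for each $h$, locating the relevant edge or vertex of $\calU$ by binary search and firing two rays along $\ell_h$ from an appropriate point of $\ell_h\cap\calU$ to recover the endpoints of $\alpha(h)$, which takes $O(\log n)$ time per halfplane. Running the circle-coverage algorithm of~\cite{ref:AgarwalCo23,ref:LeeOn84} on $\hatS$ produces an optimal $\hatS^*$ in $O(n\log n)$ time, and lifting each selected arc back to a defining halfplane yields a smallest subset $H^*\subseteq H$ covering $\partial\calP$ by the analogue of Corollary~\ref{coro:lower} obtained from Lemma~\ref{lem:60}. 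The total running time is $O(n\log n)$. The main obstacle is Lemma~\ref{lem:70}; once that geometric fact is in hand (it is proved by the tangent-line case analysis that places the intersection of $\ell_{h'}$ with $\ell_h$ on the far side of the segment $s$ from $o$), the rest is an orchestration of existing tools.
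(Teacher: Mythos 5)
Your proposal is correct and follows essentially the same route as the paper: reduce to circle coverage via Lemma~\ref{lem:60}, prune $S$ to the linear-size $\hatS$ via the definition of $\alpha(h)$ and the domination Lemma~\ref{lem:70}, build $\hatS$ in $O(n\log n)$ time with ray-shooting on $\calP$, and run the circle-coverage algorithm. One small slip: in the case where $\ell_h$ contains no edge of $\calU$, the ray origin is $v_h''=\ell_h\cap\overline{ov_h}$, not ``a point of $\ell_h\cap\calU$'' (which is empty then), but this does not affect the argument.
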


%Note that the space of the algorithm is $O(n)$ (indeed, the circular coverage algorithm uses $O(n)$ space~\cite{ref:AgarwalCo23,ref:LeeOn84}).

\paragraph{Computing instance-optimal $\epsilon$-kernels in $\bbR^2$.}
As discussed in Section~\ref{sec:intro}, computing an instance-optimal $\epsilon$-kernel for a set of $n$ points in the plane can be reduced in $O(n\log n)$ time to an instance of the star-shaped polygon coverage problem with $n$ halfplanes and a star-shaped polygon of $n$ vertices. Consequently, with our algorithm for the star-shaped polygon coverage problem, an instance-optimal $\epsilon$-kernel can be computed in $O(n\log n)$ time. 

\paragraph{Covering an $x$-monotone polyline.} A special case of the star-shaped polygon coverage problem is as follows. Given in the plane an $x$-monotone polyline $\calP$ of $n$ vertices and a set $H$ of $n$ lower halfplances, we aim to compute a smallest subset of halfplanes so that their union covers $\calP$. If we consider a point $o$ with $y$-coordinate at $+\infty$, then the problem becomes a special case of the star-shaped polygon coverage problem and thus can be solved in $O(n\log n)$ time. 

\subsection{Lower bound}
We finally prove the lower bound below, which justifies the optimality of Theorem~\ref{theo:star}. 
\begin{theorem}
Solving the star-shaped polygon coverage problem requires $\Omega(n\log n)$ time under the algebraic decision tree model. 
\end{theorem}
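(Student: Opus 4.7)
The plan is to reduce from the set equality problem, which has an $\Omega(n\log n)$ lower bound under the algebraic decision tree model~\cite{ref:Ben-OrLo83}. As noted at the end of Section~\ref{sec:star}, covering an $x$-monotone polyline by lower halfplanes is already a special case of the star-shaped polygon coverage problem (obtained by placing $o$ at $y=+\infty$), so it suffices to establish the lower bound for this polyline version. My construction essentially reuses the parabola-and-tangent gadget from the point-sorted case of Theorem~\ref{theo:lb-lowerhalfplane}, with the original input points now playing the role of polyline vertices and with additional ``valley'' vertices inserted between them so that the resulting polyline is actually coverable.

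Given sorted distinct $A=\{a_1,\ldots,a_n\}$ and arbitrary $B=\{b_1,\ldots,b_n\}$, I would construct in $O(n)$ time an $x$-monotone polyline $\Pi$ and a set $H$ of $n$ lower halfplanes as follows. Set $v_i=(a_i,a_i^2)$ for each $i\in\{1,\ldots,n\}$, and set $w_i=((a_i+a_{i+1})/2,-M)$ for each $i\in\{1,\ldots,n-1\}$, where $M$ is chosen so large that every $w_i$ lies strictly below every tangent line to the parabola $y=x^2$ at any $(b_j,b_j^2)$; such an $M$ can be computed from the input in $O(n)$ time. The polyline $\Pi$ has the $2n-1$ vertices $v_1,w_1,v_2,w_2,\ldots,w_{n-1},v_n$ taken in order and connected by straight segments, so $\Pi$ is $x$-monotone. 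For each $b_j$, let $h_j=\{(x,y):y\le 2b_jx-b_j^2\}$ be the lower halfplane bounded by the tangent to $y=x^2$ at $(b_j,b_j^2)$.

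The argument then hinges on two properties: (i)~since the parabola lies strictly above each of its tangent lines (except at the tangent point itself), $v_i\in h_j$ iff $(a_i-b_j)^2\le 0$, iff $a_i=b_j$; and (ii)~every $w_i$ is contained in every $h_j$ by the choice of $M$. From (i), covering $v_i$ forces any feasible solution to contain some $h_j$ with $b_j=a_i$, and since the $a_i$'s are distinct, any feasible cover has size at least $n$. Conversely, if every $a_i$ equals some $b_{j(i)}$, then the $n$ halfplanes $\{h_{j(1)},\ldots,h_{j(n)}\}$ cover $\Pi$: each chord $\overline{v_iw_i}$ or $\overline{w_{i-1}v_i}$ has both endpoints in $h_{j(i)}$ by (i) and (ii), so the whole chord lies in $h_{j(i)}$ by convexity. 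Hence $\Pi$ is coverable by exactly $n$ halfplanes iff $A\subseteq\{b_1,\ldots,b_n\}$, which (given $|A|=|B|=n$ with $A$ distinct) is equivalent to $A=B$.

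Because the reduction runs in $O(n)$ time, any $o(n\log n)$-time algorithm for the star-shaped polygon coverage problem would yield an $o(n\log n)$-time algorithm for set equality, contradicting Ben-Or's lower bound~\cite{ref:Ben-OrLo83}. The only mildly delicate step in this plan is verifying that chord interiors (and not merely the individual vertices) are covered in the ``yes'' direction, which follows immediately from the convexity of halfplanes once one has property~(ii); every other step is routine.
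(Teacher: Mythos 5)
Your proposal is correct, and the two arguments differ only in how they package the reduction. The paper's proof reuses Theorem~\ref{theo:lb-lowerhalfplane} as a black box: it reduces from the point-sorted lower-halfplane coverage problem, building a ``comb''-shaped $x$-monotone polyline from an \emph{arbitrary} instance (vertical teeth $\overline{p_ip_i'}$ joined by horizontal floor segments at height $y_\ell$, where $y_\ell$ is chosen so that a halfplane covering $p_i$ necessarily covers its two incident polyline edges). You instead collapse the two-step chain (set equality $\to$ point-sorted lower-halfplane coverage $\to$ polyline coverage) into a single direct reduction from set equality, re-inlining the parabola-and-tangent gadget of Theorem~\ref{theo:lb-lowerhalfplane} and using a zigzag polyline with explicit deep valley vertices $w_i$. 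The trade-off is exactly what you would expect: the paper's version is modular and its $y_\ell$ calibration must be valid for every possible input instance, while yours is self-contained and lets you pick $M$ trivially (e.g., $M=3N^2+1$ for $N$ the largest input magnitude) because you control the gadget. Your verification of properties (i)--(ii) and the convexity argument for chord coverage are sound, and the observation that $A\subseteq B$ is equivalent to $A=B$ under $|A|=|B|=n$ with $A$ distinct is the same one the paper relies on implicitly; both routes establish the $\Omega(n\log n)$ bound.
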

\begin{proof}
We prove an $\Omega(n\log n)$ lower bound for the $x$-monotone polyline coverage problem, which imlies the same lower bound for the more general star-shaped polygon coverage problem.
We use a reduction from the point-sorted case of the lower halfplane coverage problem, which has an $\Omega(n\log n)$ time lower bound under the algebraic decision tree model as proved in Theorem~\ref{theo:lb-lowerhalfplane}. 

Let $P$ be a set of $n$ points and $H$ a set of $n$ lower halfplanes, such that points of $P=\{p_1,p_2,\ldots,p_n\}$ are  given sorted by their $x$-coordinates. We create an instance of the $x$-monotone polyline coverage problem in $O(n)$ time as follows. 

\begin{figure}[t]
\begin{minipage}[t]{\textwidth}
\begin{center}
\includegraphics[height=0.8in]{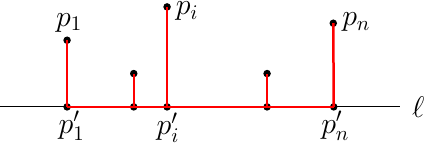}
\caption{\footnotesize Illustrating the (weakly) polyline $\calP$, which is formed by all red segments.}
\label{fig:polyline}
\end{center}
\end{minipage}
%\vspace{-0.1in}
\end{figure}

First of all, we use the same set $H$ as the set of lower halfplanes in the polyline coverage problem. Next we create an $x$-monotone polyline $\calP$ based on the points of $P$. Let $\ell$ be a horizontal line below all points of $P$ and with $y$-coordinate equal to $y_{\ell}$ to be determined later. For each point $p_i\in P$, let $p_i'$ be the vertical projection of $p_i$ on $\ell$. Define $\calP$ as the (weakly) $x$-monotone polyline formed by the following segments (see Fig.~\ref{fig:polyline}): $\overline{p_ip_i'}$ for all $1\leq i\leq n$, and $\overline{p'_{i}p'_{i+1}}$ for all $1\leq i\leq n-1$. 

It remains to determine the $y$-coordinate $y_{\ell}$. The main idea is to make the line $\ell$ low enough (i.e., make $y_{\ell}$ small enough) so that whenever a halfplane $h$ of $H$ covers a point $p_i$, $h$ must cover both $\overline{p_ip_{i+1}}$ and $\overline{p_{i-1}p_{i}}$ (if $i=1$, only covering $\overline{p_{i}p_{i+1}}$ is necessary; if $i=n$, only covering $\overline{p_{i-1}p_{i}}$ is necessary). To achieve the effect, we can determine $y_{\ell}$ as follows. Let $\kappa$ be the largest absolute slope value of all halfplane bounding lines of $H$ (note that $\kappa$ is a finite value since no halfplane bounding line is vertical). 
Consider a segment $\overline{p_ip_{i+1}}$ for some $1\leq i\leq n-1$. Based on the $y$-coordinate of $p_i$ and the value $\kappa$, we can easily calculate a largest value $y_1$ (resp., $y_2$) such that for any lower halfplane $h$ (not necessarily in $H$) with absolute slope value equal to $\kappa$ and covering $p_i$ (resp., $p_{i+1}$), $h$ must cover $\overline{p_ip_{i+1}}$ if $y_{\ell}$ were equal to $y_1$ (resp., $y_2$). Let $y(i,i+1)=\min\{y_1,y_2\}$. As such, for any lower halfplane $h$ (not necessarily in $H$) whose  absolute slope value is at most $\kappa$ such that $h$ covers either $p_i$ or $p_{i+1}$, $h$ must cover $\overline{p_ip_{i+1}}$ if $y_{\ell}$ were equal to $y(i,i+1)$. Now set $y_{\ell}=\min_{1\leq i\leq n-1}y(i,i+1)$. According to the above discussion, our sought effect can be achieved with this value $y_{\ell}$. With the effect, it is not difficule to see that a subset of $H$ covering $P$ if and only if the same subset covers the polyline $\calP$. This completes the reduction (which takes $O(n)$ time). 
\end{proof}

\section{The general halfplane coverage}
\label{sec:general}

In this section, we consider the general halfplane coverage problem. Given in the plane a set $P$ of $n$ points and a set $H$ of $n$ halfplanes, the goal is to compute a smallest subset of halfplanes so that their union covers all points of $P$. Note that $H$ may have both upper and lower halfplanes. We present an $O(n^{4/3}\log^{5/3}n\log\log^{O(1)})$ time algorithm for the problem. 

For a halfplane $h$, denote by $\overline{h}$ the complement halfplane of $h$. Let $\overline{H}=\{\overline{h}\ |\ h\in H\}$. 

We first determine whether the union of all halfplanes of $H$ is the entire plane. This can be done by computing the common intersection $\calU$ of all halfplanes of $\overline{H}$, which can be done in $O(n\log n)$ time. Notice that $\calU=\emptyset$ if and only if the union of all halfplanes of $H$ is the entire plane. Depending on whether $\calU=\emptyset$, our algorithm will proceed in different ways. In what follows, we first discuss the case $\calU\neq \emptyset$ in Section~\ref{sec:nonempty} and the other case is solved in Section~\ref{sec:empty}.

\subsection{The case $\boldsymbol{\calU\neq \emptyset}$}
\label{sec:nonempty}
Assuming that $\calU\neq \emptyset$, we solve this case in $O(n\log n)$ by an algorithm similar to those in the previous two sections. One may consider it a {\em cyclic version} of the lower-only halfplane coverage algorithm in Section~\ref{sec:lower} or a {\em point version} of the start-shaped polygon coverage algorithm in Section~\ref{sec:star}.  
Since $\calU\neq\emptyset$, let $o$ be any point inside $\calU$. Then, no halfplanes of $H$ cover $o$. Let $C$ be a circle containing $o$ and all points of $P$. 

We reduce the problem to a {\em circular-point coverage problem}: Given on $C$ a set $P'$ of points and a set $S$ of arcs, the goal is to compute a subset of arcs whose union covers all points. This problem is different from the circle coverage problem in Section~\ref{sec:star} in that here we only need to cover points of $P'$ instead of the entire circle $C$. To solve this new problem, we will show that the problem can be reduced to the circle coverage problem and consequently applying the circle coverage algorithm~\cite{ref:AgarwalCo23,ref:LeeOn84} can solve the problem.

\subsubsection{Reducing to the circular-point coverage problem}

We order the points of $P$ counterclockwise around $o$ and let $p_1,p_2,\ldots,p_n$ be the ordered list. For any point $p$ in $C$ with $p\neq o$, we define its projection point on $C$ as the intersection between $C$ and the ray from $o$ to $p$. For each point $p_i\in P$, let $p_i'$ be its projection on $C$. Hence, the points $p'_1,p'_2,\ldots,p'_n$ are ordered on $C$ counterclockwise. Let $P'$ be the set of all these projection points. 

We consider $P'$ as a cyclic sequence of points. Each point $p'_i\in P'$ refers to $p_{j}'$ with $j=i \mod n$ if $i>n$. For two indices $i,j$, we use $P'[i,j]$ to denote the subsequence of points $p_i',p_{i+1}',\ldots,p'_j$. We define the same notation for the points of $P$. 

\begin{figure}[t]
\begin{minipage}[t]{\textwidth}
\begin{center}
\includegraphics[height=2.0in]{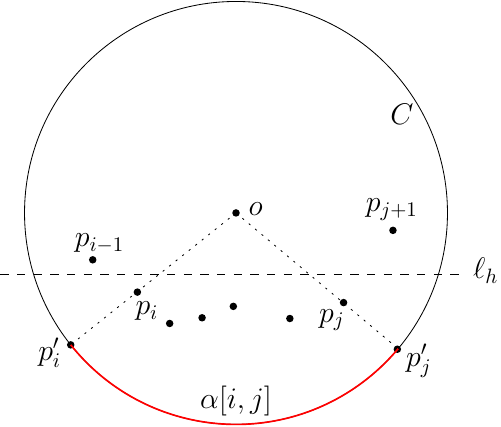}
\caption{\footnotesize Illustrating the definition of arcs $\alpha[i,j]$.}
\label{fig:general00}
\end{center}
\end{minipage}
\vspace{-0.1in}
\end{figure}

For each halfplane $h\in H$, we define a set $S_h$ of arcs on $C$ as follows. 
A subsequence $P[i,j]$ of $P$ is called a {\em maximal subsequence covered by $h$} if all points of $P[i,j]$ are in $h$ but neither $p_{i-1}$ or $p_{j+1}$ is in $h$. Let $\Gamma_h$ denote the set of all maximal subsequences of $P$ covered by $h$. For each subsequence $P[i,j]$ of $\Gamma_h$, we define an arc $\alpha[i,j]$ on $C$ starting from $p_i'$ counterclockwise until $p_j'$ (see Fig.~\ref{fig:general00}). Let  $S_h$ be the set of all arcs thus defined by the subsequences of $\Gamma_h$. Define $S=\bigcup_{h\in H}S_h$.

Now consider the circular-point coverage problem for $S$ and $P'$ on $C$: Compute a smallest subset of arcs of $S$ whose union covers $P'$. We will discuss later that the problem can be solved in $O((|S|+|P'|)\log(|S|+|P'|))$ time. Suppose we have an optimal solution $S^*$. We create a subset $H^*$ of $H$ as follows. For each arc $\alpha$ of $S^*$, if $\alpha$ is defined by a halfplane $h\in H$, then we add $h$ to $H^*$. We have the following lemma, analogous to Lemmas \ref{lem:10} and \ref{lem:60}. Its proofs are also similar and we include it here for completeness. 

\begin{lemma}\label{lem:100}
\begin{enumerate}
    \item The union of all halfplanes of $H^*$ covers $P$.   
    \item $P'$ can be covered by $k$ arcs of $S$ if and only if $P$ can be covered by $k$ halfplanes of $H$. 
\end{enumerate}
\end{lemma}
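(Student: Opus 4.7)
The plan is to mimic the proofs of Lemmas~\ref{lem:10} and~\ref{lem:60}, replacing the upper envelope (or $\partial\calP$) by the common intersection $\calU'=\bigcap_{h\in H'}\overline{h}$ of the complementary halfplanes of a candidate solution, and replacing vertical (or visibility-to-$o$) projection by radial projection from $o$ onto $C$.

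For part~1, the proof is essentially immediate from the definitions. I would take an arbitrary $p_i\in P$, observe that $p_i'\in P'$ must lie in some arc $\alpha\in S^*$, note that by construction $\alpha=\alpha[a,b]\in S_h$ for the halfplane $h$ that was added to $H^*$ because of $\alpha$, so $i$ lies (cyclically) in $[a,b]$; then the definition of $\Gamma_h$ forces $p_i\in h\subseteq\bigcup H^*$. The ``if'' direction of part~2 is an identical argument: from any $S'\subseteq S$ of size $k$ covering $P'$, replace each arc by a defining halfplane to get $H'\subseteq H$ of size at most $k$, and the same indices-containment argument shows $H'$ covers $P$.

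The ``only if'' direction is the substantive step, and here I would use an envelope-style analog. Given $H'\subseteq H$ with $|H'|=k$ covering $P$, set $\calU'=\bigcap_{h\in H'}\overline{h}$; since $o\in\calU\subseteq\calU'$ this is a non-empty convex polygon containing $o$ in its interior, and $\partial\calU'$ has at most $k$ edges because each $\ell_h$ ($h\in H'$) contributes at most one. For each edge $e\subset\ell_h$ of $\partial\calU'$ let $\widetilde{\alpha}_e\subset C$ be its radial shadow from $o$; because $o$ is interior to $\calU'$, the arcs $\widetilde{\alpha}_e$ partition $C$. Since $H'$ covers $P$, every $p_i\in P$ lies outside $\calU'$, so the segment $\overline{op_i}$ exits $\calU'$ through a unique edge $e\subset\ell_h$, giving $p_i\in h$ and $p_i'\in\widetilde{\alpha}_e$. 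Because each $\widetilde{\alpha}_e$ is a contiguous arc on $C$ and the cyclic order of $P'$ on $C$ agrees with the angular order of $P$ around $o$, the points of $P$ whose projections fall in $\widetilde{\alpha}_e$ form one contiguous cyclic block of $P$, all inside $h$, hence contained in a single maximal subsequence $P[a,b]\in\Gamma_h$. I then add the corresponding arc $\alpha[a,b]\in S_h$ to $S'$, one per edge (with $\widetilde{\alpha}_e\cap P'\ne\emptyset$), giving $|S'|\le k$; and every $p_i'$ is covered because it lies in some $\widetilde{\alpha}_e$, which in turn is covered (on $P'$) by $\alpha[a,b]$.

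The step I expect to need the most care is the claim that the projected points of $P$ falling inside a single shadow arc $\widetilde{\alpha}_e$ all belong to the \emph{same} maximal subsequence of $\Gamma_h$. This is the circular counterpart of the fact in Lemma~\ref{lem:10} that all points of $P$ above the projection $e'$ of an edge $e$ of $\calU$ lie in a single $P[i,j]\in\Gamma_h$. The key is to combine the contiguity of $\widetilde{\alpha}_e$ with the fact $o\notin h$, so that any $p\in P$ whose ray from $o$ crosses $e$ (and which is outside $\calU'$, as every point of $P$ is) must actually lie in $h$; this rules out any ``gap'' of non-$h$-covered points interleaving the block and lets the single arc $\alpha[a,b]$ serve for the entire edge.
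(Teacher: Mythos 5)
Your proof is correct and follows essentially the same route as the paper's: part~1 and the ``if'' direction of part~2 by tracing an arc back to its defining halfplane, and the ``only if'' direction by passing to the convex region $\calU'=\bigcap_{h\in H'}\overline{h}$, taking the radial shadows of its edges, and showing that the points of $P'$ falling in a single shadow $\widetilde{\alpha}_e\subset C$ (with $e\subset\ell_h$) all come from $h$-covered points forming one contiguous block, hence lie in a single arc of $S_h$. One non-load-bearing slip worth noting: the claim that the shadows $\widetilde{\alpha}_e$ \emph{partition} $C$ because $o$ is interior to $\calU'$ is false when $\calU'$ is unbounded (which can happen, e.g.\ when $|H'|=2$); however your argument never actually needs this, since the fact that each $p_i$ lies outside $\calU'$ (so $\overline{op_i}$ exits through some edge $e$) already places $p_i'$ in $\widetilde{\alpha}_e$, which is all that is used.
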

\begin{proof}
For any point $p_i\in P$, since $p_i'$ must be covered by an arc $\alpha$ of $S^*$, by definition, $p_i$ must be covered by the halfplane $h$ that defines $\alpha$ and $h$ is in $H^*$. As such, the first lemma statement follows. 

We next prove the second lemma statement. 

Suppose $S$ has a subset $S'$ of $k$ arcs that together cover $P'$. Let $H'$ be the subset of halfplanes that define the arcs of $S'$. Hence, $|H'|\leq k$. Using the same argument as that for the first 
lemma statement, the union of all halfplanes of $H'$ must cover $P$. As such, $P$ can be covered by $k$ halfplanes of $H$. 

On the other hand, suppose $H$ has a subset $H'$ of $k$ halfplanes that together cover $P$. We create a subset $S'\subseteq S$ of at most $k$ arcs that form a coverage of $P'$, as follows. 

Let $\calU'$ be the common intersection of the complements of the halfplanes of $H'$. Note that $\calU'\neq \emptyset$ (e.g., $o\in \calU'$). Since the union of halfplanes of $H'$ covers $P$, we have $\calU'\cap P=\emptyset$. For each edge $e$ of $\calU'$, let $h$ be the halfplane whose bounding line contains $e$. Let $e'$ be the arc of $C$ that is the projection of $e$ on $C$ (i.e., $e'$ consists of the projections of all points of $e$ on $C$). As $e$ is an edge of $\calU'$, no point of $P$ is in the triangle formed by $o$ and $e$, implying that points of $P'$ covered by $e'$ must form a subsequence of $P'[i,j]$ of $P'$. This means that $e'$ must be contained in at most one arc $\alpha$ of $S_h$ and such an arc must exist if $e'$ covers a point of $P'$. We add $\alpha$ to $S'$ if $\alpha$ exists. As such, since the bounding line of each halfplane contains at most one edge of $\calU'$, the size of $S'$ is at most $k$. We next show that the arcs of $S'$ together form a coverage of $P'$. 

Consider a point $p_i'\in P'$. By the definition of $\calU'$, $\calU'$ must have an edge $e$ such that its projection $e'$ on $C$ covers $p_i'$. According to the above discussion, $h$ must defines an arc $\alpha$ in $S_h$ that contains $e'$, where $h$ is the halfplane whose bounding line contains $e$. By the definition of $S'$, $\alpha$ must be in $S'$. This proves that the union of all arcs of $S'$ covers $P'$. 

The second lemma statement thus follows. 
\end{proof}

With Lemma~\ref{lem:100}, we can obtain results analogous to Corollary~\ref{coro:lower}. 
%In particular, no two arcs of $A_1$ are defined by the same halfplane and $H_1$ is an optimal solution to $H$ and $P$. 
In particular, $H^*$ is an optimal solution of $H$ for covering $P$.

The above gives an algorithm for computing a smallest subset of $H$ for covering $P$. However, the algorithm is not efficient because $|S|$ could be $\Omega(n^2)$ (since $|\Gamma_h|$ and thus $|S_h|$ could be $\Theta(n)$ for each halfplane $h\in H$). In the following, we reduce the time to $O(n\log n)$ by showing that a smallest subset of $S$ for covering $P'$ can be computed in $O(n\log n)$ time by using only a small subset of $S$. 

\subsubsection{Improvement}

As for the lower-only halfplane coverage problem, we will define a subset $\hatS\subseteq S$ such that $\hatS$ contains at most one arc $\alpha(h)$ defined by each halfplane $h\in H$ (and thus $|\hatS|\leq n$) and $\hatS$ contains a smallest subset of $S$ for covering $P'$. Further, we will show that $\hatS$ can be computed in $O(n\log n)$ time. 

\paragraph{Defining $\boldsymbol{\alpha(h)}$ and $\boldsymbol{\hatS}$.}
For each halfplane $h\in H$, we define an arc $\alpha(h)\in S_h$ on $C$ as follows. 
%Let $\calU$ be the common intersection of all halfplanes of $\overline{H}$. Recall that $\calU\cap P=\emptyset$ and $o\in \calU$.

We define $\bbS$ and norms for halfplanes in the same way as in Section~\ref{sec:star}. As before, the norms of all edges of $\calU$ partition $\bbS$ into {\em basic intervals} and the endpoints of each basic interval are norms of two ajacent edges of $\calU$. 

Depending on whether the bounding line $\ell_h$ of $h$ contains an edge of $\calU$, there are two cases. 

\begin{enumerate}
\item If $\ell_h$ contains an edge $e$ of $\calU$, then since $\calU\cap P=\emptyset$, the triangle formed by $e$ and $o$ does not contain any point of $P$. Therefore, $e'$ must be contained in at most one arc $\alpha$ of $S_h$, where $e'$ is the projection of $e$ on $C$ (see Fig.~\ref{fig:general10}). If such an arc $\alpha$ exists, then define $\alpha(h)$ to be $\alpha$; otherwise, $\alpha(h)$ is not defined.

\item If $\ell_h$ does not contain any edge of $\calU$, then let $I$ be the basic interval of $\bbS$ that contains the norm of $h$. Let $e_1$ and $e_2$ be the two adjacent edges of $\calU$ whose norms are endpoints of $I$. 
%As discussed before, $e_1$ and $e_2$ are two adjacent edges of $\calU$.  
Define $v_h$ to be the vertex of $\calU$ incident to both $e_1$ and $e_2$ (see Fig.~\ref{fig:general20}). Let $v_h'$ be the projection of $v_h$ on $C$. Define $\alpha(h)$ to be the unique arc (if exists) of $S_h$ containing $v'_h$. 
\end{enumerate}

Define $\hatS=\{s(h)\ |\ h\in H\}$. 

\begin{figure}[t]
\begin{minipage}[t]{\textwidth}
\begin{center}
\includegraphics[height=1.5in]{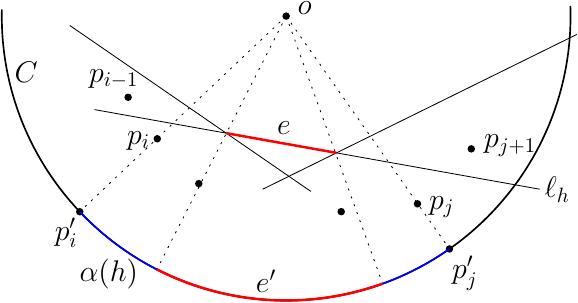}
\caption{\footnotesize Illustrating the definition of the arc $\alpha(h)$ for the case where $\ell_h$ contains an edge $e$ of $\calU$. $\alpha(h)$ is the (blue) arc counterclockwise from $p_i'$ to $p_j'$. Only a portion of $C$ is shown.}
\label{fig:general10}
\end{center}
\end{minipage}
%\vspace{-0.1in}
\end{figure}

\begin{figure}[t]
\begin{minipage}[t]{\textwidth}
\begin{center}
\includegraphics[height=1.7in]{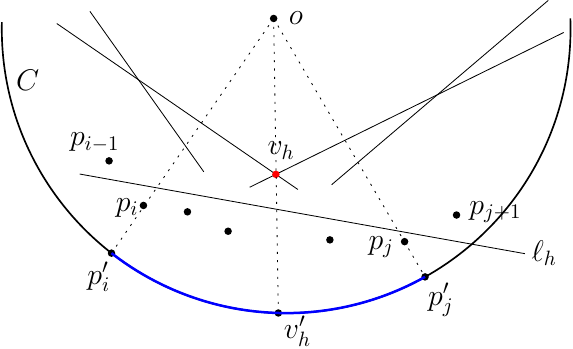}
\caption{\footnotesize Illustrating the definition of the arc $\alpha(h)$ for the case where $\ell_h$ does not contain any edge of $\calU$. $\alpha(h)$ is the (blue) arc counterclockwise from $p_i'$ to $p_j'$. Only a portion of $C$ is shown.}
\label{fig:general20}
\end{center}
\end{minipage}
\vspace{-0.1in}
\end{figure}

The following lemma implies that a smallest subset of arcs of $\hatS$ whose union covers $P'$ is also a smallest subset of arcs of $S$ covering $P'$. 

\begin{lemma}\label{lem:110}
For any arc $\alpha\in S\setminus \hatS$, $\hatS$ must have an arc $\alpha'$ such that $\alpha\subseteq \alpha'$.
\end{lemma}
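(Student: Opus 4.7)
The plan is to mirror the proof of Lemma~\ref{lem:70} almost verbatim, replacing the role of the star-shaped polygon boundary $\partial \calP$ by the discrete point set $P$, and using in its place the fact that $P\cap\calU=\emptyset$ (which holds because $H$ covers $P$, so no point of $P$ can lie in the complement of every halfplane of $H$, i.e., in $\calU$). Throughout, I will write $\alpha=\alpha[i,j]\in S\setminus\hatS$ with defining halfplane $h\in H$, and I will produce a witness $h'\in H$ such that $\alpha\subseteq\alpha(h')$ and $\alpha(h')\in\hatS$.

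First, I split into the same two cases as in the definition of $\alpha(h)$. If $\ell_h$ contains an edge $e$ of $\calU$, let $e'$ be its projection on $C$; then $\alpha(h)$ (the unique arc of $S_h$ containing $e'$, when it exists) differs from $\alpha$, so $\alpha$ is disjoint from $e'$. If $\ell_h$ contains no edge of $\calU$, then $\alpha$ does not contain $v_h'$. By the cyclic symmetry of the situation I may assume without loss of generality that $\alpha$ lies counterclockwise-before the reference point ($e'$ or $v_h'$) along $C$, so in particular $p_j'$ is the CCW-terminal endpoint of $\alpha$ nearest the reference point.

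Next, to choose the witness $h'$: since $p_j\in P$ and $P\cap\calU=\emptyset$, the ray $\overrightarrow{op_j}$ exits $\calU$ at some point $q\in\partial\calU$; let $\tilde e$ be the edge of $\calU$ containing $q$ and $h'$ the halfplane with $\tilde e\subseteq\ell_{h'}$. Then $\alpha(h')$ falls in the first case of the definition and is defined as the arc $\alpha[i^*,j^*]$ projecting the maximal subsequence $P[i^*,j^*]\in\Gamma_{h'}$ that contains $p_j$ (this subsequence exists because $p_j$ is in $h'$ and $p_j'$ lies in the arc of $C$ associated with $\tilde e$). It therefore suffices to show that every $p_k\in P[i,j]$ lies in $h'$, since then by maximality the entire $P[i,j]$ is contained in $P[i^*,j^*]$, giving $\alpha\subseteq\alpha(h')$.

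The geometric step that does the work is identical to the one in Lemma~\ref{lem:70}: in the first case, $\ell_h$ and $\ell_{h'}$ both support the convex polygon $\calU$ along distinct edges, and the WLOG assumption forces their intersection $q'=\ell_h\cap\ell_{h'}$ to lie on the CCW side of $\overrightarrow{op_j}$ past $p_j$; in the second case, the analogous role is played by the line through $v_h$ parallel to $\ell_h$, which is tangent to $\calU$ at $v_h$, and the slope comparison shows $\ell_{h'}$ meets $\ell_h$ beyond $p_j$. Either way, for any $p_k$ at polar angle between those of $p_i$ and $p_j$, the segment $\overline{op_k}$ crosses $\ell_{h'}$ inside $\calU$, and since $p_k\in h$ lies beyond that crossing, $p_k$ must lie on the $h'$-side of $\ell_{h'}$. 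The main obstacle is bookkeeping the cyclic/polar version of the ``left of $e'$'' argument from the previous two lemmas: one must verify that the convexity of $\calU$ together with the WLOG choice indeed places $q'$ (or the analogous intersection) at an angular location beyond $p_j$, so that the angular wedge spanned by $P[i,j]$ is contained in $h\cap h'$. Once this is done, the remaining logic is entirely parallel to the proof of Lemma~\ref{lem:70}.
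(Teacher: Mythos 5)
Your proposal is correct and follows essentially the same line as the paper's own proof. You split into the same two cases (whether $\ell_h$ supports an edge of $\calU$), choose the witness $h'$ the same way (via the exit point $q$ of $\overrightarrow{op_j}$ from $\calU$, where the paper equivalently uses $p_j''=\ell_h\cap\overline{op_j}$ — same ray, same $q$), observe that $\alpha(h')$ is of the first type and contains $p_j'$, and reduce the claim to showing each $p_k\in P[i,j]$ lies in $h'$ by arguing $\overline{op_k}$ crosses $\ell_{h'}$ on the $o$-side of $p_k$. The only cosmetic difference is that you phrase the WLOG and the intersection location in cyclic/angular language where the paper fixes $\ell_h$ horizontal and says ``left of $e$''; the paper also spells out the final bookkeeping (identifying $P[i,j]\subseteq P[i_1,j_1]=P[i',j']$ via disjointness of maximal subsequences) that you correctly compress into ``by maximality.''
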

\begin{proof}
Consider an arc $\alpha\in S\setminus \hatS$. Let $h$ be the halfplane that defines $\alpha$, i.e., $\alpha\in S_h$. By definition, $\alpha$ is defined by a subseqeunce $P[i,j]$ of $\Gamma_h$. 

Without loss of generality, we assume that $\ell(h)$ is horizontal and $o$ is above $\ell(h)$. Depending on whether $\ell_h$ contains an edge on $\calU$, there are two cases.

\paragraph{$\boldsymbol{\ell_h}$ contains an edge of $\boldsymbol{\calU}$.}
Suppose $\ell_h$ contains an edge $e$ on $\calU$. Then let $e'$ be the projection of $e$ on $C$. Note that $\alpha$ must be disjoint from $e'$. Indeed,
by definition, either $e'$ is disjoint from any arc of $S_h$ or $e'$ is contained in $\alpha(h)$ of $S_h$. Since $\alpha(h)\in \hatS$ and $\alpha\not\in \hatS$, it follows that $\alpha$ must be disjoint from $e'$ because arcs of $S_h$ are pairwise disjoint (see Fig.~\ref{fig:generalproof10}). 

\begin{figure}[t]
\begin{minipage}[t]{\textwidth}
\begin{center}
\includegraphics[height=1.8in]{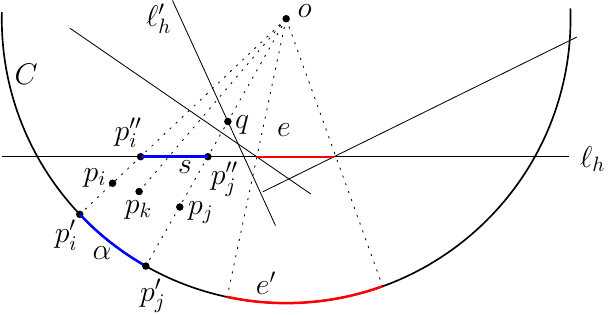}
\caption{\footnotesize Illustrating the proof of Lemma~\ref{lem:110} for the case where $\ell_h$ contains an edge $e$ of $\calU$.}
\label{fig:generalproof10}
\end{center}
\end{minipage}
\vspace{-0.1in}
\end{figure}

As $p_i\in h$ and $o\not\in h$, $\overline{op_i}$ intersects $\ell_h$, say, at a point denoted by $p_i''$. Similarly, let $p_j''$ be the intersection of $\overline{op_j}$ and $\ell_h$. It is not difficult to see that $\alpha$ is the projection of the segment $s=\overline{p_i''p_j''}$ on $C$. Since $e'$ and $\alpha$ are disjoint, we obtain that $s$ and $e$ are also disjoint. Without loss of generality, we assume that $s$ is to the left of $e$ (see Fig.~\ref{fig:generalproof10}). 

Note that $p_j''$ is the right endpoint of $s$. Since $s$ is left of $e$ and $e$ is an edge of $\calU$, $s$ is completely outside $\calU$. Hence, $\overline{op_j''}$ intersects $\partial \calU$ at a point, denoted by $q$. Let $e_q$ be the edge of $\calU$ containing $q$. Let $h'$ be the halfplane whose bounding line $\ell_{h'}$ containing $e_q$  (see Fig.~\ref{fig:generalproof10}). In the following, we argue that $\alpha\subseteq\alpha(h')$ must hold, which will prove the lemma since $\alpha(h')\in \hatS$. 

Suppose $\alpha(h')$ is defined by a subsequence $P[i',j']$. Then, all points of $P[i',j']$ are inside $h'$.  
To prove $\alpha\subseteq\alpha(h')$, it suffices to show that for any point $p_k\in P[i,j]$, $p_k$ must be in $P[i',j']$. Indeed, since $e\subseteq \ell_h$ is an edge of $\calU$, $s\subseteq\ell_h$ is left of $e$, and $\overline{op_j''}$ intersects $e_q$ at $q$, we obtain that $\ell_{h'}$ intersects $\ell_h$ at a point $q'$ between $s$ and $e$. Since $p_j''$ is the right endpoint of $s$, it follows that $\overline{op_k}$ must intersect $\ell_{h'}$ since $\overline{op_k}$ intersects $s$. 
As $p_k$ is an arbitrary point of $P[i,j]$, 
the above argument essentially shows that $P[i,j]$ is contained in a maximal subsequence $P[i_1,j_1]$ covered by $h'$. As $p_j\in P[i,j]$, we have $p_j\in P[i_1,j_1]$. Therefore, $P[i_1,j_1]$ is the maximal subsequence of $\Gamma_{h'}$ that contains $p_j$. On the other hand, since $\ell_{h'}$ contains $e_q$ and $q=\ell_{h'}\cap \overline{op_j}$, the subsequence $P[i',j']$ must contain $p_j$. Hence, $P[i',j']$ is the maximal subsequence of $\Gamma_{h'}$ that contains $p_j$. As such, we obtain that $P[i_1,j_1]=P[i',j']$. This proves that $P[i,j]\subseteq P[i',j']$ and thus proves $\alpha\subseteq \alpha(h')$.

\paragraph{$\boldsymbol{\ell_h}$ does not contain any edge of $\boldsymbol{\calU}$.}
We now consider the case where $\ell_h$ does not contain any edge of $\calU$. 
%Let $I$ be the basic interval of $\bbS$ that contains the norm of $h$. 
Recall the definitions of $v_h$ and $v_h'$. By definition, $\alpha(h)$ is the unique arc (if exists) of $S_h$ containing $v_h'$. As $\alpha\not\in \hatS$ and $\alpha(h)\in \hatS$, we have $\alpha\neq \alpha(h)$. Hence, $\alpha$ and $\alpha(h)$ are disjoint since they are both in $S_h$. Since $v_h'\in \alpha(h)$, we obtain that $\alpha$ does not contain $v_h'$ (see Fig.~\ref{fig:generalproof20}).

Define $p_i''$, $p_j''$, $s$, $q$, $e_q$, $h'$, $P[i',j']$ in the same way as above. Then, $\alpha$ is the projection of $s$ on $C$. Since $\alpha$ does not contain $v_h'$, $v_h''$ cannot be on $s$, where $v_h''=\overline{ov_h'}\cap \ell_h$. Recall that both $s$ and $v_h''$ are on $\ell_h$. Without loss of generality, we assume that $s$ is to the left of $v_h''$ (see Fig.~\ref{fig:generalproof20}).
In the following, we argue that $\alpha\subseteq\alpha(h')$ must hold, which will prove the lemma since $\alpha(h')\in \hatS$. 

As in the above case, it suffices to show that for any point $p_k\in P[i,j]$, $p_k$ must be in $P[i',j']$. Indeed, by definition, the line through $v_h$ parallel to $\ell_h$ must be tangent to $\calU$ at $v_h$; let $\ell'$ denote the line. Let $p^*_j$ be the intersection of $\ell'$ and $\overline{op_j}$ (see Fig.~\ref{fig:generalproof20}; note that such an intersection must exist since $p_j$ is below $\ell_h$ while $\ell'$ is above $\ell_h$). Hence, $q$ is also the intersection of $\overline{op^*_j}$ and $e_q$ and $p^*_j$ is left of $v_h$ on $\ell'$. Since $o$ is inside $\calU$, $\ell'$ is tangent to $\calU$ at $v_h$, and $p^*_j$ is left of $v_h$, the supporting line of $e_q$, which is $\ell_{h'}$, must intersect $\ell'$ at a point right of $p^*_j$. This means that $\ell_{h'}$ must intersect $\ell_h$ at a point right of $p''_j$. Consequently, for any point $p_k\in P[i,j]$, since $\overline{op_k}$ intersects $s$, which has $p_j''$ as the right endpoint, $\overline{op_k}$ must intersect $\ell_{h'}$ as well.

\begin{figure}[t]
\begin{minipage}[t]{\textwidth}
\begin{center}
\includegraphics[height=1.8in]{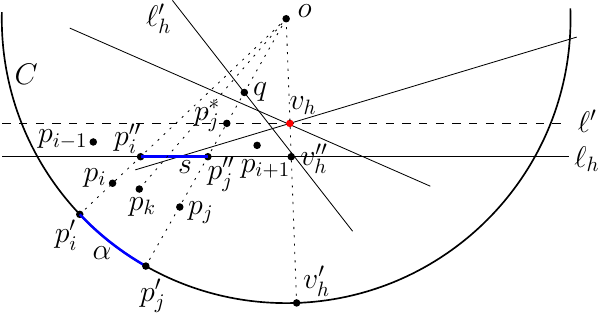}
\caption{\footnotesize Illustrating the proof of Lemma~\ref{lem:110} for the case where $\ell_h$ does not contain any edge of $\calU$.}
\label{fig:generalproof20}
\end{center}
\end{minipage}
\vspace{-0.1in}
\end{figure}

The rest of the argument is the same as the above case. 
Specifically, as $p_k$ is an arbitrary point of $P[i,j]$, the above argument essentially shows that $P[i,j]$ is contained in a maximal subsequence $P[i_1,j_1]$ covered by $h'$. Since $p_j\in P[i,j]$ and thus $p_j\in P[i_1,j_1]$, $P[i_1,j_1]$ is the maximal subsequence of $\Gamma_{h'}$ that contains $p_j$. On the other hand, since $\ell_{h'}$ contains $e_q$ and $e_q=\ell_{h'}\cap \overline{op_j}$, the subsequence $P[i',j']$ must contain $p_j$. Hence, $P[i',j']$ is the maximal subsequence of $\Gamma_{h'}$ that contains $p_j$. As such, we obtain that $P[i_1,j_1]=P[i',j']$. This proves that $P[i,j]\subseteq P[i',j']$ and thus proves $\alpha\subseteq \alpha(h')$. 
\end{proof}

With the above Lemma, a smallest subset of $H$ for covering $P$ can be obtained as follows. (1) Compute $\hatS$. (2) Compute a smallest subset $\hatS^*$ of arcs of $\hatS$ whose union covers $C$. (3) Using $\hatS^*$, obtain an optimal solution for $H$ to cover $P$. We show in Lemma~\ref{lem:120} that $\hatS$ can be computed in $O(n\log n)$ time, using ray-shooting queries in simple polygons~\cite{ref:ChazelleRa94,ref:ChazelleVi89,ref:HershbergerA95}.  
Further, since $|\hatS|\leq n$, we show in the next subsection that the circular-point coverage problem can be solved in $O(n\log n)$ time, and therefore
the second and third steps take $O(n\log n)$ time. As such, in the case where $\calU\neq \emptyset$, we can compute a smallest subset of $H$ to cover $P$ in $O(n\log n)$ time.

\begin{lemma}\label{lem:120}
Computing all arcs of $\hatS$ can be done in $O(n\log n)$ time. 
\end{lemma}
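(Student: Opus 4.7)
My plan is to mirror the construction from the analogous lemma in Section~\ref{sec:lower}, working in angular coordinates around $o$ instead of along the $x$-axis. To begin, I will use $\calU$ (already computed, in $O(n\log n)$ time, as part of the feasibility test) and associate each edge of $\calU$ with the halfplane of $H$ whose bounding line supports it. I will then sort the vertices of $\calU$ cyclically by their incident edges' norm directions, so that for any halfplane $h$ whose $\ell_h$ supports no edge of $\calU$, the vertex $v_h$ can be located by a single binary search in $O(\log n)$ time.

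Next, I will build a polygon $R$ that plays the role of the rectangle-with-slits in Section~\ref{sec:lower}. Pick a bounding box $B$ large enough to contain $o$, every $p_k$, every bounded edge of $\calU$, and every reference point $v_h''=\ell_h\cap\overline{ov_h'}$ used below. For each $i$, let $q_i$ be the point where the ray from $o$ through $p_i$ meets $\partial B$, and insert the slit $\overline{p_iq_i}$. The result is a weakly simple polygon with $O(n)$ vertices, on which a ray-shooting data structure can be built in $O(n)$ time with $O(\log n)$ query time~\cite{ref:ChazelleRa94,ref:ChazelleVi89,ref:HershbergerA95}.

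Given $R$, I will process each $h\in H$ in $O(\log n)$ time as follows. Pick a reference point $q$: if $\ell_h$ supports an edge $e$ of $\calU$, take $q$ to be an interior point of $e$; otherwise, take $q=v_h''$. In both cases $q$ lies inside $R$, and the direction of $q$ from $o$ lies in the angular interval of the desired maximal run $P[i,j]\in\Gamma_h$. Shoot two rays from $q$ along $\ell_h$, one in each direction. The key geometric observation I will invoke is that for $k\in[i,j]$ the segment $\overline{op_k}$ crosses $\ell_h$ strictly between $o$ and $p_k$, so the crossing lies in the interior of $R$ and not on the slit $\overline{p_kq_k}$; whereas for $k=j+1$ (respectively $k=i-1$) the line through $o$ and $p_k$ crosses $\ell_h$ on the far side of $p_k$ from $o$, so the slit $\overline{p_kq_k}$ is hit by the ray going in the corresponding direction. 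Hence the first edge of $R$ hit by each ray is the slit of $p_{j+1}$ or $p_{i-1}$, and $i$ and $j$ can be read off in $O(1)$ additional time; if the two rays meet each other without first hitting any such slit, then $\alpha(h)$ is undefined.

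Summing $O(\log n)$ work per halfplane yields the claimed $O(n\log n)$ bound. The main subtlety I expect is verifying the geometric claim above, which parallels the argument in the proof of Lemma~\ref{lem:110}: I will show that $p_{j+1}$ and $p_{i-1}$ lie in the half of $\bbS$ pointing toward $\ell_h$ (this is precisely what makes them the first points outside the run when sweeping along $\ell_h$), which in turn forces the crossings of the lines $\overline{op_{j+1}}$ and $\overline{op_{i-1}}$ with $\ell_h$ to occur beyond $p_{j+1}$ and $p_{i-1}$ respectively and hence on the corresponding slits of $R$. Handling an unbounded $\calU$ reduces to clipping against $B$, which is pure bookkeeping and does not affect any of the definitions above.
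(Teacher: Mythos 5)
The overall construction in your proposal mirrors the paper's: build a weakly simple polygon $R$ consisting of a convex region around $o$ with radial slits $\overline{p_k q_k}$, then locate the endpoints of each $\alpha(h)$ with two ray-shooting queries along $\ell_h$ from a reference point $q$ (either a point on the edge $e\subseteq\ell_h$ of $\calU$, or $v_h''$). The choice of a bounding box rather than the paper's convex hull of the projection points $P'$ on $C$ is immaterial. The binary search for $v_h$ and the $O(n\log n)$ total time also match the paper.

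However, the ``key geometric observation'' you announce is not correct, and this is where the proof breaks. You claim $p_{i-1}$ and $p_{j+1}$ always lie in the half of $\bbS$ that points toward $\ell_h$, so the rays $\overline{op_{i-1}}$, $\overline{op_{j+1}}$ cross $\ell_h$ beyond those points, placing the crossings on the slits. This need not hold: $p_{j+1}$ is merely the next point not in $h$ and may well lie on the same side of $\ell_h$ as $o$ with its direction in the ``away'' half (e.g.\ $o=(0,0)$, $\ell_h:y=-1$, $p_j=(0,-2)$, $p_{j+1}=(1,1)$). Moreover, even when the direction does point toward $\ell_h$, the crossing may fall outside $B$, so the shot ray exits $\partial B$ before it could hit the slit. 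In both situations your query returns a boundary edge of $B$, a case your plan does not treat — your only catch-all is ``if the two rays meet each other without first hitting any such slit, then $\alpha(h)$ is undefined,'' which is both ill-posed (two opposite rays from the same point never meet) and wrong as a criterion, since $\alpha(h)$ can be perfectly well defined while one or both rays exit through $\partial B$. The paper's proof is explicit on this point: it allows the ray to hit a hull edge $\overline{p_k'p_{k+1}'}$ and shows that $i=k+1$ can still be read off in that case. You need the same: if the ray exits through the arc of $\partial B$ between $q_k$ and $q_{k+1}$, that still identifies the boundary index of the run, because the convexity of $B$ guarantees the exit happens strictly between the last in-$h$ direction and the first out-of-$h$ direction. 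With that correction (and dropping the false claim) your argument becomes essentially the paper's.
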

\begin{proof}
%We still utilize the ray-shooting queries in simple polygons. 
We sort points of $P$ around $o$, find a circle $C$, and compute the set $P'$ of projection points. Then, we construct a simple polygon $R$ as follows. Let $R'$ be the convex hull of all points of $P'$. For each point $p_i'\in P'$, we add the segment $\overline{p_ip_i'}$ to $R'$ to form a polygon $R$. Note that by the definition of $p_i'$, no two segments $\overline{p_ip_i'}$'s intersect and none of these segments intersects $\partial R'$ except at $p_i'$. Hence, $R$ is a (weakly) simple polygon of $O(n)$ vertices. All above can be done in $O(n\log n)$ time. We build a ray-shooting data structure for $R$ in $O(n)$ time so that given a ray with origin inside $R$, the first edge of $R$ hit by the ray can be computed in $O(\log n)$ time~\cite{ref:ChazelleRa94,ref:ChazelleVi89,ref:HershbergerA95}. 

Next, we compute the common intersection $\calU$ of all halfplanes of $\overline{H}$ in $O(n\log n)$ time. Since $\calU\cap P=\emptyset$, $\calU$ must be inside $R$. 
For each edge $e$ of $\calU$, suppose $e$ is on the bounding line of a halfplane $h$. To compute $\alpha(h)$, let $i$ and $j$ be the two indices such that $\alpha(h)$ is defined by a subsequence $P[i,j]$ of $P$. It suffices to determine $i$ and $j$. 
We shoot two rays originating from a point in the interior of $e$ with directions toward the two endpoints of $e$, respectively. The answers of the two ray-shooting queries can determine $i$ and $j$. More specifically, without loss of generality, assume that $e$ is horizontal and $o$ is above $\ell_h$. As such, one ray is shooting leftwards while the other is shooting rightwards. Let $e'$ be the first edge of $R$ hit by the leftwards ray $\rho$. Then, $e'$ is either a segment $\overline{p_kp_k'}$ or an edge $\overline{p_k'p_{k+1}'}$ on the convex hull $R'$ for some index $1\leq k\leq n$. In either case, we have $i=k+1$. The index $j$ can be computed analogously. As such, in $O(\log n)$ time, we can obtain $\alpha(h)$. 

Now consider a halfplane $h$ whose bounding line $\ell_h$ does not contain any edge of $\calU$. We find the vertex $v_h$ of $\calU$, which can be done in $O(\log n)$ time by binary search on the vertices of $\calU$. Let $v_h''$ be the intersection of $\ell_h$ and the ray from $o$ to $v_h$. To compute $\alpha(h)$, let $i$ and $j$ be the two indices such that $\alpha(h)$ is defined by a subsequence $P[i,j]$ of $P$. It suffices to determine $i$ and $j$, which, as above, can be done in $O(\log n)$ time by shooting two rays from $v_h''$ along $\ell_h$ with opposite directions. 

In summary, the arc $\alpha(h)$ for each halfplane $h\in H$ can be computed in $O(\log n)$ time. Therefore, computing the set $\hatS$ takes $O(n\log n)$ time. 
\end{proof}

\subsubsection{Solving the circular-point coverage problem}

We now discuss how to solve the circular-point coverage problem. Given a set $P$ of $n$ points and a set $S$ of $n$ arcs on a circle $C$, we wish to compute a smallest subset of arcs whose union covers all points of $P$. We show that the problem can be solved in $O(n\log n)$ time by reducing it to the circle coverage problem.
%\footnote{Although we are not aware, it is possible that this problem has already been studied in the literature.} 

Let $p_1,p_2,\ldots,p_n$ be the points of $P$ ordered on $C$ in counterclockwise order. We use $\alpha[i,i+1]$ to denote the arc of $C$ from $p_i$ to $p_{i+1}$ counterclockwise (let the index take modulo $n$ if it is larger than $n$). We assume that at least one point of $P$ is not covered by $\alpha$ since otherwise we can report $\alpha$ as the optimal solution (note that such case can be easily determined). 

\begin{figure}[t]
\begin{minipage}[t]{\textwidth}
\begin{center}
\includegraphics[height=1.1in]{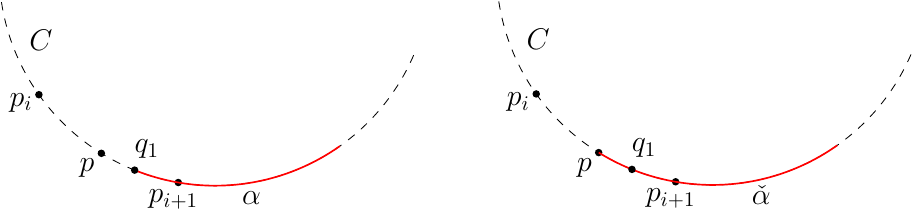}
\caption{\footnotesize Left: the red solid arc is $\alpha$ with $q_1$ as an endpoint in $\alpha[i,i+1]$. The dashed arc is a portion of $C$. $p$ is the middle point of $\alpha[i,i+1]$. Right: $\alpha$ is extended from $q_1$ counterclockwise to $p$.}
\label{fig:extendedarc}
\end{center}
\end{minipage}
\vspace{-0.1in}
\end{figure}

For each arc $\alpha$ of $S$, we construct an {\em extended arc} $\Check{\alpha}$ as follows. We assume that $\alpha$ contains at least one point of $P$ since otherwise we could simply ignore $\alpha$. Let $q_1$ and $q_2$ be the two endpoints of $\alpha$ such that if we traverse from $q_1$ to $q_2$ along $\alpha$ we are moving on $C$ counterclockwise. For $q_1$, we extend $\alpha$ on $C$ clockwise as follows. Notice that $q_1$ must be contained in an arc $\alpha[i,i+1]$ such that $\alpha$ contains $p_{i+1}$ but not $p_i$. Let $p$ be the midpoint of $\alpha[i,j]$. If $p\in \alpha$, then we do not extend $\alpha$ from $q_1$ (i.e., $q_1$ is an endpoint of the extended arc $\Check{\alpha}$). Otherwise, we extend $\alpha$ clockwise until $p$ (see Fig.~\ref{fig:extendedarc}). For $q_2$, we extend $\alpha$ on $C$ counterclockwise in a way similar to the above. The resulting extended arc is $\Check{\alpha}$. The following observation is self-evident. 

\begin{observation}\label{obser:extendarc}
$\alpha\subseteq\Check{\alpha}$ and $\alpha\cap P=\Check{\alpha}\cap P$. 
\end{observation}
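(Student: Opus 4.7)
The plan is to read both claims directly off the construction of $\check{\alpha}$. The first claim $\alpha \subseteq \check{\alpha}$ is essentially definitional: $\check{\alpha}$ is obtained from $\alpha$ by (possibly) extending past each of the endpoints $q_1$ and $q_2$, so no point of $\alpha$ is ever removed. I would just state this and move on.

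The substance is in the second claim $\alpha \cap P = \check{\alpha} \cap P$. The inclusion $\alpha\cap P \subseteq \check{\alpha}\cap P$ is immediate from the first claim, so I would focus on the reverse inclusion, namely that the extension introduces no new points of $P$. I would handle the extension at $q_1$ (the case at $q_2$ being symmetric). By construction $q_1$ lies on the gap arc $\alpha[i,i+1]$ with $p_{i+1}\in\alpha$ and $p_i\notin\alpha$, and since $q_1$ is the clockwise endpoint of $\alpha$, the portion of $\alpha[i,i+1]$ lying counterclockwise of $q_1$ (up to $p_{i+1}$) is already in $\alpha$, while the portion lying clockwise of $q_1$ (back toward $p_i$) is disjoint from $\alpha$. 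Now let $p$ be the midpoint of $\alpha[i,i+1]$. If $p\in\alpha$, no extension is performed and we are done. If $p\notin\alpha$, then $p$ must lie strictly clockwise of $q_1$ inside $\alpha[i,i+1]$, so the extension from $q_1$ clockwise to $p$ is contained in the open arc between $p_i$ and $p_{i+1}$, which by definition of the cyclic ordering of $P$ contains no point of $P$.

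Combining the two symmetric extensions, $\check{\alpha}\setminus\alpha$ is contained in the union of two open gap sub-arcs, each disjoint from $P$, hence $\check{\alpha}\cap P = \alpha\cap P$. There is no real obstacle here beyond checking the orientation carefully; the only thing to be pedantic about is that the condition ``$p\notin\alpha$'' forces $p$ to lie on the clockwise side of $q_1$ within $\alpha[i,i+1]$, which is exactly the side on which the extension is performed.
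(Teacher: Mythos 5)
Your proof is correct and amounts to carefully unpacking the construction, which is exactly what the paper has in mind: the paper simply declares the observation ``self-evident'' and gives no argument. The key point you make explicit---that each extension stays strictly inside the open gap arc $\alpha[i,i+1]$ (stopping at its midpoint, never reaching $p_i$), which contains no point of $P$---is precisely the reason the claim holds.
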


Let $E(S)$ denote the set of all extended arcs thus defined. Also, for each subset $S'$ of $S$, let $E(S')$ denote the set of extended arcs of the arcs of $S'$. The following lemma reduces the circular-point coverage problem for $P$ and $S$ to the circle coverage problem for $E(S)$ and $C$. 

\begin{lemma}\label{lem:extarc}
A subset $S'\subseteq S$ of arcs whose union covers $P$ if and only if the union of the arcs of $E(S')$ covers $C$. 
\end{lemma}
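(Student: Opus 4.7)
The plan is to prove the two directions separately, with the reverse direction being essentially trivial and the forward direction requiring a short case analysis that hinges on the midpoint choice used to define the extensions.

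For the reverse direction, suppose the union of the arcs in $E(S')$ covers $C$. Pick any point $p_k\in P$. Since $p_k\in C$, there is some $\alpha\in S'$ with $p_k\in\Check{\alpha}$. But Observation~\ref{obser:extendarc} says $\alpha\cap P=\Check{\alpha}\cap P$, so $p_k\in\alpha$. Hence the arcs of $S'$ cover $P$.

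For the forward direction, suppose $S'$ covers $P$. Take any point $x\in C$. If $x=p_k$ for some $k$, then some $\alpha\in S'$ contains $p_k$, and $\alpha\subseteq\Check{\alpha}$ gives $x\in\Check{\alpha}$. Otherwise $x$ lies strictly inside a gap arc $\alpha[i,i+1]$; let $p$ be its midpoint. I would then pick $\alpha_1,\alpha_2\in S'$ with $p_i\in\alpha_1$ and $p_{i+1}\in\alpha_2$ (possibly equal) and argue that $\Check{\alpha}_1\cup\Check{\alpha}_2\supseteq\alpha[i,i+1]$. The key observation is that since $\alpha_1$ is a connected arc of $C$ containing $p_i$, either (a) $\alpha_1$ already contains $p_{i+1}$, in which case by connectedness $\alpha_1\supseteq\alpha[i,i+1]$ and we are done; or (b) $p_{i+1}\notin\alpha_1$, in which case the counterclockwise endpoint $q_2$ of $\alpha_1$ lies in $\alpha[i,i+1]$, and by definition of the extension $\Check{\alpha}_1$ covers the counterclockwise portion of $\alpha[i,i+1]$ from $p_i$ up to at least $p$ (it is extended to $p$ unless $q_2$ is already past $p$). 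A symmetric analysis for $\alpha_2$ shows $\Check{\alpha}_2$ covers the counterclockwise portion of $\alpha[i,i+1]$ from at most $p$ up to $p_{i+1}$. The two extensions therefore meet at $p$ and together cover the whole gap, so $x\in\Check{\alpha}_1\cup\Check{\alpha}_2$.

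The only delicate point is checking that the extension really does reach $p$ whenever it needs to, which is built directly into the definition of $\Check{\alpha}$: an endpoint of $\alpha$ inside a gap is pushed to the midpoint of that gap unless it is already past it. This symmetry of extending each gap endpoint to the common midpoint is exactly what guarantees that extensions from the two sides of any gap stitch together. There is no serious obstacle beyond keeping the orientations straight; the rest is bookkeeping. Thus both directions hold and the lemma follows.
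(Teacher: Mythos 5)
Your proof is correct and follows essentially the same approach as the paper: the reverse direction is the immediate consequence of Observation~\ref{obser:extendarc}, and the forward direction rests on the same midpoint-meeting property of the extensions. The only cosmetic difference is that the paper handles the gap $\alpha[i,i+1]$ by a WLOG reduction to one half and a single arc covering $p_{i+1}$, whereas you treat both halves symmetrically with (possibly distinct) arcs covering $p_i$ and $p_{i+1}$; also note that your case (a) inference ``$\alpha_1$ contains $p_i$ and $p_{i+1}$, hence $\alpha_1\supseteq\alpha[i,i+1]$'' tacitly uses the standing assumption that no single arc of $S$ covers all of $P$, which the paper's extension definition relies on as well.
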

\begin{proof}
Suppose the union of the arcs of $E(S')$ covers $C$. Then, arcs of $E(S')$ together cover all points of $P$. By Observation~\ref{obser:extendarc}, arcs of $S'$ together form a coverage of $P$. 

On the other hand, suppose the union of the arcs of $S'$ covers $P$. Then, consider any point $q$ of $C$. We argue that $q$ must be covered by an arc of $E(S')$, which will prove the lemma. If $q$ is covered by an arc $\alpha$ of $S'$, then $q$ is also covered by the extended arc $\Check{\alpha}$ of $\alpha$. Since $\Check{\alpha}\in E(S')$, we obtain that $q$ is covered by an arc of $E(S')$. If $q$ is not covered by any arc of $S'$, then $q$ must be in the interior of an arc $\alpha[i,i+1]$ for some $i$. Let $p$ be the middle point of $\alpha[i,i+1]$. Let $\alpha_1$ be the portion of $\alpha[i,i+1]$ between $p_i$ and $p$; let $\alpha_2$ be the portion of $\alpha[i,i+1]$ between $p$ and $p_{i+1}$. 
Clearly, $q$ is either in $\alpha_1$ or in $\alpha_2$. Without loss of generality, we assume that $q\in \alpha_2$. Since the union of arcs of $S'$ covers $P$, $S'$ must have an arc $\alpha$ that covers $p_{i+1}$. Since $q\not\in \alpha$, $\alpha$ must have an endpoint $q_1$ in $\alpha[i,i+1]$. According to our way of extending arcs, the extended arc $\Check{\alpha}$ of $\alpha$ must contain $\alpha_2$ and thus contain $q$. Since $\Check{\alpha}\in E(S')$, this proves that $q$ must be covered by an arc of $E(S')$. 
\end{proof}

In light of Lemma~\ref{lem:extarc}, we can compute a smallest subset of $S$ to cover $P$ as follows. First, we construct the extended arc set $E(S)$. This can easily done in $O(n\log n)$ time by sorting all endpoints of arcs of $S$ along with the points of $P$ on $C$. Second, we compute a smallest subset of $E(S)$ to cover $C$, which can be done in $O(n\log n)$ time by applying the algorithm in~\cite{ref:AgarwalCo23,ref:LeeOn84}. Using the optimal solution of $E(S)$, we can obtain an optimal solution of $S$ for covering $P$. The total time of the algorithm is $O(n\log n)$.

\subsection{The case $\boldsymbol{\calU= \emptyset}$}
\label{sec:empty}

We now consider the case $\calU=\emptyset$. By Helly's theorem, there are three halfplanes in $\overline{H}$ whose common intersection is $\emptyset$. This means that there are three halfplanes in $H$ whose union is the entire plane and thus covers all points of $P$. As such, the size $\tau^*$ of the smallest subset of $H$ for covering $P$ is at most three. Depending on whether $\tau^*$ is one, two, or three, there are three subcases. 

If $\tau^*=3$, then since $H$ has three halfplanes whose union is $\bbR^2$, it suffices to find such three halfplanes. As discussed in~\cite{ref:Har-PeledWe12} (see Lemma 4.1~\cite{ref:Har-PeledWe12}), this problem can be solved in $O(n)$ time using the linear-time linear programming algorithm~\cite{ref:MegiddoLi84}. 

If $\tau^*=1$, then the problem becomes determining whether $H$ has a halfplane containing all points of $P$, or alternatively, determining whether $\overline{H}$ has a halfplane that does not contain any point of $P$. For each halfplane $\overline{h}\in \overline{H}$, determining whether $\overline{h}\cap P=\emptyset$ can be easily done in $O(\log n)$ time by a halfplane range emptyness query, after constructing a convex hull of $P$. Therefore, the problem in this subcase can be solved in $O(n\log n)$ time. 

In what follows, we discuss the subcase $\tau^*=2$. Our goal is to find two halfplanes from $H$ such that their union covers all points of $P$. In the following, we present an $O(n^{4/3}\log^{5/3}n\log^{O(1)}\log n)$-time algorithm for this problem. It turns out that the runtime for solving this case dominates the algorithm for the overall problem, which is surprising (and perhaps also interesting) because it means that this ``special'' case actually exhibits the difficulty of the general halfplane coverage problem. As discussed in Section~\ref{sec:intro}, although we do not have a proof, we feel that $\Omega(n^{4/3})$ might be a lower bound for this problem, at least under a partition model~\cite{ref:EricksonNe96}.
%To solve this subcase, we instead consider the following equivalent problem: Determine whether $\overline{H}$ has two halfplanes whose intersection does not contain any point of $P$, and if yes, find such two halfplanes. 

\subsubsection{Algorithm for the subcase $\boldsymbol{\tau^*=2}$}

Our algorithm is based on a modification (and simplification) of Agarwal, Sharir, and Welzl's algorithm~\cite{ref:AgarwalTh98} (referred to as the ASW algorithm) for the decision version of the discrete two-center problem: Given a set $Q$ of $n$ points in $\bbR^2$ and a parameter $r$, determine whether there are two congruent disks centered at two points of $Q$ with radius $r$ that are together cover all points of $Q$. The ASW algorithm runs in $O(n^{4/3}\log^4 n)$ time, which was recently improved to $O(n^{4/3}\log^{7/3}n\log^{O(1)}\log n)$ by Wang~\cite{ref:WangUn23}. Wang's algorithm follows the same idea as the ASW algorithm but uses a more efficient data structure developed in~\cite{ref:WangUn23}. In the following, we modify the ASW algorithm to solve our problem. 

Assume that $H$ has two halfplanes $h_1^*$ and $h_2^*$ whose union covers $P$. If both of them are lower or upper halfplanes, then we can apply our lower-only halfplane coverage algorithm in Section~\ref{sec:lower} to find them in $O(n\log n)$ time. As such, we assume that $h_1^*$ is a lower halfplane and $h_2^*$ is an upper one.
In the following, we describe an algorithm that can find a lower halfplane and an upper halfplane whose union covers $P$. 

Let $H_l$ denote the subset of all lower halfplanes of $H$ and $H_u$ the subset of all upper halfplanes. 
For each lower halfplene $h_l\in H_l$, we consider the {\em subproblem} of 
determining whether there is an upper halfplane $h_u\in H_u$ such that $P\subseteq h_l\cup h_u$, or equivalently, $P\cap \overline{h_l}\subseteq h_u$. Our eventual goal (referred to as the {\em original problem}) is to decide whether $H_l$ has a halfplane $h_l$ whose subproblem has an affirmative answer. 

To solve the subproblems for all $h_l\in H$, we work in the dual setting. For any subset $H'\subseteq H$, let $\calD(H')$ denote the set of dual points of the bounding lines of the halfplanes of $H'$. For any subset $P'\subseteq P$, let $\calD(P')$ denote the set of dual lines of the points of $P'$. In the dual setting, the subproblem for $h_l$ becomes determining whether $\calD(H_u)$ has a dual point above all dual lines of $\calD(P\cap \overline{h_l})$. 
%or equivalently, above the upper envelope of the dual lines of $\calD(P\cap \overline{h_l})$. 
If we consider each dual line of $\calD(P)$ bounding an upper halfplane, then it is equivalent to determining whether the common intersection of all upper halfplanes bounded by the dual lines of  $\calD(P\cap \overline{h_l})$ contains a point of $\calD(H_u)$. Let $\calK(h_l)$ denote the above common intersection. As such, the subproblem for $h_l$ is to determine whether $\calK(h_l)\cap \calD(H_u)=\emptyset$. 

With the above discussion, our original problem is to determine whether $\calK(H_l)=\bigcup_{h_l\in H_l}\calK(h_l)$ contains a point of $\calD(H_u)$, i.e., whether $\calK(H_l)\cap \calD(H_u)= \emptyset$. Note that $\calK(H_l)\cap \calD(H_u)\neq \emptyset$ implies that there are a halfplane $h_l\in H_l$ and a dual point $h^*_u\in \calD(H_u)$ such that $h^*_u\in \calK(h_l)$. In the primal plane, this means that $P\subseteq h_l\cup h_u$, where $h_u$ is the halfplane of $H_u$ whose bounding line is dual to $h^*_u$. In the case where $\calK(H_l)\cap \calD(H_u)\neq \emptyset$, our algorithm will return a halfplane $h_l\in H_l$ and a dual point $h^*_u\in \calD(H_u)$ with $h^*_u\in \calK(h_l)$.

The following observation has been proved in \cite{ref:AgarwalTh98} (see Theorem~2.8~\cite{ref:AgarwalTh98}). 
\begin{observation} {\em (ASW~\cite{ref:AgarwalTh98})}
For any two lower halfplanes $h_1,h_2\in H_l$, the boundaries $\partial \calK(h_1)$ and $\partial \calK(h_2)$ cross each other at most twice.    
\end{observation}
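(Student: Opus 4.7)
The plan is to work in the dual plane and represent $\partial\calK(h_l)$ as the graph of the piecewise-linear convex function $f_l(c):=\max_{(a,b)\in P\cap \overline{h_l}}(b-ca)$, where $c$ is the horizontal coordinate in the dual. The claim then reduces to showing that $f_1-f_2$ changes sign at most twice over $c\in \bbR$. The easy case is when $\ell_{h_1}$ and $\ell_{h_2}$ are parallel: one of $P\cap \overline{h_1}$ and $P\cap \overline{h_2}$ is contained in the other, so $f_1\geq f_2$ holds throughout (or vice versa), yielding no transversal crossing.

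For the main non-parallel case, let $x_0$ be the $x$-coordinate of $\ell_{h_1}\cap \ell_{h_2}$, and write $P_i:=P\cap \overline{h_i}$. The key structural observation is that the exclusive parts of $P_1$ and $P_2$ are separated by $x_0$: a point of $P_1\setminus P_2$ lies in the wedge bounded above by $\ell_{h_2}$ and below by $\ell_{h_1}$, which forces its $x$-coordinate to lie on the side of $x_0$ where $\ell_{h_1}$ sits below $\ell_{h_2}$, and a point of $P_2\setminus P_1$ analogously lies on the opposite side. By relabeling if necessary, assume every point in $P_1\setminus P_2$ has $x$-coordinate greater than $x_0$ and every point in $P_2\setminus P_1$ has $x$-coordinate less than $x_0$.

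Next, I decompose $f_i=\max(g,\phi_i)$ where $g(c):=\max_{p\in P_1\cap P_2}(b_p-ca_p)$ is the common contribution and $\phi_i(c):=\max_{p\in P_i\setminus P_{3-i}}(b_p-ca_p)$ is the exclusive contribution. Since any argmax point for $\phi_1(c)$ has $x$-coordinate greater than $x_0$, one has $\phi_1'(c)<-x_0$ wherever differentiable, and symmetrically $\phi_2'(c)>-x_0$. Therefore $\phi_1-\phi_2$ is strictly decreasing in $c$ and vanishes at most at a single value $c^*$ (we may set $c^*=\pm\infty$ if one of $\phi_1,\phi_2$ dominates globally).

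A short case analysis then shows $f_1\geq f_2$ for $c\leq c^*$ and $f_2\geq f_1$ for $c\geq c^*$: on $c<c^*$, if $g(c)\geq \phi_1(c)$ then $f_1=g$ and $f_2=\max(g,\phi_2)=g$ since $\phi_2<\phi_1\leq g$; otherwise $f_1=\phi_1>\max(g,\phi_2)=f_2$. Consequently the zero set of $f_1-f_2$ is a single closed interval (possibly empty or a single point), so the boundaries $\partial\calK(h_1)$ and $\partial\calK(h_2)$ can cross transversally only at its at most two endpoints. The main obstacle I foresee is formalizing ``cross'' in degenerate configurations where the two envelopes coincide on an entire interval; this can be handled via an infinitesimal perturbation of $P$ that preserves the combinatorial structure while eliminating such coincidental plateaus, after which the transversal crossing count is bounded by two.
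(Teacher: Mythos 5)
Your proof takes a genuinely different route from the paper's. The paper simply cites Agarwal--Sharir--Welzl's corresponding result for intersections of congruent disks and remarks that the same argument carries over by treating each upper halfplane as a disk of infinite radius. You instead give a self-contained, direct argument in the dual that exploits structure specific to halfplanes: the exclusive point sets $P_1\setminus P_2$ and $P_2\setminus P_1$ are strictly separated by the abscissa $x_0$ of $\ell_{h_1}\cap\ell_{h_2}$, which makes $\phi_1-\phi_2$ strictly monotone, and the case analysis then pins down the sign of $f_1-f_2$ on each side of the single crossover $c^*$. This is more elementary than invoking the ASW machinery, and in fact proves something stronger: it shows $f_1-f_2$ changes sign at most \emph{once}, so the two boundaries cross transversally at most once rather than twice. (This sharpening is plausible: the ASW argument must allow for two crossings because disk-centered subsets are not ``one-sided'' with respect to a line; your separation-by-$x_0$ step is exactly what collapses the bound to one and has no analogue in the disk setting.)

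One small imprecision in the write-up: the sentence ``consequently the zero set of $f_1-f_2$ is a single closed interval'' does not actually follow from what precedes it. The established facts are that $f_1\geq f_2$ on $(-\infty,c^*]$ and $f_2\geq f_1$ on $[c^*,\infty)$, with equality on $(-\infty,c^*)$ governed by whether $g\geq\phi_1$ and on $(c^*,\infty)$ by whether $g\geq\phi_2$; since $g-\phi_1$ is a difference of two convex piecewise-linear functions, its nonnegativity set is not automatically an interval, so the zero set of $f_1-f_2$ could in principle have several components. This is immaterial to the conclusion, though: the sign pattern you derived ($f_1-f_2$ nonnegative to the left of $c^*$ and nonpositive to the right) already forces at most one sign change, hence at most one transversal crossing, which is all that the observation requires. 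You could simply replace the ``single closed interval'' sentence with the direct statement that $f_1-f_2$ changes sign at most once by the established sign pattern, and drop the appeal to the interval's endpoints. The parallel case and the perturbation device for coincident plateaus are both handled appropriately.
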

\begin{proof}
Note that the observation was proved in \cite{ref:AgarwalTh98} for unions of congruent disks (i.e., each $\calK(h_l)$ is the union of a set of congruent disks). In our problem, each $\calK(h_l)$ is the union of a set of upper halfplanes. But the argument in \cite{ref:AgarwalTh98} applies to our problem as well (e.g., one could consider each upper halfplane a disk of infinite radius and then apply the argument in \cite{ref:AgarwalTh98}). 
\end{proof}

For a subset $H_l'\subset H_l$, define $\calK(H_l')=\bigcup_{h_l\in H_l'}\calK(h_l)$. The following observation holds particularly for our case (it does not hold for the disk case in \cite{ref:AgarwalTh98}). 

\begin{observation}\label{obser:160}
For any subset $H_l'\subseteq H_l$, the boundary $\partial \calK(H_l')$ is $x$-monotone. 
\end{observation}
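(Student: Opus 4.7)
The plan is to exploit the fact that each $\calK(h_l)$ is an intersection of \emph{upper} halfplanes in the dual plane, so each $\calK(h_l)$ is upward-closed in the $y$-direction, and this property is preserved by taking unions. I will then argue that any such upward-closed region has an $x$-monotone boundary, because the boundary is the graph of a single function of $x$.

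More concretely, I would first unpack the definition: by construction, $\calK(h_l)$ is the intersection of the upper halfplanes (in the dual plane) bounded by the dual lines of the points of $P\cap\overline{h_l}$. Since every bounding line here is non-vertical (it is the dual of a point in the primal plane) and every halfplane is of the form $y\geq ax+b$, the region $\calK(h_l)$ can be rewritten as $\{(x,y)\colon y\geq f_{h_l}(x)\}$, where $f_{h_l}(x)=\max_{p\in P\cap\overline{h_l}}(\text{line dual to } p \text{ at }x)$ is the upper envelope of a finite set of non-vertical lines; if $P\cap\overline{h_l}=\emptyset$, we set $f_{h_l}\equiv -\infty$ so that $\calK(h_l)=\bbR^2$. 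The function $f_{h_l}$ is continuous and piecewise linear, hence its graph $\partial\calK(h_l)$ is $x$-monotone.

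Next, I would take the union. For any $H_l'\subseteq H_l$,
\[
\calK(H_l')=\bigcup_{h_l\in H_l'}\calK(h_l)=\{(x,y)\colon y\geq f(x)\}, \quad f(x):=\min_{h_l\in H_l'}f_{h_l}(x).
\]
A pointwise minimum of continuous piecewise linear functions is again continuous and piecewise linear, so $f$ is a well-defined function of $x$ (with value $-\infty$ on columns where some $f_{h_l}\equiv -\infty$, in which case $\partial \calK(H_l')$ is empty there). Therefore $\partial\calK(H_l')$ is the graph of $f$, which by definition meets every vertical line in at most one point, i.e., is $x$-monotone.

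The main thing to be careful about is ruling out vertical segments on $\partial\calK(H_l')$, since $x$-monotonicity would fail there. This is where the upper-halfplane structure is crucial: the hypothetical dual-circle analogue in the ASW setting can produce vertical tangencies, but here each contributing halfplane is bounded by a non-vertical line, and the minimum of upper envelopes of finitely many non-vertical lines is again a function of $x$. So no vertical segment can appear, and the observation follows. I do not anticipate any other obstacle; the argument is essentially a direct consequence of the duality used to define $\calK(h_l)$.
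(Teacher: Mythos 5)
Your proof is correct and essentially the same as the paper's argument: the paper observes that each $\partial\calK(h_l)$ is the upper envelope of a set of (non-vertical) lines, hence $x$-monotone, and that $\partial\calK(H_l')$ is the lower envelope of these upper envelopes, hence also $x$-monotone. You have merely unpacked the same ``lower envelope of upper envelopes'' structure into explicit function notation $f=\min_{h_l\in H_l'} f_{h_l}$, and your remark about ruling out vertical segments is a correct articulation of a point the paper leaves implicit.
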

\begin{proof}
For each halfplane $h_l\in H_l'$, $\partial \calK(h_l)$ is the upper envelope of a set of lines and thus is $x$-monotone. Therefore, $\partial \calK(H_l')$ is the lower envelope of the upper envelopes $\partial \calK(h_l)$ for all $h_l\in H_l'$ and thus $\partial \calK(H_l')$ must also be $x$-monotone. 
\end{proof}

To determine whether $\calK(H_l)\cap \calD(H_u)=\emptyset$, we run the ASW algorithm (see Section~4.3~\cite{ref:AgarwalTh98}). 
%By implicit representing $\calK(H_l)$ in a data structure, 
The algorithm uses the divide-and-conquer approach. In the merge step, we are given implicit representations of $\calK(H_l^1)$ and $\calK(H_l^2)$ for two subsets $H_l^1$ and $H_l^2$ of $H_l$ (i.e., $\calK(H_l^1)$ and $\calK(H_l^2)$ are implicitly maintained by a data structure so that certain operations needed by the algorithm can be efficiently supported by the data structure), and the problem is to obtain an implicit representation of $\calK(H_l^1\cup H_l^2)$. The merge step is done by sweeping a vertical line in the plane from left to right, which runs in $O(m^{4/3}\log^3 m)$ time, where $m=|H_l^1\cup H_l^2|$. As such, the total time of the divide-and-conquer algorithm is $O(n^{4/3}\log^4 n)$. For our problem, we can improve the runtime of the merge step by a logarithmic factor based on Observation~\ref{obser:160}. Indeed, in the original ASW algorithm (which is for disks), their corresponding structure $\partial \calK(H_l^i)$, $i=1,2$, may have $\Omega(m)$ edges intersecting the vertical sweepline and therefore the algorithm uses a balanced binary search tree to maintain all these intersections. In our problem, by Observation~\ref{obser:160}, $\partial \calK(H_l^i)$, $i=1,2$, is $x$-monotone and thus intersects the vertical sweepline at most once. As such, we do not have to use a tree, which saves the runtime by a logarithmic factor. In this way, an implicit representation of $\calK(H_l)$ can be computed in $O(n^{4/3}\log^3 n)$ time. Refer to Section~4.3~\cite{ref:AgarwalTh98} for the algorithm details. 

With the implicit representation of $\calK(H_l)$, to determine whether $\calK(H_l)\cap \calD(H_u)=\emptyset$, another sweeping procedure is done on both $\calK(H_l)$ and the points of $\calD(H_u)$. This takes $O(n^{4/3}\log^3 n)$ time for the disk case in~\cite{ref:AgarwalTh98}. Again, for our case, we do not need to use a tree in the sweeping procedure and thus the time of the procedure can be bounded by $O(n^{4/3}\log^2n)$. Further, in the case where $\calK(H_l)\cap \calD(H_u)\neq \emptyset$, the algorithm will return a halfplane $h_l\in H_l$ and a point $h^*_u\in \calD(H_u)$ with $h^*_u\in \calK(h_l)$. As discussed above, this means $P\subseteq h_l\cup h_u$, where $h_u$ is the halfplane of $H_u$ whose bounding line is dual to $h_u^*$. 

In summary, we can determine in $O(n^{4/3}\log^3 n)$ time whether $H_l$ has a halfplane $h_l$ and $H_u$ has a halfplane $h_u$ such that $P\subseteq h_l\cup h_u$. 

\subsubsection{A further improvement}
%\paragraph{A further improvement.}
The runtime can be slightly improved using a recent result of Wang~\cite{ref:WangUn23}. As discussed above, in the ASW's divide-and-conquer algorithm, a data structure is needed to maintain 
$\calK(H'_l)$ implicitly for subsets $H'_l\subseteq H_l$. More specifically, the data structure is required to solve the following {\em key subproblem}. Given a parameter $r$,
preprocess $P$ to compute a collection $\calP$ of (sorted) {\em canonical subsets} of $P$, $\{P_1,P_2,\ldots,\}$, so that for any query disk $D$ of radius $r$, the set $P_D$ of points of $P$ {\em outside} $D$ can be represented as the union of a sub-collection $\calP_D$ of canonical subsets and $\calP_D$ can be found efficiently (it suffices to give the indices of the canonical subsets of $\calP_D$). 

Suppose we can solve the above key subproblem with preprocessing time $T$ and $M=\sum_{P_i\in \calP}|P_i|$ such that for any query disk $D$, $|\calP_D|=O(\tau)$ and $\calP_D$ can be computed in $O(\tau)$ time. Then, the modified ASW algorithm described above runs in $O(T+M+ \tau\cdot n\log^2 n)$ time, as discussed in~\cite{ref:WangUn23} (more specifically, $O(\tau\cdot n\log^2 n)$ is the runtime for the above divide-and-conquer algorithm). Note that in \cite{ref:WangUn23} the factor $\log^2 n$ is $\log^3 n$, which is for the disk case. In our problem, as discussed above, due to the new Observation~\ref{obser:160}, we can reduce the time by a logarithmic factor. 

To solve the key subproblem, a data structure is developed in \cite{ref:WangUn23} with the performance in the following lemma. Although the result is developed for disks, it applies to halfplanes too. Indeed, the algorithm for the result is an extension of the techniques for halfplane range searching~\cite{ref:MatousekEf92,ref:MatousekRa93}. 

\begin{lemma}{\em (Wang~\cite{ref:WangUn23})}
For any $r\leq n$, we can obtain a data structure for the above key subproblem with the following performance: both $T$ and $M$ are bounded by $O(n\sqrt{r}+n\log n+r^2+(n^2/r)\log r\log\log(n/r)/\log^2(n/r))$, 
%$M=O(n\log^2 r+(n^2/r)\log r\log\log\log(n/r)/\log^2(n/r))$, 
and $\tau=O(\sqrt{r}\log(n/r)(\log\log(n/r))^{O(1)})$.
\end{lemma}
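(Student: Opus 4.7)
The plan is to adapt Matou{\v s}ek's hierarchical partition-tree paradigm for halfplane range searching~\cite{ref:MatousekEf92,ref:MatousekRa93} so that the output is a collection of canonical subsets of points \emph{outside} the query disk. Observing that ``outside a disk of radius $r$'' lifts, via the standard paraboloid map, to ``above a nonvertical plane,'' we can equivalently solve the halfspace-above range reporting problem in $\mathbb{R}^3$ with canonical subsets; the crossing-number structure of Matou{\v s}ek's simplicial partitions survives the lift. First I would build a simplicial partition tree on $P$ in which each node stores, as its canonical subset, the points falling in its cell, choosing a level with $\Theta(r)$ cells each of size $O(n/r)$ and with the property that any line (or lifted plane) crosses $O(\sqrt{r})$ of them. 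A query disk is then answered by emitting the canonical subsets of the cells lying fully outside the disk and recursing into the cells crossed by its boundary.

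Second, to hit the exact stated bound, I would superimpose two secondary structures. The $O(n\sqrt{r})$ contribution to $T$ and $M$ is the total size of canonical subsets along all root-to-leaf traversals, paid for by the standard crossing-number accounting. The $O(r^2)$ term pays for building, on top of the $\Theta(r)$ cells of the coarse partition, an auxiliary structure (a shallow-cutting-like adjacency table on the cells and their bounding arcs) that lets us locate the cells crossed by the query boundary efficiently. The remaining $O((n^2/r)\log r \log\log(n/r)/\log^2(n/r))$ term comes from a finer secondary partition inside each top-level cell, obtained from Matou{\v s}ek's refined random-sampling partitions; the unusual $\log\log/\log^2$ factor is the saving one gets by recursing through $O(\log\log(n/r))$ sampling stages with geometrically decreasing sub-cell size $O((n/r)^{\varepsilon_i})$, each stage eliminating another logarithmic factor from the size while only paying an additive $\log\log$ in the query.

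Third, the query procedure descends $O(\log(n/r))$ levels, visits $O(\sqrt{r})$ cells at each level, and spends $O((\log\log(n/r))^{O(1)})$ per visit to traverse the refined secondary structure and emit the appropriate canonical indices, giving $\tau=O(\sqrt{r}\log(n/r)(\log\log(n/r))^{O(1)})$ as required. At the end, $\calP_D$ is read off as the union of the canonical subsets reported along the traversal, and every point outside the query disk lies in exactly one such subset by the partition property.

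The main obstacle I foresee is delivering the precise $\log r \log\log(n/r)/\log^2(n/r)$ savings in the preprocessing without inflating $\tau$. This requires carefully orchestrating Matou{\v s}ek's refined sampling with Chan-style amortization across the $O(\log\log(n/r))$ sampling stages so that the three regimes in $T$ and $M$ balance against the single-factor $\log(n/r)(\log\log(n/r))^{O(1)}$ blow-up in $\tau$. Verifying that the argument transfers cleanly from generic simplex range searching to the lifted halfspace-above setting arising from disks (and, in our use here, directly to halfplane queries) is the technical crux.
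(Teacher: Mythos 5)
This lemma is not proved in the paper; it is cited verbatim from Wang~\cite{ref:WangUn23}, with the paper only remarking that although the result there is stated for disks, the underlying construction ``is an extension of the techniques for halfplane range searching~\cite{ref:MatousekEf92,ref:MatousekRa93}'' and hence applies to halfplanes too. So there is no in-paper proof to compare against, and the correct treatment in your write-up would be to cite~\cite{ref:WangUn23} rather than to prove the bound from scratch.

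That said, your sketch does track the general direction Wang's construction takes: lift via the paraboloid so that ``outside a disk of radius $r$'' becomes a halfspace-above query in $\bbR^3$, build a Matou\v{s}ek-style simplicial partition with $\Theta(r)$ cells and $O(\sqrt r)$ crossing number at the top level, emit canonical subsets for cells entirely outside the query region, and recurse into crossed cells. The $O(n\sqrt r)$ and $O(r^2)$ contributions and the $\tau = O(\sqrt r \cdot \log(n/r)\cdot(\log\log(n/r))^{O(1)})$ query cost are accounted for plausibly. The genuine gap is exactly the term $O\bigl((n^2/r)\log r\,\log\log(n/r)/\log^2(n/r)\bigr)$: you identify it as ``the main obstacle'' and assert it ``comes from a finer secondary partition'' with ``$O(\log\log(n/r))$ sampling stages'' and ``Chan-style amortization,'' but you never actually exhibit the mechanism that yields a divisor of $\log^2(n/r)$ while keeping $\tau$ at a single $\log(n/r)$ factor. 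That specific trade-off is the substance of the cited result — achieving it requires the particular refined partition/hierarchical-cutting machinery developed in~\cite{ref:WangUn23} — and an outline that names the obstacle without resolving it is not a proof. Unless you intend to reproduce Wang's construction in detail, you should present this as a black-box citation, as the paper does, rather than as something you can re-derive by generic partition-tree arguments.
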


By setting $r=\frac{n^{2/3}}{\log^{8/3}n(\log\log n)^{O(1)}}$ in the above lemma, we obtain a data structure for the above key subproblem with the following performance: $T, M=O(n^{4/3}\log^{5/3}n\log^{O(1)}\log n)$ and $\tau=O(n^{1/3}\log^{O(1)}\log n/\log^{1/3}n)$. Plugging this into $O(T+M+ \tau\cdot n\log^2 n)$, we obtain that the total runtime of our modified ASW algorithm is $O(n^{4/3}\log^{5/3}n\log^{O(1)}\log n)$. This is also the time complexity of the overall algorithm for the general halfplane coverage problem. We summarize the result below.

\begin{theorem}
Given in the plane a set of points and a set of halfplanes, one can compute a smallest subset of halfplanes whose union covers all points in $O(n^{4/3}\log^{5/3}n\log^{O(1)}\log n)$ time, where $n$ is the total number of all points and halfplanes.     
\end{theorem}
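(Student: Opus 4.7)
The plan is to branch on whether $\calU=\bigcap_{h\in H}\overline{h}$ is empty, a test that reduces to computing the common intersection of $n$ halfplanes in $O(n\log n)$ time. When $\calU\neq\emptyset$, I would follow Section~\ref{sec:nonempty}: pick any $o\in\calU$, project $P$ radially to a circle around $o$, reduce to circular-point coverage, prune the potentially $\Omega(n^2)$ candidate arcs down to the subset $\hatS$ of size at most $n$ via Lemma~\ref{lem:110} (computed by ray shooting in a weakly simple polygon), and finally reduce circular-point coverage to circle coverage using extended arcs. All of this runs in $O(n\log n)$ time.

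When $\calU=\emptyset$, Helly's theorem applied to $\overline{H}$ gives $\tau^*\leq 3$, so I would determine the answer by trying $\tau^*=1,2,3$ in order. The case $\tau^*=1$ is handled by $n$ halfplane emptiness queries against the convex hull of $P$ in $O(n\log n)$ total time; the case $\tau^*=3$ is settled by the linear-programming argument of Har-Peled and Lee~\cite{ref:Har-PeledWe12} in $O(n)$ time. The genuine bottleneck, and the main obstacle in the proof, is the subcase $\tau^*=2$.

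For $\tau^*=2$ my plan is as follows. If both target halfplanes have the same orientation, I reduce to two invocations of the algorithm of Section~\ref{sec:lower}, costing $O(n\log n)$. Otherwise I assume one lower $h_l\in H_l$ and one upper $h_u\in H_u$ and dualize, so that the task becomes deciding whether $\calK(H_l)\cap \calD(H_u)\neq\emptyset$, where $\calK(h_l)$ is the intersection of the upper halfplanes dual to the lines through the points of $P\cap \overline{h_l}$. I would run the ASW divide-and-conquer algorithm of~\cite{ref:AgarwalTh98}, but exploit Observation~\ref{obser:160} (each $\partial\calK(H_l')$ is $x$-monotone because it is a lower envelope of upper envelopes of lines, so it crosses every vertical sweep line at most once) to replace the balanced binary search trees that the original ASW sweep maintains by a single pointer to the unique crossing edge. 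This saves one logarithmic factor in both the merge step and the final sweep against $\calD(H_u)$, yielding a running time of $O(T+M+\tau\cdot n\log^2 n)$ once the key subproblem is solved with preprocessing $T,M$ and query cost $\tau$. Plugging in Wang's data structure~\cite{ref:WangUn23} with $r=n^{2/3}/(\log^{8/3}n\cdot (\log\log n)^{O(1)})$ balances $T,M=O(n^{4/3}\log^{5/3}n\log^{O(1)}\log n)$ against $\tau=O(n^{1/3}\log^{O(1)}\log n/\log^{1/3}n)$, and substitution gives the claimed $O(n^{4/3}\log^{5/3}n\log^{O(1)}\log n)$ bound, which dominates the running times of all other cases and hence of the overall algorithm.
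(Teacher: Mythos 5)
Your proposal matches the paper's proof essentially step for step: the same case split on $\calU=\emptyset$ versus $\calU\neq\emptyset$, the same reduction to circular-point coverage with the size-$\leq n$ subset $\hatS$ from Lemma~\ref{lem:110} computed by ray shooting, the same Helly argument giving $\tau^*\leq 3$, the same handling of $\tau^*\in\{1,3\}$, and for $\tau^*=2$ the same dualization to testing $\calK(H_l)\cap\calD(H_u)\neq\emptyset$, the same use of Observation~\ref{obser:160} to drop a logarithmic factor from the ASW sweep, and the same choice of $r$ in Wang's data structure~\cite{ref:WangUn23}. This is correct and is the paper's approach.
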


%%% >>> references
% \footnotesize
%\bibliographystyle{abbrv}
% \bibliographystyle{plain}
%\bibliographystyle{splncs}
\bibliography{reference}
%%% <<< end of reference

% \appendix
% \section*{APPENDIX}
% \label{sec:appendix}

%\section{Proof of Lemma~\ref{lem:10}}
%\label{app:proof-lemma-10}

%\begin{figure}[t]
%\begin{minipage}[t]{0.5\textwidth}
%\begin{center}
%\includegraphics[height=1.3in]{annuli.pdf}
%\caption{\footnotesize An annulus $D_p$ (the grey region).}
%\label{fig:annuli}
%\end{center}
%\end{minipage}
%\hspace{0.05in}
%\begin{minipage}[t]{0.48\textwidth}
%\begin{center}
%\includegraphics[height=1.2in]{pseudotrap.pdf}
%\caption{\footnotesize Illustrating a pseudo-trapezoid.}
%\label{fig:pseudotrap}
%\end{center}
%\end{minipage}
%\vspace{-0.1in}
%\end{figure}

\end{document}